\providecommand{\color}[2][rgb]{}
\newcommand{\Find}             {\proc{Find}\xspace}
\newcommand{\MakeSet}             {\proc{Make-Set}\xspace}
\newcommand{\creator}             {\texttt{creator}}
\newcommand{\getter}             {\texttt{getter}}
\newcommand{\iack} {\alpha}
\newcommand{\createF} {\texttt{create\_fut}\xspace}
\newcommand{\getF}    {\texttt{get\_fut}\xspace}
\newcommand{\spawn}{\texttt{spawn}\xspace }
\newcommand{\sync}{\texttt{sync}\xspace }
\newcommand{\rlist}{\texttt{reader-list}\xspace }
\newcommand{\lastwriter}{\texttt{last-writer}\xspace }
\newcommand{\spDag}{\text{SP-Dag}}
\newcommand{\full}{\id{full}}
\newcommand{\gfull}{G_{\id{full}}}
\newcommand{\attpred}[1]{#1.\id{attPred}}
\newcommand{\attsucc}[1]{#1.\id{attSucc}}
\newcommand{\find}[1]{\proc{Find}(#1)}
\newcommand{\dssp}{D_{SP}}
\newcommand{\dsnsp}{D_{NSP}}
\newcommand{\FutureRD}{FutureRD\xspace}        
\newcommand{\MultiBags}{MultiBags\xspace}        
\newcommand{\MultiBagsPlus}{MultiBags+\xspace}
\DeclareRobustCommand*\cal{\@fontswitch\relax\mathcal}
\newtheorem*{rep@theorem}{\rep@title}
\newcommand{\newreptheorem}[2]{%
\newenvironment{rep#1}[1]{%
 \def\rep@title{#2 \ref{##1}}%
 \begin{rep@theorem}}%
 {\end{rep@theorem}}}
\begin{document}

\title{Efficient Race Detection with Futures}

\author{Robert Utterback}
\orcid{0000-0002-5466-1520}
\affiliation{
\institution{Monmouth College}
}
\email{rutterback@monmouthcollege.edu}

\author{Kunal Agrawal}
\affiliation{
\institution{Washington University in St. Louis}
}
\email{kunal@wustl.edu}

\author{Jeremy Fineman}
\affiliation{
\institution{Georgetown University}
}
\email{jfineman@cs.georgetown.edu}

\author{I-Ting Angelina Lee}
\orcid{0000-0002-0687-5508}
\affiliation{
\institution{Washington University in St. Louis}
}
\email{angelee@wustl.edu}

\copyrightyear{2019} 
\acmYear{2019} 
\setcopyright{acmcopyright}
\acmConference[PPoPP '19]{24th ACM SIGPLAN Symposium on Principles and Practice of Parallel Programming}{February 16--20, 2019}{Washington, DC, USA}
\acmPrice{15.00}
\acmDOI{10.1145/3293883.3295732}
\acmISBN{978-1-4503-6225-2/19/02}

\begin{abstract}

This paper addresses the problem of provably efficient and practically
good on-the-fly determinacy race detection in task parallel programs
that use futures.  Prior works on determinacy race detection have
mostly focused on either task parallel programs that follow a
series-parallel dependence structure or ones with unrestricted use of
futures that generate arbitrary dependences.  In this work, we
consider a restricted use of futures and show that we can detect races
more efficiently than with general use of futures.

Specifically, we present two algorithms: \MultiBags and \MultiBagsPlus.
\MultiBags targets programs that use futures in a restricted fashion and runs
in time $O(T_1 \iack(m,n))$, where $T_1$ is the sequential running time of the
program, $\iack$ is the inverse Ackermann's function, $m$ is the total number
of memory accesses, $n$ is the dynamic count of places at which parallelism is
created.  Since $\iack$ is a very slowly growing function (upper bounded by
$4$ for all practical purposes), it can be treated as a close-to-constant
overhead.  \MultiBagsPlus is an extension of \MultiBags that target programs
with general use of futures.  It runs in time $O((T_1+k^2)\iack(m,n))$
where $T_1$, $\iack$, $m$ and $n$ are defined as before, and $k$ is the number
of future operations in the computation.  We implemented both algorithms and
empirically demonstrate their efficiency.
\end{abstract}

\begingroup
\maketitle
\endgroup

\renewcommand{\shortauthors}{Utterback et al.}
\vspace{-10pt}
\secput{intro}{Introduction}

Races constitute a major source of errors in parallel programs.  Since
they lead to nondeterministic program behaviors, they are extremely
challenging to detect and debug.  In this work, we focus on the problem of
race detection for task-parallel programs, where the programmer denotes the
\emph{logical} parallelism of the computation using high-level parallel
control constructs provided by the platform, and lets the underlying runtime
system perform the necessary scheduling and synchronization.  Examples of task
parallel platforms include OpenMP~\cite{OpenMP13}, Intel's
TBB~\cite{Reinders07, IntelTBBManual}, IBM's X10~\cite{CharlesGrSa+05},
various Cilk dialects~\cite{FrigoLeRa98, DanaherLeLe06, Leiserson10,
IntelCilkPlusLangSpec13}, and Habanero dialects~\cite{BarikBuCa09,
CaveZhSh11}.

In the context of task parallel programs, the focus is typically on detecting
\defn{determinacy races}~\cite{FengLe97} (also called \defn{general
races}~\cite{NetzerMi92}), which occur when two or more logically parallel
instructions access the same memory location and at least one access is a
write.  In the absence of a determinacy race, a task parallel program for a
given input behaves \emph{deterministically}.

Over the years, researchers have proposed several determinacy race 
algorithms~\cite{Mellor-Crummey91, FengLe97, FengLe99, RamanZhSa10,
RamanZhSa12, BenderFiGi04, Fineman05, UtterbackAgFi16, SurendranSa16,
XuLeAg18} for task parallel code.  These algorithms perform race detection
\defn{on the fly} as the program executes, and consist of two main components:
(1) an \defn{access history} that keeps track of previous readers and writers
for each memory location; and (2) a \defn{reachability data structure} for
maintaining and querying whether two instructions are logically in parallel.  
On each memory access, the detector checks whether the current access is
logically parallel with the previous accessors (stored in the access history)
to determine whether a race exists.

Most prior work focuses on a restricted set of computations, namely
computations that can be represented as \defn{series-parallel dags (SP
dags)}~\cite{Valdes78} with nice structural properties, such as ones
generated using fork-join parallelism (i.e.,\spawn/\sync or
\texttt{async}/\texttt{finish}).  Prior works show that one can race detect
computations that are SP dags efficiently by exploiting the nice
structural properties.  In particular, the reachability data structure
can be maintained and queried with \emph{no asymptotic overhead} for
both serial~\cite{BenderFiGi04, Fineman05} and parallel
executions~\cite{UtterbackAgFi16}.  Moreover, the access history needs
to store only a constant number of accessors per memory location to
correctly race detect for such computations~\cite{FengLe97, FengLe99,
Mellor-Crummey91}.

The use of \defn{futures} has become a popular way to extend fork-join
parallelism.   Since their proposal~\cite{FriedmanWi78, BakerHe77} in the late
70s, futures have has been incorporated into various parallel
platforms~\cite{LuJiSc14, CaveZhSh11, FluetRaRe10, CharlesGrSa+05,
ChandraGuHe94, KranzHaMo89, ArvindNiPi86, Halstead85}.  Researchers have studied
scheduling bounds~\cite{BlellochGiMa97, AroraBlPl98} and cache
efficiency~\cite{SpoonhowerBlGi09,HerlihyLi14} for using futures with
fork-join computations.  \citet{KoganHe14} study linearizability of concurrent
data structures accessed using futures.  \citet{SurendranSa16b} proposed using
futures to automatically parallelize programs.

The use of futures can form arbitrary dependencies, and thus computations
generated by a parallel program that uses futures are no longer
series-parallel.  However, not much work has been done on race detecting
programs with more general dependence structures. 

Two prior works exist on race detection for programs that use futures and
both are sequential (no known parallel algorithms exist).  An algorithm
proposed by~\citet{SurendranSa16a} has high overheads --- the running time is
$O(T_1(f+1)(k+1))$ where $T_1$ is the \defn{work}, or sequential running time
of the program without race detection, $f$ is number of future objects and $k$
is the number of future operations.  That is, the running time of the race
detection algorithm increases quadratically with the total number of futures
used in the program.
More recently, \citet{AgrawalDeFi18} present a sequential algorithm to perform race
detection on SP dags with $k$ added non-series-parallel edges in $O(T_1+k^2)$
time, which is the best known running time.  The algorithm is difficult to
implement however, since it requires storing all the nodes in the computation
graph and traversing the graph during execution to update labels.  Thus, no
actual implementation of the algorithm exists to date.

\subsection*{Contributions}

While prior work on race detection has focused on either structured SP dags or
unrestricted use of futures that generates arbitrary dependences, we consider
a restricted use of futures.  Researchers have observed in other
contexts~\cite{HerlihyLi14} that using futures in a restricted manner can
reduce scheduling and cache overheads.  We define a specific \defn{structured}
use-case of futures that allows us to perform race detection much more
efficiently than general use of futures.  This class of futures is quite
natural and can be checked with program analysis.  We provide the precise
definition in \secref{prelim}; informally, it requires that the instruction
that creates the future is sequentially before the instruction that uses the
handle.

We present two practical algorithms for race detecting programs with futures:
\MultiBags and \MultiBagsPlus.  The main contribution for both algorithms is a
novel reachability data structure.  Both algorithms run the program
sequentially for a given input and report a race if and only if one exists,
following the same correctness criteria as prior work.  \MultiBags focuses on
structured use of futures and incurs very little overhead --- a multiplicative
overhead in the inverse Ackermann's function, which is upper bounded by $4$
for all practical purposes~\cite{CormenLeRi09}.  \MultiBagsPlus is an
extension of \MultiBags, which handles general use of futures and has overhead
comparable to the state-of-the-art theoretical algorithm~\cite{AgrawalDeFi18}
(i.e., multiplicative overhead of the inverse Ackermann's function) and can be
implemented efficiently.  We have implemented both algorithms and empirically
evaluated them.  The empirical results show that both algorithms can maintain
reachability efficiently for their designated use cases.

Specifically, we make the following contributions:
\begin{closeitemize}

\item \textbf{\MultiBags:} We propose \MultiBags, an algorithm to race detect
programs that use structured futures (\secref{structured}).  We prove its
correctness and show that it race detects in $O(T_1\iack(m,n))$ time where
$T_1$ is the work, $\iack$ is the inverse Ackermann's function (upper bounded
by $4$), $m$ is the number of memory accesses, and $n$ is the dynamic count of
places at which parallelism is created.  Since $\iack$ is a very slow-growing
function, this bound is essentially $O(T_1)$ for all intents and purposes.

\item \textbf{\MultiBagsPlus:} We propose \MultiBagsPlus, an algorithm to race
detect programs that use general futures (\secref{general}).  We prove its
correctness and show that it race detects in $O(T_1 + k^2)\iack(m,n)$ time
where $T_1$, $\iack$, $m$, and $n$ are defined as above, and $k$ is the number
of future operations in the computation (\secref{general}).  Again, since the
inverse Ackermann's function is slow growing function, the running time is
$O(T_1 + k^2)$ for all intents and purposes.  Compared to the state-of-the-art
proposed by~\citet{AgrawalDeFi18}, \MultiBagsPlus's running time has a
multiplicative overhead of the inverse Ackermann's function.  Unlike the
state-of-the-art, however, \MultiBagsPlus's relative simplicity allows it to
be implemented efficiently in practice.  We provide a more detailed comparison
between our \MultiBagsPlus algorithm and the state-of-the-art~\cite{AgrawalDeFi18}
in \secref{general}.

\item \textbf{\FutureRD:} We have built a prototype race detector called
\FutureRD based on \MultiBags and \MultiBagsPlus.  Empirical evaluation with
\FutureRD shows that our algorithms allow reachability to be maintained
efficiently, incurring almost no overhead (geometric means of
$1.06\times$ and $1.40\times$ overhead for \MultiBags
and \MultiBagsPlus, respectively).  The overall race detection incurs
geometric means of $20.48\times$ and $25.98\times$ overhead,
respectively.

\end{closeitemize}

\secput{prelim}{Preliminaries and Definitions}


\paragraph{Parallel control constructs:} 
Our algorithms are described assuming parallelism in programs is
generated using four primitives: \spawn, \sync, \createF and \getF.
The algorithms themselves are general and can be applied to
platforms that use other constructs that generate similar types of
dags.  We assume that \spawn and \sync are used to generate fork/join
or series/parallel structures.  In particular, for the purposes of
this paper, function $F$ can \defn{spawn} off a child function $G$,
invoking $G$ without suspending the continuation of $F$, thereby
creating parallelism; similarly, $F$ can invoke \defn{sync}, joining
together all previously spawned children within the functional
scope.\footnote{Some constructs, such as async/finish primitives have
  slightly different restrictions, they still generate SP dags and our
  algorithms can be modified to apply to these programs.}  We assume
\defn{create\_fut} and \defn{get\_fut} primitives are used to
create and join futures, respectively.  Like \spawn, one can precede a
function call to $G$ in $F$ with \createF, which allows $G$ to execute
without suspending $F$.  Unlike \spawn, however, parallel function
calls created with \createF\ can escape the scope of a \sync\ --- a
subsequent \sync joins together previously spawned functions but does
not wait for function calls preceded by \createF to return.  Instead,
\createF returns a \defn{future handle} $h$, which the program must
explicitly invoke \getF on to join with the corresponding computation
(i.e., $G$). If $G$ has not completed when \getF\ is called, then
\getF\ \defn{blocks} until $G$ finishes and a result is obtained.

\paragraph{Modeling parallel computations:}
One can model the execution of a parallel program for a given input 
as a \defn{dag (directed acyclic graph)} $\gfull$, whose nodes are
\defn{strands} --- sequence of instructions containing no parallel
control -- and edges are control dependencies among
strands.  The dag unfolds dynamically as the program executes.  
A strand $u$ is sequentially before another strand $v$ (denoted by
$u \prec v$) if there is a path from $u$ to $v$ in the dag; two nodes
$u$ and $v$ are \defn{logically parallel} if there is no path from one
to the other.  
The performance of a computation can be measured in two terms: the \defn{work}
$T_1$ of the computation is the execution time of the computation on a single
processor; the \defn{span} (also called \defn{depth} or 
\defn{critical-path length}) $T_\infty$ of the computation is its execution 
time on an infinite number of processors (or, longest sequential path through
the dag).  

\paragraph{Series-parallel dags:} Computations which use only \spawn
and \sync can be modeled as \defn{series-parallel dags
  (SP-dag)}~\cite{Valdes78} that have a single \defn{source} node with
  no incoming edges and a single \defn{sink} node with no out-going
  edges.  Upon the execution of a \spawn, a \defn{fork node} is
  created with two outgoing edges: one leads to the first strand in
  the spawned child function and one leads to the continuation of the
  parent.  Upon the execution of a \sync, a \defn{join node} is
  created, that has two or more incoming edges, joining the previously
  spawned subcomputations.\footnote{Technically, a \sync can join
  multiple children; therefore, a join node can have more than two
  parents.  For simplicity, in this paper, we assume each join has
  exactly two incoming edges. One can modify all our algorithms to the
  more general case easily.}

SP dags can be constructed recursively as follows.
\begin{closeitemize} 
\item \defn{Base Case}: the dag consists of a single node that is
    both the source and the sink.

  \item \defn{Series Composition}: let $G_1 = (V_1,E_1)$ and
    $G_2 = (V_2,E_2)$ be SP-dags on distinct nodes.  Then a series
    composition $G$ is formed by adding an edge from $\id{sink}(G_1)$
    to $\id{source}(G_2)$ with $\id{source}(G) = \id{source}(G_1)$ and
    $\id{sink}(G) = \id{sink}(G_2)$.

  \item \defn{Parallel Composition}: let $G_L = (V_L,E_L)$ and
    $G_R = (V_R,E_R)$ be SP-dags on distinct nodes.  Then the parallel
    composition $G$ is formed as follows: add a fork node $f$ with
    edges from $f$ to both sources, and a join node $j$ with edges
    from both sinks to $j$.  $\id{source}(G) = f$ and
    $\id{sink}(G) = j$.  We refer to $G_L$ and $G_R$ as the \defn{left
      subdag} and \defn{right subdag}, respectively, of both the fork
    $f$ and join $j$.
\end{closeitemize}

\paragraph{Adding futures:} We model computations that employ
futures in addition to \spawn and \sync as a set of independent SP dags 
connected to each other via \defn{non-SP edges} due to \createF and
\getF calls.  If a function $F$ spawns a function $G$, then the strands
of $F$ and $G$ are part of the same SP dag.  However, if function $G$
calls $H$ using a $\createF$ call, then the first strand, say $v$, of
$H$ is the source of a different SP dag.  The last strand of $H$ will
be the sink node of this SP dag.  Therefore, if a program calls
\createF $f$ times --- that is, it creates $f$ total futures in addition 
to the main program --- then it has $f+1$ SP dags which are connected
to each other via non-SP edges.

These non-SP edges are incident on strands that end with \createF and
ones that immediately follow strands that ended with \getF.  A strand
$u$ in function $G$ that ends with $h=\createF(F)$ has two outgoing
edges --- one non-SP edge to the first strand in $F$, and one SP edge
to the continuation in $G$.  We say that $u$ is the \defn{creator} of
$F$ denoted by $\creator(F_j)$.  The first strand of $F$ is the source
of a new SP-dag which contains all strands of $F$ and the functions it
calls (recursively) using \spawn and the last strand of $F$ is the
sink of this SP-dag.  Similarly, strand $u$ in $H$ immediately follows
a $\getF(h)$ call where $h$ is the future handle for future $F$ has
two incoming edges --- one SP edge from the strand that ended with
the \getF call in the current function $H$ and one non-SP edge from
the last strand of the $F$.  We say $u$ is the \defn{getter} of $F_j$
denoted as $\getter(F)$.  We say that $u \prec_{SP} v$ if there is a
path from $u$ to $v$ using only SP edges.

This model is quite general and subsumes computations that can arise from
futures~\cite{LuJiSc14, CaveZhSh11, FluetRaRe10, CharlesGrSa+05,
  ChandraGuHe94, KranzHaMo89, ArvindNiPi86, Halstead85} or other
future-like (such as ``put'' and ``get''~\cite{BudimlicBuCa10,
  TasirlarSa11}) parallel constructs proposed in the literature.  
Therefore, our algorithm would work on all of these primitives.


\paragraph{Structured futures:} We place the following restrictions on 
structured futures: (1) \defn{Single-touch:} Every future handle
is called with \getF at most once. (2)
\defn{No race on future handles:} There is a sequential dependence
in the program from the point where a future is created (via \createF which
initializes a future handle) to the point where it is read (via \getF).
More precisely, if strand $u$ terminates with a $f \gets
\createF(F)$ call and strand $v$ terminates with a $\getF(f)$ call, then
$u\prec v$ in the computation.  


\paragraph{Eager execution:} Both our algorithms execute the
computation sequentially and execute the program in \defn{depth-first
  eager} order.  When the execution reaches $\createF(F)$
(after executing $\creator(F)$) call or a $\spawn(F)$ call, it always
executes the function $F$.  When $F$ returns, then the next node of
the parent function (the one after the continuation edge) is executed.
This execution order automatically has the property that all functions
that must join at a \sync point have already returned when the
execution reaches the \sync; therefore, the execution never blocks at
a \sync.  Similarly, for structured futures, this execution has the
property that the execution will never block at a \getF.  For general
futures, we restrict our attention to computations where the use of
futures is \defn{forward-pointing}: for every future $F$,
$\creator(F)$ executes before $\getter(F)$ in the depth-first eager
execution.  Without this restriction, sequential execution of the
original program could deadlock, in which case our algorithm race
detects up to the point where it deadlocks.



\secput{race}{Managing Access History}

As mentioned in \secref{intro}, there are two important components in
a race detector: access history and reachability data structure.
\MultiBags and \MultiBagsPlus differ in how they maintain
reachability but manage access history similarly.  This section
discusses how they manage access history --- for each memory location
$\ell$, the access history maintains enough information about the
previous accesses to $\ell$ so that future accesses to $\ell$ can
detect races.

When race detecting a series parallel program, it is sufficient to store a
constant number of previous reader strands and a single previous writer strand
in the access history\cite{FengLe97, Mellor-Crummey91}.  When a strand $s$
accesses a memory location $\ell$, it checks if some subset (based on whether
$s$ is reading or writing) of $\ell$'s previous accessors are in parallel with
$s$.  Therefore, each memory access leads to at most a constant number of
queries into the reachability data structure.
 

This property no longer holds for programs with futures, however.  In
particular, the access history for a memory location $\ell$ still holds
only one writer strand, namely the most recent writer strand, 
$\lastwriter(\ell)$.  However, it must now store an arbitrarily large
$\rlist$.  Race detection proceeds as follows.  Whenever a strand $s$
reads from a memory location $\ell$, the detector checks the
reachability data structure to determine whether $s$ is logically
parallel with $\lastwriter(\ell)$; if so, a race is reported.
Otherwise, $s$ is added to $\rlist(\ell)$.  When a strand $s$ writes
to a memory location $\ell$, the race detector must check $s$ against
all readers in $\rlist(\ell)$ and with $\lastwriter(\ell)$.  If $s$ is
in parallel with any of them, then it declares a race.  Otherwise, the
$\rlist(\ell)$ is set to $\emptyset$ and $s$ is stored as
$\lastwriter(\ell)$.  We can empty the reader list without missing any
races because anything that executes later that would be in parallel
with these readers must also be in parallel with $s$ (the new
$\lastwriter(\ell)$), and a race will be reported with $s$.

A key thing to notice here is the following: the total number of
queries into the reachability data structure (i.e., checking one
access against another for race) is bounded by the total number of
memory accesses in the computation.  Since we can
empty all the readers $\rlist(\ell)$ whenever we encounter a
$\lastwriter(\ell)$, a reader is checked against some writer at most
twice: when it is inserted into the reader list, and right before it 
is removed.  Thus, the total number of queries to the reachability 
data structure is bounded by $O(T_1)$, where $T_1$ is the work of 
the computation.  We shall relate
this observation formally to the performance bound of \MultiBags and
\MultiBagsPlus later in \secreftwo{structuredOmitted}{generalOmitted}.


\secput{structured}{\MultiBags for Structured Futures}

We now describe \MultiBags, which can race-detect programs with
structured futures in time $O(T_1 \iack(m, n))$ where $T_1$ is the
work of the program, $\iack$ is the inverse Ackermann's function, $m$
is the number of memory accesses in the program and $n$ is the number
of \spawn and \createF calls.  Since the inverse Ackermann's function is 
a very slowly growing function, the bound is close to optimal.

\paragraph{Notation:}
Note that programmatically, \spawn and \sync are subsumed by \createF
and \getF since we can convert a \spawn to \createF and \sync to a
series of \getF calls, one on each function spawned in the current
function scope.  In the case of general use of futures discussed in
\secref{general}, we distinguish between SP edges (generated
by \spawn and \sync) and non-SP edges generated by \createF and \getF
since the bound depends on $k$, the number of \getF calls, and
converting all \sync calls to \getF calls will increase this number.
For structured futures, however, the bound does not depend on $k$;
therefore, for simplicity in this section, we assume that we only
have \createF and \getF constructs to create parallelism.

The computation dag consists of three kinds of nodes --- regular
strands with one incoming and one outgoing edge, \defn{creator}
strands which end with a \createF call with two outgoing edges, and
\defn{getter} strands (that come immediately after a \getF call) with
two incoming edges.  It also consists of three kinds of edges:
\defn{spawn} edges are edges from creator nodes to the first strand of
the future; \defn{join} edges are edges from last strand of a future
to getter nodes; all other edges (that go between strands of the same
function instance) are \defn{continue}
edges. 

\subsection{Algorithm}

This algorithm is similar to the SP-Bags algorithm for detecting races
for series-parallel programs~\cite{FengLe97}.  As with that algorithm,
we will use a the fast disjoint-set data
structure~\cite{Tarjan75}. The data structure maintains a dynamic
collection $D$ of disjoint sets and provides three operations:
\begin{closeitemize}
\item $A = \proc{Make-Set}(D,x):$ Creates a new set $A = \set{x}$ and
  adds it to the disjoint sets data structure $S$.
\item $A = \proc{Union}(D, A, B):$ Unions the set $B$ into $A$ and
  destroys $B$. We will sometimes overload notation and say
  $\proc{Union}(D, x,y)$ where $x$ and $y$ are elements in the set
  instead of sets.  This means that we union the sets containing $x$
  and $y$ into the set containing $x$. 
\item $\proc{Find}(D, x)$ returns the set that contains the element $x$.
\end{closeitemize}
In this section, we only have one disjoint set data structure;
therefore, $D$ is implicit.

As mentioned in \secref{prelim}, and like SP-Bags, \MultiBags depends
on the depth-first eager execution of the computation.  
\MultiBags maintains a bag (a set in the union-find data structure)
for each function instance $F$ which has been created and for which
$\getF$ has not yet been called (these bags can be stored with the
future handle).  This bag is labeled either an $S$-bag, represented by
$S_F$ or a $P$-bag, represented by $P_F$.  The algorithm maintains
these bags as shown in Figure~\ref{code:struct-alg}.  The strands of a
particular function $F$ are always added to $S_F$ before they execute.

\begin{figure}
  \footnotesize
  \begin{center}
  \begin{tabular}{|l|}
    \hline
    \begin{minipage}[t]{\columnwidth}
      $F$ calls $f=\createF(G)$ where $u$ is the first strand of $G$: 
            \vspace{-1ex}
      \begin{codebox}
        \li $S_G \gets \proc{Make-Set}(u)$ 
      \end{codebox}
      \vspace{-1ex}
      $G$ returns to $F$:\vspace{-1ex}
      \begin{codebox*}
        \li $P_G \gets S_G$.
      \end{codebox*}
      \vspace{-1ex}
      $F$ calls $y \gets \getF(f)$ where $f$ is $G$'s handle:\vspace{-1ex}
      \begin{codebox*}
        \li $S_{F} \gets \proc{Union}(S_{F}, P_G)$ \lilabel{union}
      \end{codebox*}
    \end{minipage}%
    \\ \hline 
    \begin{minipage}[t]{\columnwidth}
      \Comment{Called when strand $v$ accesses memory location $\ell$}\\
      \Comment{previously accessed by $u$ in a conflicting way}:\vspace{-1ex}
      \begin{codebox*}
        \Procname{$\proc{Query}(u,v)$ 
            \Comment{return \const{true} iff $u\prec v$}}
          \li \If $\proc{Find}(u)$ is an $S$ bag, \Return \const{true}
          \li \Else \Return \const{false}
            \vspace{-1ex}
      \end{codebox*}
    \end{minipage}\\
    \hline
  \end{tabular}
  \end{center}
  \vspace{-2mm}
  \caption{Pseudocode for \MultiBags.  The top part shows how
    \MultiBags maintains the $S$ and $P$ bags when it encounters
    future constructs.  The bottom part shows the operation of 
    checking for races upon a memory accessed.}
\vspace{-2mm}
\label{code:struct-alg}
\end{figure}

\punt{
\begin{figure}
  \small
  \begin{minipage}[t]{0.5\linewidth}
    \begin{codebox}
      When $F$ calls $f=\createF(G)$:\vspace{-.5ex}
      \li $S_G \gets \proc{Make-Set}(u)$ where $u$ is the first strand of $G$
      When $G$ returns to $F$:\vspace{-.5ex}
      \li $P_G \gets S_G$.
      When $F$ calls $y \gets \getF(f)$ where $f$ is $G$'s handle.
      \li $S_{F} \gets \proc{Union}(S_{F}, P_G)$ \lilabel{union}
    \end{codebox}
  \end{minipage}%
  \begin{minipage}[t]{0.47\linewidth}
    \begin{codebox}
       \Procname{$\proc{Query}(u,v)$}
         \zi \Comment Called when stand $v$ accesses memory location $\ell$ 
         \zi \Comment return \const{true} iff $u\prec v$
         \li \If $\proc{Find}(u)$ is an $S$ bag, \Return \const{true}
         \li \Else \Return \const{false}
    \end{codebox}
  \end{minipage}
\label{code:struct-alg}
\end{figure}

\begin{figure}
  \small
  \begin{subfigs}{ll} 
    \toprule
    \begin{minipage}[t]{0.5\linewidth}
    When $F$ calls $f=\createF(G)$:\vspace{-.5ex}
    \begin{codebox}
      \li $S_G \gets \proc{Make-Set}(u)$ where $u$ is the first strand of $G$
    \end{codebox}
    \end{minipage}
    &
    \begin{minipage}[t]{0.47\linewidth}
    When $G$ returns to $F$:\vspace{-.5ex}
    \begin{codebox}
      \li $P_G \gets S_G$.
    \end{codebox}
    \end{minipage}
    \\
    \midrule
    \begin{minipage}[t]{0.5\linewidth}
      When $F$ calls $y \gets \getF(f)$ where $f$ is $G$'s handle.
      \begin{codebox}
          \li $S_{F} \gets \proc{Union}(S_{F}, P_G)$ \lilabel{union}
      \end{codebox}
    \end{minipage}%
    &
    \begin{minipage}[t]{0.47\linewidth}
      \begin{codebox}
        \Procname{$\proc{Query}(u,v)$}
          \zi \Comment Called when stand $v$ accesses memory location $\ell$ 
          \zi \Comment return \const{true} iff $u\prec v$
          \li \If $\proc{Find}(u)$ is an $S$ bag, \Return \const{true}
          \li \Else \Return \const{false}
      \end{codebox}
    \end{minipage}
    \\
    \bottomrule
  \end{subfigs}
\label{code:struct-alg}
\end{figure}
}

\punt {
\begin{figure}
\begin{codebox}
\li Function $F$ calls $f=\createF(G)$
\li $u$ is the first strand of $G$
\li $S_G \gets \proc{Make-Set}(u)$
\zi
\li Function $G$ returns
\li $P_G \gets S_G$.
\zi
\li Function $F$ calls $y \gets \getF(f)$ where $f$ is $G$'s handle.
\li $S_{F} \gets \proc{Union}(S_{F}, P_G)$ \lilabel{union}
\end{codebox}
\caption{Pseudocode for maintaining bags for structured futures.}
\label{code:struct-alg}
\end{figure}

\begin{figure}
\begin{codebox}
\Procname{$\proc{Query}(u,v)$}
\zi \Comment Called when stand $v$ accesses memory location $\ell$ 
\zi \Comment return \const{true} iff $u\prec v$
\li \If $\proc{Find}(u)$ is an $S$ bag, \Return \const{true}
\li \Else \Return \const{false}
\end{codebox}
\caption{Query in structured futures}
\label{code:struct-query}
\end{figure}
}

\begin{figure}
  \begin{tabular}{c}
    \begin{minipage}{1.75in}
      \includegraphics[width=1.75in]{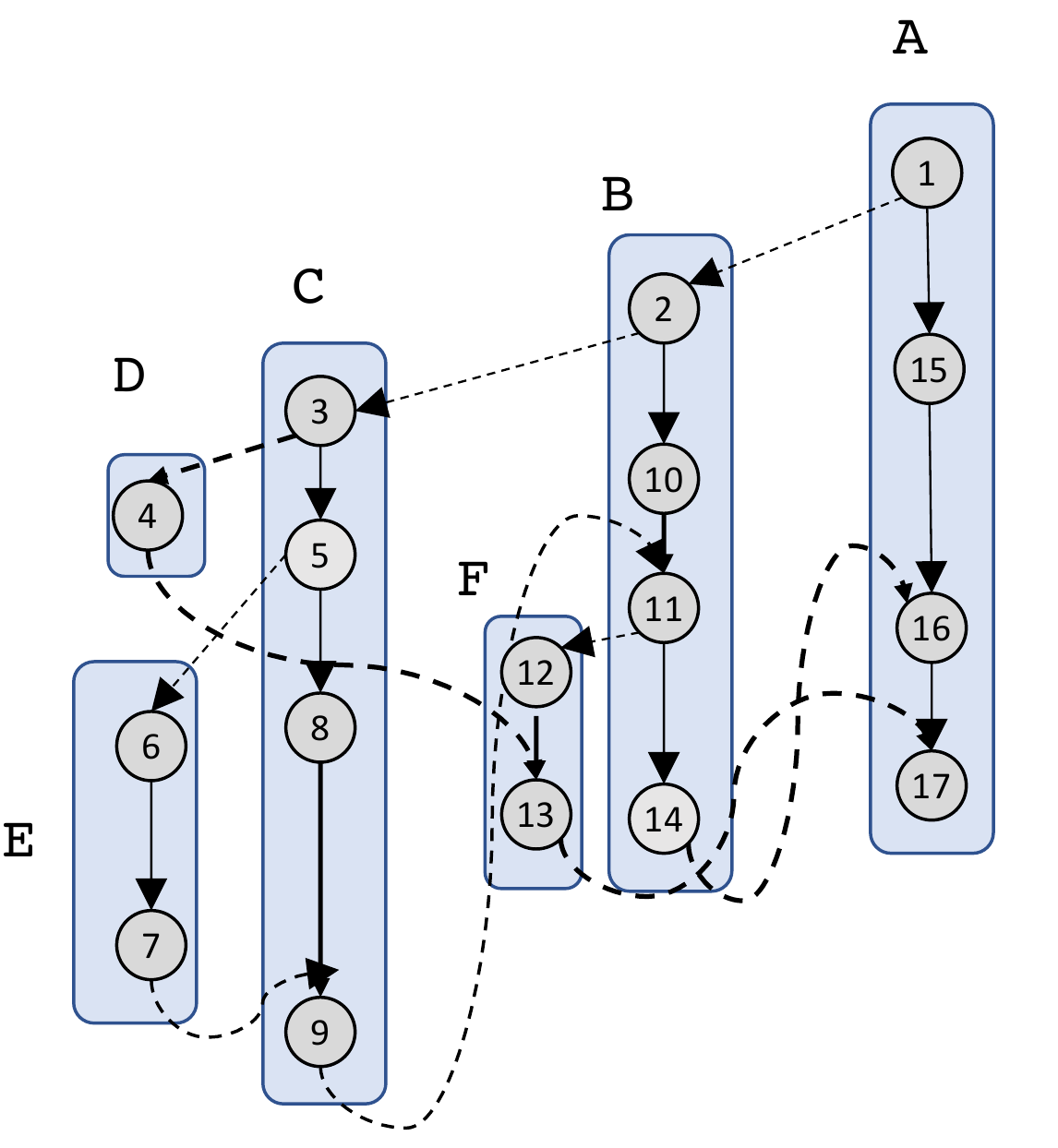}
    \end{minipage}
    \\ 
    \begin{minipage}{3in}
    {\scriptsize     \begin{tabular}{|c|c|}\hline
      node &  $D$ \\ \hline\hline
      1  & $S_A = \set{1}$ \\
      2  & $S_A = \set{1}$, $S_B=\set{2}$ \\
      3  &$S_A = \set{1}$, $S_B=\set{2}$, $S_C = \set{3}$ \\
      4  & $S_A = \set{1}$, $S_B=\set{2}$, $S_C = \set{3}$,$S_D = \set{4}$ \\
      5  & $S_A = \set{1}$, $S_B=\set{2}$, $S_C = \set{3,5}$,$P_D = \set{4}$ \\
      6  & $S_A = \set{1}$, $S_B=\set{2}$, $S_C = \set{3,5}$,$P_D = \set{4}$, $S_E=\set{6}$ \\
      7  & $S_A = \set{1}$, $S_B=\set{2}$, $S_C = \set{3,5}$,$P_D = \set{4}$, $S_E=\set{6,7}$ \\
      8  & $S_A = \set{1}$, $S_B=\set{2}$, $S_C = \set{3,5,8}$,$P_D = \set{4}$, $P_E=\set{6,7}$\\
      9  & $S_A = \set{1}$, $S_B=\set{2}$, $S_C = \set{3,5,6,7, 8, 9}$,$P_D = \set{4}$ \\
      10 & $S_A = \set{1}$, $S_B=\set{2,10}$, $P_C = \set{3,5,6,7, 8, 9}$,$P_D = \set{4}$ \\
      11 & $S_A = \set{1}$, $S_B=\set{2,3,5,6,7,8, 9, 10,11}$,$P_D = \set{4}$ \\
      12  & $S_A = \set{1}$, $S_B=\set{2,3,5,6,7, 8,9,10,11}$,$P_D = \set{4}$, $S_F=\set{12}$ \\
      13 & $S_A = \set{1}$, $S_B=\set{2,3,5,6,7, 8,9,10,11}$, $S_F=\set{4, 12,13}$ \\
      14 & $S_A = \set{1}$, $S_B=\set{2,3,5,6,7, 8,9,10,11, 14}$, $P_F=\set{4, 12,13}$ \\
      15 &  $S_A = \set{1, 15}$, $P_B=\set{2,3,5,6,7, 8,9,10,11, 14}$, $P_F=\set{4, 12,13}$ \\
      16 & $S_A = \set{1,2,3,5,6,7, 8,9,10,11, 14,15,16}$, $P_F=\set{4, 12,13}$ \\
      17 & $S_A = \set{1,2,3,4, 5,6,7, 8,9,10,11, 12,13,14,15, 16,17}$ \\
\hline
    \end{tabular}
}
  \end{minipage}
  \end{tabular}
  \vspace{-2mm}
  \caption{An example execution of \MultiBags on a program with
    structured use of futures.  In this program, there is always a
    sequential dependence between each future's creator and its
    corresponding getter's immediate predecessor in the same function 
    (e.g., $B$'s creator and getter are 1 and 16 respectively, and
    the immediate predecessor of the getter in the same function is 15). 
    This dag is not a series-parallel dag, as the spawning and joining
    of function instances are not well nested.  The table shows the 
    state of the disjoint-set data structure for maintaining 
    reachability immediately before the execution of each strand in 
    the order of the execution.}\figlabel{struct-ex}
\end{figure}

The algorithm looks similar to SP-Bags~\cite{FengLe97}. The main
difference is that when the function $G$ returns, its $S$-bag $S_G$ is
renamed as $P_G$ bag; in SP-bags, $S_G$ would be unioned with $P_F$,
the parent function of $G$. However, this small difference is crucial
for handling programs with structured futures rather than
series-parallel programs.  In addition, the proof for correctness of
this algorithm for structured futures is significantly different than
the proof of correctness of SP-bags for series-parallel programs.

Figure~\ref{fig:struct-ex} shows the operation of this algorithm on an
example program which uses structured futures.  Each rectangle is a
function instance and nodes are strands.  The straight dashed lines going
towards the left represent \createF edges while the curved dashed
lines represent \getF edges.  Consider step 12 when the first node of
function $F$ is executing.  All nodes except node 4 are sequentially
before this strand and are correspondingly in some $S$-bag.  Node 4 is
in parallel with this strand and is in a $P$-bag.

\subsection{Proof of Performance and Correctness}

First, we state the performance bound --- the proof is essentially
identical to the proof of performance of SP-Bags, since the algorithm
is quite similar.

\subheading{Proof of Performance of MultiBags}

\begin{theorem}
\thmlabel{performancestruct}
The running time of \MultiBags when detecting races for a program with
work $T_1$ is $T_1\iack(m,n)$ where $m$ is the number of memory accesses
and $n$ is the number of \createF calls.
\end{theorem}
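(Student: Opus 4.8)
The plan is to reduce the claim to a standard accounting of disjoint-set operations and then invoke Tarjan's amortized analysis, exactly as in the SP-Bags performance proof. Running the instrumented computation splits into two costs: the base execution, which is $\Theta(T_1)$ by definition of the work, and the work done inside the disjoint-set structure $D$. Every hook in Figure~\ref{code:struct-alg} (at a \createF, at a function return, at a \getF, and at each memory access) performs only $O(1)$ bookkeeping together with a constant number of \MakeSet, \Union, and \Find calls, so it suffices to bound the total number of these calls and the size of the element universe, then multiply by the per-operation amortized cost.

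First I would tally the operations by kind. \MakeSet is executed exactly once per \createF (building $S_G$ for the new function $G$), plus once for the main function, so there are $n+1 = O(n)$ of them; this is also the total number of elements ever placed in $D$. \Union occurs exactly once per \getF, and the single-touch restriction on structured futures guarantees each future handle is gotten at most once, so there are at most $n$ \Union calls. The relabeling $P_G \gets S_G$ at a return merely flips a set's $S$/$P$ tag and is $O(1)$, requiring no disjoint-set operation. Finally, each $\proc{Query}$ issues a single \Find, and the access-history argument of \secref{race}---each reader in a \rlist is checked against a writer at most twice---bounds the total number of queries, and hence \Find calls, by $O(m)$.

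With these counts in hand, the assembly is routine. The total number of disjoint-set operations is $O(m+n)$ over a universe of $O(n)$ elements, so union by rank with path compression~\cite{Tarjan75, CormenLeRi09} executes them in $O\bigl((m+n)\,\iack(m+n,n)\bigr)$ time. Because $\iack$ is non-increasing in its first argument, $\iack(m+n,n)=O(\iack(m,n))$, and because every \createF call and every memory access is itself a program instruction, $m+n=O(T_1)$. Hence the disjoint-set work is $O(T_1\,\iack(m,n))$, which subsumes the $\Theta(T_1)$ base-execution cost, yielding the stated $O(T_1\,\iack(m,n))$ bound.

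The step I expect to need the most care is the claim that $D$ contains only $O(n)$ elements rather than one per strand, since this is what pins the second argument of $\iack$ to $n$. It relies on the design choice that only a function's first strand is given a \MakeSet while every other strand of that function is resolved by a \Find on that single representative. To justify it I would argue that a function's bag migrates through $D$ as an indivisible unit---it is only ever renamed from $S$ to $P$ or unioned into another set, and is never split---so that a \Find on the representative always returns the function's current bag and therefore the correct $S$/$P$ verdict for any of its strands. The operation counts and the appeal to Tarjan's bound then follow the SP-Bags analysis verbatim.
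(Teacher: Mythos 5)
Your proposal is correct and follows essentially the same route as the paper's proof: invoke Tarjan's amortized bound for the fast disjoint-set structure, observe that the element universe and \Union/\MakeSet counts are $O(n)$ while the total number of queries (hence \Find calls) is bounded by $O(m)$ via the reader-list-clearing argument of \secref{race}, and conclude $O(T_1\,\iack(m,n))$. Your treatment is somewhat more detailed than the paper's (the explicit monotonicity of $\iack$ in its first argument and the accounting of which strands receive \MakeSet), but these are refinements of, not departures from, the same argument.
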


\begin{proof}
  The fast disjoint-sets data structure provides the bound of
  amortized time $O(\iack(m,n))$ per operation where $m$ is the number
  of operations and $n$ is the number of sets.  For our program, $m$
  is at most the number of memory accesses and $n$ is the number of
  \createF calls.  Note here, again, that unlike series parallel
  computations, each write may generate multiple queries; however, for
  the reason as explained in \secref{race}, the total number of
  queries is bounded by the two times the total number of memory
  accesses since each writer removes the entire reader-list.
  Therefore, the total running time is $O(T_1 \iack(m, n))$.
\end{proof}

\subheading{Intuition for the proof}

In order to argue that \MultiBags is correct, we must prove the
following theorem.

\begin{theorem}
\thmlabel{correctness}
If the currently executing strand is $v$, then a previously executed
strand $u$ is currently in an $S$ bag iff $u\prec v$.
\end{theorem}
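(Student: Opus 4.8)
The plan is to prove the statement as an invariant maintained throughout the depth-first eager execution, by induction on the sequence of executed strands. Concretely, I would show: immediately before any strand $v$ begins executing, every previously executed strand $u$ lies in an $S$-bag exactly when $u \prec v$, and (the useful contrapositive, since an already-executed strand cannot lie strictly after $v$) lies in a $P$-bag exactly when $u$ is logically parallel with $v$. The base case is the source strand, for which there are no previously executed strands and the claim holds vacuously.

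For the inductive step, I would fix the current strand $v$, assume the invariant, and do a case analysis on the transition to the next executed strand $v'$ dictated by the eager order, checking that the bag operations performed at that step exactly track the change of the ``$\prec v'$'' set relative to the ``$\prec v$'' set. Continue edge ($v \to v'$ within the same function $F$): nothing flips, since $v \prec v'$ gives $\{u : u \prec v'\} = \{u : u \prec v\} \cup \{v\}$, and $v$ already sits in $S_F$. Entering a future ($v$ ends with $\createF(G)$, $v'$ is $G$'s first strand): the only edge into $v'$ is the spawn edge from $v$, so the predecessors of $v'$ are exactly those of $v$ together with $v$ itself, all still in $S$-bags, and the fresh $S_G = \{v'\}$ is consistent. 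Return from $G$ ($v$ is $G$'s last strand, $v'$ the parent's continuation after the create): here $S_G$ is renamed to $P_G$, and I must argue this is exactly the set of strands that were $\prec v$ but are parallel to $v'$; since the only edges leaving $G$ are join edges to not-yet-reached getters, while $v'$ is reached from the creator by a continue edge, no strand of $G$ reaches $v'$, so all of $S_G$ correctly flips to parallel, whereas $S_F$ (predecessors of the creator, hence of $v'$) is untouched. Getter transition ($v$ ends with $\getF(f)$ for $G$, $v'$ the getter): $P_G$ is unioned into $S_F$, and I must argue this moves exactly the strands that were parallel to $v$ but are $\prec v'$, which the new join edge $G \to v'$ makes precisely $G$'s strands, namely $P_G$.

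The crux, and the step I expect to be the main obstacle, is verifying in the two non-trivial transitions (return and getter) that \emph{no} strand outside the renamed or unioned bag changes status, i.e.\ that $S_G$ captures all and only the relevant predecessors. The delicate direction is the getter case: a strand $w \notin G$ could in principle become a predecessor of $v'$ through the new join edge if $w \prec g$ for some $g \in G$ without $w$ already being sequential. I would rule this out using the structured-future restriction: any path from outside $G$ into $G$ must pass through $G$'s creator $c_G$, and the ``no race on future handles'' hypothesis gives $c_G \prec v$ (the creator sequentially precedes the get), so $c_G$ and all its predecessors already lie in $S$-bags by the inductive hypothesis and do not move. Symmetrically, futures created and gotten entirely within $G$ have already been unioned into $S_G$ before $G$ returns, so $P_G$ is exactly $G$'s executed strands together with their internal predecessors.

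It is precisely this structured restriction that licenses the single union per $\getF$ and the $S$-to-$P$ renaming on return, and I would flag explicitly that without it a strand could reach $v'$ along a path not summarized by any single bag. This is the failure mode that forces the more elaborate machinery of \MultiBagsPlus in the general-futures setting, so I would expect the write-up to isolate the use of $c_G \prec v$ as the one place where the structured hypothesis is consumed.
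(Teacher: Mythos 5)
Your overall strategy---a single induction over the eager execution, checking at each transition that the bag operations track the change in the predecessor set---is workable in spirit (the paper's own dynamic lemma is also an induction on execution), and you correctly identify the getter transition as the crux and the structured hypothesis $\creator(G)\prec v$ as the place where that hypothesis is consumed. But the argument you give for that crux step rests on a false structural claim: it is \emph{not} true that every path from outside $G$ into $G$ must pass through $G$'s creator. A future $H$ can be created \emph{outside} $G$ and then gotten \emph{inside} $G$ (the handle being passed in at creation time); this is perfectly legal under the structured restriction, and it produces a join edge from $H$'s last strand directly into a getter strand in the middle of $G$, bypassing $\creator(G)$ entirely. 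Your patch (``futures created and gotten entirely within $G$ have already been unioned into $S_G$'') covers only internally created futures and so misses exactly this case. The algorithm does handle it---when $G$ calls \getF on $H$, $P_H$ is unioned into $S_G$, so $H$'s strands ride along when $S_G$ later becomes $P_G$---but to \emph{prove} this you need an invariant that characterizes which strands live in which named bag, not merely whether a strand is in ``an $S$-bag'' or ``a $P$-bag'' relative to the current strand. Your stated invariant is too weak to push through the return and getter cases, because those cases rename/union specific bags ($S_G$, $P_G$), and the induction hypothesis as stated says nothing about their contents.

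This is precisely the machinery the paper builds and that your proposal omits. The paper strengthens the dynamic invariant to: every strand of a completed function $F$ lies in the bag of $F$'s \emph{operating function} (the function containing the furthest join successor of $F$'s last executed strand), with $S$ versus $P$ determined by whether that operating function is active (Lemma~\ref{lem:SPStuff}); this is what licenses the claim that $P_G$ contains all and only the strands whose reachability to the rest of the computation is mediated by $G$'s final join edge. Separately, the paper proves a \emph{static} path-decomposition lemma (Lemma~\ref{lem:path}, via the canonical ordering of futures in Lemma~\ref{lem:canonical}): whenever $u\prec v$ there is a path consisting of join/continue edges followed by spawn/continue edges. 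This lemma is what rules out, globally, the pathological interleavings your local case analysis cannot see---e.g.\ it implies a future created inside $G$ can never be gotten before $G$ itself is gotten---and, combined with the eager-execution property (Property~\ref{prop:eager}), reduces the theorem to checking whether a strand's furthest join successor sits in an active function. Without some substitute for these two lemmas, your transition-by-transition argument has no sound way to bound which strands gain or lose paths to the new strand, so the proposal as written has a genuine gap.
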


In order to prove this theorem, we define two more terms. A node $u$
is a \defn{spawn predecessor} of a node $v$ if there is a path from
$u$ to $v$ which consists of only spawn and continue edges.  A node
$u$ is a \defn{join predecessor} of $v$ if there is a path from $u$ to
$v$ that consists of only join and continue edges.  Spawn and join
successors are defined in the symmetric way.  We will overload
notation and say that a strand $u$ is a spawn predecessor of a
function $F$ if there is a path from $u$ to the first strand of $F$
that consists of only spawn and continue edges and similarly a strand
$v$ is a join successor of $F$ if there is a path from the last strand
of $F$ to $v$.  Each node is its own spawn and join predecessor and
successor.

The algorithm works due to the following observations.\footnote{We
  implicitly assume that when we refer to any strand (or function), it
  is either currently executing or has already executed (we have no
  knowledge of strands or functions that are still to execute).}  We
  say that a function is \defn{active} if it has started executing,
  but has not completed (returned).  While a function $F$ is active,
  $S_F$ exists and $P_F$ doesn't because $P_F$ is only created upon
  $F$'s return.  All strands of an active function $F$ are in $S_F$.
  After a function has returned, $S_F$ is destroyed; $P_F$ exists
  if \getF has not been called on $F$'s future handle and if $P_F$
  exists, then all strands of $F$ are in $P_F$.  After a function has
  joined ($\getF$ has been called) then neither $S_F$ nor $P_F$ exist.

Property \ref{prop:eager} is a property of eager executions. 

\begin{property}
\label{prop:eager}
When a strand $v$ is currently executing, all spawn predecessors of
$v$ are part of some active function.  The converse is also true; all
strands $w$ that are part of active functions are spawn predecessors
of $v$.
\end{property}

We need two other properties. The first one is a static property of
paths in a program with structured futures as follows. If there is a
path from a strand $u$ to stand $v$,\footnote{In general, there can be
many paths from $u$ to $v$.} then there must be a path where the first
(possibly empty) part of the path consists of only join and continue
edges, while the second (possibly empty) part of the path contains
only spawn and continue edges. In other words, there is a path where
no join edges follow spawn edges. More formally, for any two nodes
$u$ and $v$, if $u\prec v$, then we can find a node $w$ where $u$ is a
join-predecessor of $w$ and $w$ is a spawn predecessor of $v$.
Combining with Property~\ref{prop:eager}, we get:
\begin{observation}
Consider a completed strand $u$ and a currently-executing strand $v$
where $u \prec v$. The furthest join successor of $u$, say $w$, must
be part of an active function.
\label{obs:predecessor}
\end{observation}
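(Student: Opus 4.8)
The plan is to derive the observation directly from the two facts just established, namely the static path property (if $u \prec v$ then there is a ``turning point'' $w_0$ with $u$ a join-predecessor of $w_0$ and $w_0$ a spawn-predecessor of $v$) and Property~\ref{prop:eager} (a strand is a spawn-predecessor of the executing $v$ if and only if it is part of an active function). First I would pin down a join successor of $u$ that Property~\ref{prop:eager} places in an active function, and then I would push this conclusion all the way to the \emph{furthest} join successor $w$ by walking up the chain of join and continue edges leaving $u$.

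I would begin with a structural observation that makes ``the furthest join successor'' well defined: every node emits at most one join-or-continue edge. A regular strand has a single (continue) out-edge; a creator strand's second out-edge is a \emph{spawn} edge, so only its continue edge counts; a getter's extra edge is incoming; and the last strand of a future is the sink of its own SP-dag, so within its function it has no continuation and emits only its join edge. Hence the join successors of $u$ form one simple chain, with a unique endpoint $w$, and every join successor of $u$ lies on it. Next, since $u \prec v$, the static path property yields a turning point $w_0$ on this chain that is a spawn-predecessor of $v$, and Property~\ref{prop:eager} then places $w_0$ in an active function. Because $w_0$ and $w$ both lie on the unique chain and $w$ is its endpoint, $w$ is reached from $w_0$ by continuing along join and continue edges.

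The remaining work is to show that activeness is preserved as I walk from $w_0$ up to $w$. A continue edge keeps me inside the same (active) function, so the only substantive step is a join edge leaving the last strand of some active future $H$ and entering $\getter(H)$ in its parent $H'$; I must argue $H'$ is active. Here I would invoke Property~\ref{prop:eager} again: since $H$ is active, its first strand is a spawn-predecessor of $v$; and since the strands of $H'$ form a continue-chain, the first strand of $H'$ reaches $\creator(H)$ by continue edges, $\creator(H)$ reaches the first strand of $H$ by the spawn edge, and that strand reaches $v$ by spawn and continue edges. Concatenating, the first strand of $H'$ is a spawn-predecessor of $v$, so $H'$ is active. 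Iterating this along the chain shows every function from $w_0$ up to $w$ is active, so $w$ itself lies in an active function.

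I expect this join-edge inheritance step to be the crux: it is the one place where I rely on the nested structure of active functions forced by depth-first eager execution (an active future's creator is necessarily still on the active ``call stack''), rephrased through the spawn-predecessor characterization of Property~\ref{prop:eager}. The chain-uniqueness claim and the base case are routine consequences of the node/edge taxonomy and the two cited properties, so the argument should reduce cleanly to that single inductive step.
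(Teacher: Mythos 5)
Your overall decomposition---obtain the turning point $w_0$ from the static path property, place it in an active function via Property~\ref{prop:eager}, then push activeness along the unique join/continue chain out to the furthest join successor $w$---has the right shape, and your chain-uniqueness observation matches the paper's parenthetical remark that such paths have no branching. The paper itself presents this observation as an immediate consequence of combining the static property (formalized later as Lemma~\ref{lem:path}) with Property~\ref{prop:eager}, leaving the step from ``some join successor is in an active function'' to ``the furthest one is'' implicit (the rigorous version is absorbed into the operating-function machinery of Lemma~\ref{lem:SPStuff}), so wanting to make that step explicit is reasonable. The problem is that your crux step, the ``join-edge inheritance,'' is wrong on two counts.

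First, the case you analyze cannot occur. By the paper's convention, every strand referred to has executed or is executing, so $w$ is the furthest \emph{executed} join successor of $u$. A join edge goes from the last strand of a future $H$ to $\getter(H)$, and under depth-first eager execution $\getter(H)$ runs only after $H$ has returned; hence if $H$ is active, $\getter(H)$ has not executed, and a chain of executed strands can never traverse a join edge out of an active function. This makes your inductive step vacuous---and it also makes the walk unnecessary: once the chain reaches $w_0$ inside the active function $G$, among executed strands it can only follow continue edges within $G$, so $w \in G$ and you are done. Second, even taken on its own terms, your argument for that case is unsound: you assume $\creator(H)$ lies in the getter's function $H'$ (``the first strand of $H'$ reaches $\creator(H)$ by continue edges''). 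Structured futures only require that $\creator(H)$ sequentially precede the strand ending with the \getF call; the two may lie in different functions. The paper's own example in Figure~\ref{fig:struct-ex} refutes your assumption: future $D$ is created by node 3 in function $C$, while its getter (node 13) is in function $F$, so the concatenated spawn/continue path you build from the first strand of $H'$ to $v$ does not exist in general. In short, you neither show the join-edge case is impossible nor handle it correctly, so the proof as written has a genuine gap, even though the statement being inducted on happens to hold (vacuously) in that case.
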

This observation allows us to concern ourselves only with paths that
go through nodes of active functions.  In particular, to detect races,
it is sufficient to try to check if the furthest join successor $w$ of
any previously executed node $u$ is part of an active function.  If
so, the observation implies that $u \prec w$ and we already know from
Property~\ref{prop:eager} that $w\prec v$; therefore, $u \prec v$.  If
not, then $u \not \prec v$.

The second property is a dynamic property of \MultiBags which allows
to precisely check this.  In particular, it states the following
(which combined with the previous observation gives us the theorem):
\begin{observation}
  Consider an already completed node $u$. Say, at time $t$, $u$'s
  furthest join successor $w$ is part of a function $G$.  If $G$ is
  active, then $u$ is in $G$'s $S$ bag, otherwise $u$ is in $G$'s $P$
  bag.
\label{obs:dynamic}
\end{observation}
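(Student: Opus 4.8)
The plan is to prove Observation~\ref{obs:dynamic} by induction on the events of the depth-first eager execution, carrying the statement of the observation itself as the inductive invariant. Only three kinds of events can change either the bag membership of a completed node or its furthest join successor: (i) a strand executes, extending an active function; (ii) a function $G$ returns, at which point the algorithm renames $S_G$ to $P_G$; and (iii) a get on $G$'s handle is executed inside some function $F$, at which point the algorithm performs $S_F \gets \proc{Union}(S_F, P_G)$ and a join edge is added from the sink $s_G$ of $G$ to the getter node $g$ of $F$. I would verify that the invariant survives each of these three transitions.

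Before handling the events I would establish a small structural fact. Since this section uses only \createF and \getF, each function instance is a chain of strands linked by continue edges, with creators carrying an extra outgoing spawn edge and getters an extra incoming join edge. Consequently, once $G$ has returned, every strand of $G$ reaches $G$'s sink $s_G$ using only continue (and $G$-internal join) edges, so the furthest join successor of any completed node that currently ``resides in'' $G$ is exactly $s_G$, unless $G$ itself has already been joined. This yields the clean correspondence I will need at event (iii): at the moment the get on $G$ is executed, $P_G$ consists of precisely those completed nodes whose furthest join successor is $s_G$. (Note that if the furthest join successor $w$ lies in $G$ at all, then $G$ cannot yet have been joined, so the two cases of the observation, active and returned-but-not-joined, are exhaustive.)

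The two straightforward events are (i) and (ii). For (i), a strand executing in an active function $H$ is placed in $S_H$ before it runs, and since $H$ is active no join edge yet leaves $H$; hence its furthest join successor lies inside the active function $H$, matching the $S$-bag case. For (ii), renaming $S_G$ to $P_G$ exactly mirrors $G$ passing from active to returned-but-not-joined; no join edge is added on return, so the furthest join successors of the affected nodes are unchanged (each is now $s_G$), and they move from the $S$ case to the $P$ case in lockstep. The crux is event (iii): here I must argue that the nodes unioned out of $P_G$ into $S_F$ are exactly those whose furthest join successor was $s_G$ (by the correspondence above), and that inserting the edge $s_G \to g$ extends each such node's furthest join successor through $g$ into $F$. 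Because $F$ is active (it is the function currently executing the get), the extended furthest join successor lands inside the active function $F$, consistent with these nodes now belonging to $S_F$.

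I expect event (iii) to be the main obstacle, for two reasons. First, I must pin down the exact bijection between the union-find set $P_G$ and the set of completed nodes with furthest join successor $s_G$; this includes checking that $P_G$ neither gains nor loses members between $G$'s return and its join, which holds because no further sub-future can be joined into a returned $G$ and the only edge leaving $s_G$ is the one added by the get itself. Second, I must verify that the newly inserted join edge moves each affected node's furthest join successor precisely into $F$ and no farther, which relies on $F$ being active and therefore having no outgoing join edge of its own. Maintaining this invariant through all three events establishes Observation~\ref{obs:dynamic}, which combined with Observation~\ref{obs:predecessor} and Property~\ref{prop:eager} then delivers \thmref{correctness}.
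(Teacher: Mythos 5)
Your proof is correct and takes essentially the same approach as the paper's: there, Observation~\ref{obs:dynamic} is formalized as Lemma~\ref{lem:SPStuff} and proved by the same induction on the execution, with exactly your three transitions (strands entering the $S$-bag of an active function, a return renaming $S_G$ to $P_G$, and a get unioning $P_G$ into the active caller's $S$-bag). The paper merely packages the node-level correspondence you maintain (bag containing $u$ $\leftrightarrow$ function holding $u$'s furthest join successor) in per-function terms, via ``operating functions'' and the confluent/non-confluent dichotomy.
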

An informal argument about why this property is true follows: In
\Liref{union}, \MultiBags unions $P_G$ into $S_F$ when $G$ joins with
an active function $F$.  This suggests the following: Consider a
particular strand $u$ in function $G$.  Say at time $t$, $w$ is the
furthest join successor strand of $u$ which has executed (or is
executing).  (This strand is well defined since there is no branching
in a path that contains only join and continue edges.)  Say $w$ is
part of function $F$.  Then at time $t$, $u$ is in $F$'s bag.  $F$'s
bag is an $S$ bag if $F$ is active.  Therefore, to check if $u$'s
furthest join successor $w$ is active, it suffices to check if $u$ is
in an $S$ bag, which is precisely what \MultiBags does.  

If we combine Observations~\ref{obs:predecessor}
and~\ref{obs:dynamic}, we get the theorem.  Say $v$ is executing at
time $t$ and consider a previously executed node $u$ and say $w$ is
the furthest join successor of $u$.  If $u \prec v$, then by the first
observation, $w$ is part of an active function and therefore, by the
second observation, $u$ is in an $S$ bag.  On the other hand, if
$u \parallel v$, then $w$ can not be part of an active function --- if
it was, then by Property~\ref{prop:eager}, $w \prec v$ and therefore,
$u \prec v$ (a contradiction).  Therefore, $u$ is in a $P$ bag.

When the computation does a memory access, it simply checks if the
previous accesses are in an $S$
In order to do queries, the algorithm operates as follows:
\begin{enumerate}
\item Currently executing strand $s$ reads location $\ell$:
  check if $\lastwriter(\ell)$ is in a $P$ bag; if so, declare a race.
  Otherwise, append $r$ to $\rlist(\ell)$
\item Currently executing strand $s$ writes to location $\ell$:
  check if any reader $r \in \rlist(\ell)$ is in a $P$ bag; if
  so, declare a race.  Otherwise, empty the reader list and set
  $\lastwriter(\ell)=w$.
\end{enumerate}

\subheading{Proof of Correctness of \MultiBags}

First, we prove the static property stated in
Observation~\ref{obs:predecessor}.  In order to prove it, we first
define a canonical order on the futures of the program and show that
we can always order the functions in this canonical order.

\begin{lemma}
\label{lem:canonical}
We can always find a canonical order.
\end{lemma}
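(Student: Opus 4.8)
The plan is to realize the canonical order as a linear extension of an acyclic relation on the futures, so that ``existence of a canonical order'' reduces to ``the defining relation has no cycles,'' which in turn follows from $\gfull$ being a dag. This matches the overall strategy: the canonical order is only used downstream to normalize paths, so all we really need is a total order that is consistent with the directed reachability structure of $\gfull$.

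First I would make the defining relation precise. There are $f+1$ futures (counting the main program), each with a unique creator and, by the single-touch restriction, at most one getter. I would introduce a relation $\prec_c$ on these futures that records, for each pair, the direction in which a normalized path is permitted to traverse them: intuitively $F \prec_c G$ should hold exactly when a join edge out of $F$ may precede a spawn edge into $G$ on a path, and never the reverse. I would anchor each such constraint to a concrete directed path in $\gfull$ between the boundary strands (creators, getters, sources, and sinks) of the two futures, using the structured-future hypothesis $\creator(F)\prec\getter(F)$ to guarantee that the creation and join of a \emph{single} future are themselves ordered in $\gfull$ and hence never impose conflicting constraints within one future.

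Next I would prove that $\prec_c$ is acyclic. Assuming a cycle $F_1 \prec_c F_2 \prec_c \cdots \prec_c F_t \prec_c F_1$, each link $F_i \prec_c F_{i+1}$ is witnessed by a directed path in $\gfull$ from a boundary strand of $F_i$ to one of $F_{i+1}$; since within each future its source reaches its sink, these witnesses chain into a closed directed walk through $\gfull$. Because $\gfull$ is acyclic by definition, no such walk exists, a contradiction; hence $\prec_c$ is a strict partial order. Any finite strict partial order admits a linear extension (topological sort), and I would take any such extension to be the canonical order, which completes the existence proof.

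The step I expect to be the main obstacle is not the acyclicity argument, which collapses cleanly onto the dag property, but rather choosing the constraints that define $\prec_c$ so that they simultaneously (i) are strong enough to drive the downstream normalization of an arbitrary $u \rightsquigarrow v$ path into a join-and-continue prefix followed by a spawn-and-continue suffix (Observation~\ref{obs:predecessor}), and (ii) remain weak enough that each one can be traced back to a genuine directed path in $\gfull$. Making these two demands coincide is precisely where the structured-future restriction is essential, and I would concentrate most of the effort on verifying that correspondence rather than on the topological-sort step itself.
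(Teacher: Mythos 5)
Your plan reduces the lemma to ``the defining relation is acyclic, because $\gfull$ is a dag,'' and that reduction is where the proposal breaks. The canonical order the paper needs is defined by \emph{containment}: a future $F$ may be added only after the futures containing $\creator(F)$ \emph{and} $\getter(F)$ have both been added. The getter half of this constraint is anti-parallel to reachability in $\gfull$: if $\getter(F)$ lies in $G$, the witnessing edge runs from $F$'s sink \emph{into} $G$ (so $F$ reaches $G$ in $\gfull$), yet the constraint orders $G$ \emph{before} $F$. Hence a cycle in the constraint relation does not chain into a closed directed walk in $\gfull$, and the contradiction your argument rests on is unavailable. Concretely: suppose main creates $G$; a strand of $G$ calls $\createF(F)$, passing $G$'s handle into $F$; a strand of $F$ calls $\getF$ on $G$'s handle; and $F$'s own handle is never gotten. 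Every cross edge (create edge into $F$'s source, get edge from $G$'s sink into $F$) runs from $G$ toward $F$, so the graph is perfectly acyclic --- but the constraint relation has the two-cycle ``$G$ before $F$'' (creator containment) and ``$F$ before $G$'' (getter containment), so no linear extension exists. What rules this program out is not dag-ness at all; it is the structured-futures/eager-execution hypothesis (such a program blocks at the $\getF$, since $G$ cannot return until $F$ returns).

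That hypothesis is exactly what the paper's proof uses and exactly what you defer. The paper builds the order greedily: at each step it adds the pending future $F$ whose creator is sequentially maximal among pending futures, and then invokes the structured property $\creator(F)\prec\getter(F)$ to argue that $\getter(F)$ must already lie in the added prefix, so the choice is legal. In your write-up the corresponding work is postponed into ``choosing the constraints that define $\prec_c$,'' which you flag as the main obstacle to be verified later; but that choice is not a verification detail --- it is the entire content of Lemma~\ref{lem:canonical}. A secondary problem is that your intuitive gloss on $\prec_c$ (join edges out of $F$ may precede spawn edges into $G$, never the reverse) is a restatement of the path-normalization property of Lemma~\ref{lem:path} and Observation~\ref{obs:predecessor}, which the canonical order is introduced in order to prove; defining the order in those terms courts circularity. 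So the scaffolding (linear extension of an acyclic relation) is fine, but the one step you actually carry out --- acyclicity from dag-ness alone --- is false for the relation that is needed, and the step that would repair it, driven by the structured-future restriction, is left open.
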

\begin{proof}
  The order starts with the main function; we then progressively add
  futures to the computation.  A future $F$ can only be added if its
  creator strand and getter strand have already both been added.  

  We will induct on adding new futures.  We can always start since the
  main function is added first.  At some point, we have already added
  some futures, say a set $S$.  Next, we will add the future $F$ such
  that there is no future $G$ such that
  $\creator(F) \prec \creator(G)$ where both creator nodes are in $S$.
  There must be some such future (if there are more than one, we can
  choose arbitrarily).  Since no future was created sequentially after
  $F$ was created, and the strand before the $\getter(F)$ is
  sequentially after $\creator(F)$, $\getter(F)$ must also be in $S$;
  therefore, it is legal for us to pick $F$ as the next future in our
  canonical order.
\end{proof}

We now use this canonical order to induct on the futures.  In
particular, we can show that the static property stated in
Observation~\ref{obs:predecessor} (stated more formally and completely
in the following lemma) by inducting on the futures in the canonical
order.

\begin{lemma}\lemlabel{path}
  If $u\prec v$, then there exists a path from $u$ to $v$ that
  contains two sections: the first path (possibly empty) contains only
  join and continue edges and the second part (possibly empty)
  contains only spawn and continue edges.  In other words there is
  never a spawn edge followed by a join edge on this path.  In
  addition, this path is unique.  Therefore, If $u\prec v$, then there
  is some node $w$ (possibly $u$ or $v$) which is a join successor of
  $u$ and a spawn predecessor of $v$.
\end{lemma}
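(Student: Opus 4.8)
The plan is to induct on the futures in the canonical order guaranteed by Lemma~\ref{lem:canonical}. Let $G_i$ denote the sub-dag consisting of the main function together with the first $i$ futures in that order. In this section every function instance is internally just a chain of continue edges, with spawn edges leaving its creators and join edges entering its getters; hence $G_0$ is a single continue path, and the claim holds trivially there, since the unique path between any two of its nodes uses only continue edges and so is (vacuously) of the required form. I maintain as the inductive hypothesis that in $G_{i-1}$ every reachable pair $u \prec v$ is joined by a \emph{unique} path that uses first only join and continue edges and then only spawn and continue edges.

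For the inductive step I add the future $F = F_i$, with creator $c = \creator(F)$ and getter $g = \getter(F)$. By the definition of the canonical order both $c$ and $g$ already lie in $G_{i-1}$, and I will use that the structured-futures restriction supplies a sequential dependence from $c$ to $g$ that does not route through $F$, so that $c \prec g$ already holds in $G_{i-1}$. Adding $F$ introduces only the strands of $F$ (themselves a continue chain), the single spawn edge from $c$ into $F$'s first strand, and the single join edge from $F$'s last strand into $g$. I then case on how a path from $u$ to $v$ can use this new material. If the path stays inside $G_{i-1}$, the hypothesis applies verbatim. If it enters $F$, it must do so through the unique spawn edge at $c$, so it has the form: a $G_{i-1}$ path witnessing $u \prec c$, followed by the spawn edge and a continue chain inside $F$; by the hypothesis the prefix is already join-then-spawn, and we have only lengthened its spawn section. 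Symmetrically, a path that leaves $F$ must use the unique join edge into $g$, giving a continue chain from $u$ to $F$'s last strand, the join edge, and then a $G_{i-1}$ path witnessing $g \prec v$; here we only lengthen the leading join section. In both cases the join-then-spawn shape is preserved.

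The one remaining possibility is a path that enters $F$ through the spawn edge and later leaves through the join edge, which is exactly the forbidden spawn-before-join shape; this is where the structured-futures hypothesis does the real work. Because $c \prec g$ already in $G_{i-1}$, any such detour can be rerouted as $u \prec c \prec g \prec v$ along a path lying wholly in $G_{i-1}$, which is join-then-spawn by the hypothesis. Hence the canonical path never threads through $F$ in this way, and no new join-then-spawn path competes with the one already guaranteed. Uniqueness is preserved because each future has exactly one creator, hence one spawn edge, and by the single-touch restriction at most one getter, hence one join edge; so the join section marching forward out of $u$ and the spawn section marching back into $v$ are each forced edge by edge, and they meet at a single turning point. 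Taking $w$ to be the node at which the unique path switches from its join/continue prefix to its spawn/continue suffix then yields a node that is a join successor of $u$ and a spawn predecessor of $v$, which is the concluding statement of the lemma.

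The step I expect to be the crux is the forbidden spawn-before-join case: the whole argument hinges on the claim that the sequential dependence promised by structured futures furnishes a $c$-to-$g$ path that genuinely avoids $F$ (rather than one realized only by the create/get pair itself), so that the offending detour can always be rerouted within $G_{i-1}$. Verifying that this rerouting is available at each inductive step, and confirming that it does not covertly introduce a second join-then-spawn path that would break uniqueness, is the delicate part; the remaining cases are routine prefix- and suffix-extensions of the inductive hypothesis.
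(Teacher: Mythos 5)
Your proposal is correct and follows essentially the same argument as the paper: induction on futures in the canonical order of Lemma~\ref{lem:canonical}, with the same case analysis on whether the path stays in the old dag, ends inside the new future, or starts inside it, and the same observation that any path threading through the new future (spawn then join) can be rerouted via the pre-existing creator-to-getter path. Your treatment is in fact somewhat more explicit than the paper's on the two points it glosses over --- why the rerouting is available (the structured-futures dependence from creator to getter avoids the future itself) and why uniqueness is forced (each future contributes exactly one spawn and, by single-touch, one join edge) --- but the underlying approach is identical.
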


\begin{proof}
Induct on futures in the canonical order (which we can always find
  according to Lemma~\ref{lem:canonical}) and show that this is true
  as we add futures one by one.

Base case: We first have only the main strand, so this is true
  trivially.

Inductive case: Assume that after we have added a set $S$ of futures,
the statement is true.  We now add a new future $F'$.

Consider any nodes $u$ and $v$ in this new dag where $u \prec v$.  If
neither $u$ nor $v$ are in $F'$, then the addition of $F'$ does not
add any new paths between $u$ and $v$ (since the only new path added
is between $\creator(F)$ and $\getter(F)$ and there was already a path
between them before we added $F'$).  In addition, any new path added
does have a spawn followed by a join --- therefore, the uniqueness is
preserved.  Therefore, we only need consider pairs where either $u$ or
$v$ are in $F'$.  If $u$ is in $F'$ and $v$ is not, then the path from
$u$ to $v$ must go from the last strand of $F'$ to $\getter(F')$ and
then to $v$.  By inductive hypothesis, the path from $\getter(F')$
already follows the desired property and the path from $u$ to
$\getter(F')$ only contains join and continue edges.  Therefore, the
property still holds.  A symmetric argument applies when $v$ is in
$F'$.
\end{proof}

We then prove the dynamic property stated in
Observation~\ref{obs:dynamic} by looking at the execution as it
unfolds. For each function $F$, we define its \defn{operating
function} $G$ as the function containing the ``furthest join
descendant'' of the last executed strand of $F$.  An active function
is its own operating function. If a function is not active (it has
returned), it may be \defn{confluent} or \defn{non-confluent}.  It is
confluent if its operating function is active; otherwise it is
non-confluent.  By definition, the operating function of a
non-confluent function is always non-confluent.  A confluent function
can never be its own operating function, but a non-confluent function
$F$ may be its own operating function if its $\getter(F)$ has not yet
executed.


The following lemma is proved by induction on the program as it
executes.

\begin{lemma}\lemlabel{SPStuff}
(a) When a function $F$ is active, all its strands are in its $S$
bag. (b) If a function $F$ is confluent, then all its strands are in
its operating function $G$'s $S$ bag.  (c) If a function $F$ is
non-confluent, then all its strands are in its operating function
$G$'s $P$ bag.
\end{lemma}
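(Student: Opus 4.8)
The plan is to prove the stated lemma by induction on the sequence of events that \MultiBags processes during the depth-first eager execution. There are exactly four kinds of events that can alter either the set of executed strands or the disjoint-set structure: (i) an ordinary continuation strand of an active function executes and is inserted into that function's $S$-bag; (ii) a \createF call performs a \MakeSet, starting a new active function; (iii) a function returns, renaming its $S$-bag to a $P$-bag; and (iv) a \getF call performs the \Union of \Liref{union}. Before the case analysis I would strengthen the statement with an auxiliary invariant that makes the bookkeeping clean: at every point in time, all executed strands of any single function $F$ reside together in exactly one set. This holds because strands enter only via \MakeSet/insertion into $S_F$ and thereafter move only as a unit (the rename of event (iii) and the \Union of event (iv) act on whole bags). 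Granting this, it suffices to identify, for each function, which single bag holds it and to match that bag against the function's operating-function status.

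The two ``local'' events are routine. For event (ii) the new function $G$ is active and its sole strand lies in $S_G$, giving part (a); since $G$'s first strand is reached by a spawn edge, executing it extends no one's furthest join descendant, so no other status changes. For event (i) the new strand $s$ of $F$ is reached by a continue edge, so it merely becomes the new last executed strand of $F$; because join-edge jumps occur only at \getF calls, this extends the furthest join descendant of every function currently operating at $F$ only within $F$ itself, leaving every operating-function identity, and hence every bag assignment, unchanged. Here I would also record the structural fact used implicitly in the definition, namely that a path of join and continue edges never branches (so ``the furthest join descendant'' is well defined), and that all strands of a function share their last strand's furthest join descendant because they funnel through it via continue edges; this is exactly why the lemma may speak of ``all its strands.''

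The heart of the argument is the two structural events, where the confluent/non-confluent status flips. When $G$ returns (event (iii)), $S_G$ becomes $P_G$. At this moment \getF has not yet been called on $G$ (eager execution runs $G$ to completion first), so $G$'s furthest join descendant is its own last strand and $G$ becomes non-confluent with operating function $G$; part (c) matches since $G$'s strands are now in $P_G$. Simultaneously, every function $H$ that was confluent with operating function $G$ had, by inductive hypothesis (b), its strands in $S_G$; its furthest join descendant is unchanged but now lies in the no-longer-active $G$, so $H$ becomes non-confluent with operating function $G$ and sits in the renamed $P_G$, again matching (c). When \getF on $G$'s handle fires inside the active function $F$ (event (iv)), a join edge is added from $G$'s last strand to the getter strand $t$ of $F$, and $S_F \gets \Union(S_F, P_G)$ executes. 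The new edge extends the furthest join descendant of exactly those functions whose descendant previously ended at $G$'s last strand, i.e.\ the functions with operating function $G$, which by hypothesis (c) are precisely the contents of $P_G$, pushing their descendant through $t$ into the active $F$. Thus every such function (including $G$) becomes confluent with operating function $F$, and the \Union has moved exactly their strands into $S_F$, matching (b); the getter $t$ joins $S_F$ as a strand of the active $F$, matching (a). The base case is the single active main function with its first strand in its $S$-bag.

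The main obstacle I anticipate is the \getF case: one must argue that $P_G$ contains neither too few nor too many strands, i.e.\ that it is exactly the union of the strands of all functions whose operating function is $G$, so that a single \Union correctly performs the simultaneous status flip for $G$ and for every function transitively joined into it. This is what ties the data-structure operation to the operating-function relation, and it rests entirely on carrying invariant (c), together with the ``strands move as a unit'' invariant, through the induction; the accompanying subtlety is verifying that no function outside $P_G$ has its operating function changed by the new join edge, which follows because only join--continue paths passing through $G$'s last strand can reach that edge.
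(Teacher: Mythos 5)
Your proof is correct and takes essentially the same route as the paper's: an induction over the execution whose only nontrivial events are a function's return (renaming $S_G$ to $P_G$, making everything operating at $G$ non-confluent) and a \getF call (unioning $P_G$ into the active $S_F$, making everything operating at $G$ confluent) --- these are exactly the paper's Case~2 and Case~1, respectively. Your event-by-event organization, the explicit ``strands move as a unit'' invariant, and the verification that $P_G$ contains precisely the strands of functions whose operating function is $G$ simply spell out bookkeeping that the paper's proof leaves implicit.
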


 \begin{proof}
When a function is first called, it has an $S$ bag and its strands are
placed in the $S$ bag.  They remain in this $S$ bag while it is
active.  Once the function returns, all its items move to a $P$ bag.
For the other two statements, we induct on time after $F$ returns.

Base Case: When $F$ returns, $\getter(F)$ has not yet been called.
Therefore, it is its own operating function; it is non-confluent; and
all its strands are in its own $P$ bag.

Inductive Case: We will do this by two cases: 

\textbf{Case 1:} $F$ is non-confluent and $G$ is its (non-confluent)
operating function; by inductive hypothesis, all strands of $F$ are
in $G$'s $P$ bag.  The only thing that can make changes the location
of its strands is if $\getter(G)$ executes, say by function $H$.  At
this point, $H$ (which is currently active) becomes the operating
function for both $G$ and $F$ --- therefore, $F$ is now confluent.
All strands of $F$ (and incidentally $G$) move to $H$'s $S$ bag.
   
\textbf{Case 2:} $F$ is confluent and $G$ is its (active) operating
function; by inductive hypothesis, all strands of $F$ are in $G$'s
$S$ bag.  The only thing that changes the location of $F$'s strands is
if $G$ returns.  At this point $G$ becomes non-confluent (since it is
no longer active); therefore $F$ also becomes non-confluent.  All of
$F$'s strands move to $G$'s $P$ bag.
\end{proof}

The combination of static and dynamic properties leads to the proof of
correctness. The intuition is that if a function $F$ is confluent,
then there is some strand $w$ in its (active) operating procedure
which is a join successor of all strands of $F$ and a spawn
predecessor of currently executing strand.

\begin{reptheorem}{thm:correctness}
If the currently executing strand is $v$, then a previously executed
strand $u$ is currently in an $S$ bag iff $u\prec v$.
\end{reptheorem}

\begin{proof}
  By Lemma~\ref{lem:path}, we know that if $u\prec v$, then we can
  find a node $w$ such that $u$ is a join predecessor of $w$ and $w$
  is a spawn predecessor of $v$.  By Property~\ref{prop:eager}, since
  $v$ is executing, the function containing $w$, say $G$, is still
  active.  Therefore, by definition, the function containing $u$ is
  confluent.  Therefore, by Lemma~\ref{lem:SPStuff}, $u$ is an $S$
  bag.

  If $u$ does not precede $v$, then there is no path from $u$ to $v$.
  Therefore $u$ can not have a path to any strand $w$ in any active
  function (otherwise by the second statement of
  Property~\ref{prop:eager}, since $w$ has path to $v$, $u$ will also
  have a path to $v$).  Therefore, by definition, $u$ is non
  confluent.  By Lemma~\ref{lem:SPStuff}, $u$ is in a $P$ bag.
\end{proof}

\secput{general}{\MultiBagsPlus for General Futures} 

We now consider general use of futures for programs that use both \spawn/\sync
constructs and also futures.  In particular, we consider programs where most
of the parallelism is created using \spawn and \sync, but there are also $k$
future \getF operations.  For these programs, we provide a race detection
algorithm that runs in total time $O(T_1\iack(m,n)+k^2)$, where $T_1$ is the
work of the program, $\iack$ is the inverse Ackermann's function, $m$ is the
number of memory accesses in the program and $n$ is the number of \spawn and
\createF calls.  To put this bound in context, a series-parallel program has
$k=0$ --- in this case (and in fact, for any program where $k =
O(\sqrt{T_1})$), the \MultiBagsPlus runs in time $O(T_1\iack(m,n))$.  Since
the inverse Ackermann's function grows slowly (upper bounded by $4$), this
bound is close to asymptotically optimal.  

As mentioned in \secref{prelim}, \MultiBagsPlus depends on 
eager execution of the computation and  
we assume that our futures are forward-pointing.  Therefore, the depth-first 
execution never blocks on a \getF call since the corresponding future has already
finished executing.


\paragraph{Notation:}
Unlike in \secref{structured}, we must distinguish between \spawn 
and \createF (similarly, between \sync and \getF) for \MultiBagsPlus.  
The computation dag consists of five kinds of nodes: (1)
regular strands with one incoming and one outgoing edge; (2)
\defn{spawn} strands which end with a \spawn instruction and have with
two outgoing edges; (3) \defn{creator} strands which end with a
\createF instruction and have with two outgoing edges; (4) \defn{sync}
strands which begin immediately after a instruction and have with two
incoming edges; and (5) \defn{getter} strands which begin immediately
after \getF instruction and have with two incoming edges.  Some
strands can have two incoming and two outgoing edges (if they start
immediately after a \getF or \sync instruction and end with a \spawn 
or \createF); these strands are correspondingly in both categories.

The computation dag also consists of five kinds of edges: \defn{spawn}
edges are from spawn nodes to the first strand of the
corresponding spawned function; \defn{join} edges are from last
strand of a spawned function to the corresponding sync node;
\defn{create} edges from the creator strand to the first stand of the
future function; and \defn{get} edges from the last strand of a future
function to the corresponding getter node. Each future can have
multiple get edges if it is a multi-touch future.\footnote{For
  context, in Section~\ref{sec:structured}, both spawn and create
  edges were called spawn edges and both join and get edges were
  called join edges.}

\paragraph{Reachability data structures:} Recall that we can model
computations that employ futures as a set of series-parallel dags (SP dags)
plus some non-SP edges (\secref{prelim}).  When we need to check if $u \prec
v$, if they are already in the same SP dag (i.e., $\spDag(u) = \spDag(v)$, as
defined in \secref{prelim}), the disjoint-sets data structure maintained by
\MultiBags can readily answer the reachability query correctly.  We only run
into trouble due to use of general futures when $\spDag(u) \neq \spDag(u)$ but
a path exists between them via (possibly more than one) non-SP edges.

Thus, \MultiBagsPlus maintains two data structures, and both utilize fast
disjoint-sets data structure described in \secref{structured}.  The first
disjoint-sets data structure, called $\dssp$, is virtually
identical to the data structure used in \secref{structured}: we maintain
$S$ and $P$ bags for each function, and the operations performed on
$\dssp$ are identical to those described in \secref{structured} in
the case of \createF, and \spawn is treated in the same way as \createF.  In
addition, \sync is treated like \getF.  The only difference is that we do not
perform any operation on this data structure on \getF (since we allow
multi-touch futures).  We can use $\dssp$ to correctly answer reachability
query between two strands if they are in the same SP dag.  Intuitively, for 
reasons similar to the structured case, all strands that are currently stored 
in an $S$ bag are sequentially before the currently executing strand.  
Note that all strands that are stored in a $P$ bag are not necessarily in 
parallel with the current strand, due to non-SP edges --- we will use the 
second data structure to answer that query.

The second data structure handles the additional complication in reachability
query when two strands are connected via non-SP edges.  This data structure
has two components: (1) a disjoint-sets data structure called $\dsnsp$ that
maintains a collection of disjoint-sets, and each strand is added to $\dsnsp$
when encountered; and (2) a separate dag called $\cal R$ that contains some of
the sets from $\dsnsp$.  The high-level idea is that these sets are made of
connected series-parallel subdags of the original dag $\gfull$.  For any two
nodes $u$ and $v$ in different SP dags, \MultiBagsPlus ensures that $u \prec
v$ in $\gfull$ iff $\Find(\dsnsp,u) \prec \Find(\dsnsp, v)$ in $\cal R$ (the
sets they are in are connected in $\cal R$).

We call sets also in $\cal R$ as the \defn{attached sets}, which store nodes that
are subdags which start and/or end with $\creator$ or $\getter$
strands.\footnote{This is not quite accurate; for technical reasons, some
attached sets start/end with regular, spawn, and join nodes as well.} 
$\cal R$ explicitly maintains reachability relationship that arises due to
non-SP edges between nodes in the attached sets.  $\cal R$ is simply a dag
(with each node being an attached set), but it is not series-parallel.  Thus,
to answer reachability queries quickly between nodes in $\cal R$,
\MultiBagsPlus maintains a full transitive closure of all sets in $\cal R$ ---
whenever a set is added to $\cal R$, its reachability from all sets already
added to $\cal R$ is explicitly computed and stored.  Therefore, one can check
if $A \prec B$ in $\cal R$ in constant time.

If every set could be in $\cal R$ we would be done.  We must keep $\cal R$
small, however, since every time we add a set to $\cal R$ we compute a full
transitive closure, which is expensive.  It turns out that it is difficult to
simultaneously put all strands in attached sets and keep $\cal R$ small.  In
order to cope with this, some strands are in \defn{unattached sets}, which are
only stored in $\dsnsp$.  Intuitively, an unattached set contains nodes of a
complete series-parallel subdag which have no incident non-SP edges.  Each
unattached set $U$ has two additional fields, \defn{attached predecessor} and
\defn{attached successor}, which point to attached sets that act as $U$'s
proxies when querying $\cal R$.  $U$'s attached predecessor, denoted as
$U.attPred$, is set when $U$ is created; therefore, it always points to some
attached set.  $U$'s attached successor, denoted as $U.attSucc$, is set at
some later point; it either points to some attached set or may be null.  An
attached set is always its own attached predecessor and successor.  We will
overload notation and say that node $u$'s attached predecessor is
$\find{\dsnsp, u}$'s attached predecessor (and similarly for attached
successor).

\begin{figure*}
\footnotesize
\begin{tabular}{|l|l|}
\hline
\begin{minipage}[t]{0.48\linewidth}
\vspace{-3mm}
\begin{codebox}
  \Procname{$\proc{Query}(u,v)$ \Comment return \const{true} iff $u\prec v$ in $\gfull$}
    \li \If $\find{\dssp, u}$ is an $S$-bag, \lilabel{part1Start}
        \RComment{Query $\dssp$ first}
    \li \Then \Return \const{true} \lilabel{part1End}
        \End
    \li $S_v = \find{\dsnsp, v}$
    \li \If $S_v$ is unattached
    \li \Then $S_v \gets \attpred{S_v}$
        \End
\end{codebox}
\end{minipage}%
&
\begin{minipage}[t]{0.51\linewidth}
\vspace{-3mm}
\begin{codebox*}
    \li $S_u = \find{\dsnsp,u}$ \lilabel{part2Start}
    \li \If $S_u$ is unattached
    \li \Then $S_u \gets \attsucc{S_u}$
    \li    \If $S_u = \const{null}$ \Return \const{false} \End 
        \End
    \li $ans =$ query $\cal R$ to determine if $S_u \prec S_v$ \lilabel{part2End}
    \li \Return $ans$
\end{codebox*}
\end{minipage}%
\\ \hline
\end{tabular}
\caption{Code for querying reachability.}
\label{code:query}
\vspace{-2mm}
\end{figure*}

\begin{figure*}
\footnotesize
  \begin{tabular}{|l|l|}
  \hline
  \begin{minipage}[t]{0.48\linewidth}
  $u$ is the first strand of the computation:\vspace{-1ex}
  \begin{codebox}
    \li add $u$ to an attached set with no predecessor.  
  \end{codebox}%
  \vspace{-2ex}
\noindent\rule{\linewidth}{0.4pt}
  Function $F$ calls \spawn($G$):\vspace{-1ex}
  \begin{codebox*}
    \zi \RComment $u$ is the strand in $F$ immediately before the \spawn
    \zi \RComment $v$ is the strand in $F$ right after \spawn 
    \zi \RComment $w$ is the first strand of $G$
    \li $S_G \gets \MakeSet(\dssp, w)$ \lilabel{dsspspawn}
    \li $U_v \gets \MakeSet(\dsnsp, v)$ and make $U_v$ unattached.
    \li $\attpred{U_v} \gets \attpred{\find{\dsnsp, u}}$ \lilabel{setAttPredSpawn}
    \li $U_w \gets \MakeSet(\dsnsp, w)$ and make $U_w$ unattached.
    \li $\attpred{U_w} \gets \attpred{\find{\dsnsp, u}}$ \lilabel{setAttPredCont}
  \end{codebox*}
  \vspace{-.5ex}
\noindent\rule{\linewidth}{0.4pt}
  Function $F$ calls $\createF(G)$:\vspace{-1ex} 
  \begin{codebox*}
     \zi \RComment $u$ is the strand in $F$ immediately before the \createF
     \zi \RComment $v$ is the strand in $F$ immediately after the \createF 
     \zi \RComment $w$ is the first strand of $G$
     \li $S_G \gets \MakeSet(\dssp, w)$ \lilabel{dsspcreate}
     \li $\proc{Attachify}(u)$   \lilabel{AttachifyCreate1}
     \li $A_v \gets \MakeSet(\dsnsp, v)$ and make $A_v$ attached. \lilabel{AttachifyCreate2}
     \li Add an arc from $\find{\dsnsp, u}$ to $A_v$ in $\cal R$. \lilabel{arcCreatorFut}
     \li $A_w \gets \MakeSet(\dsnsp, w)$ and make $A_w$ attached.  \lilabel{AttachifyCreate3}
     \li Add an arc from $\find{\dsnsp, u}$ to $A_w$ in $\cal R$ \lilabel{arcCreatorCont}
  \end{codebox*}
  \vspace{-.5ex}
\noindent\rule{\linewidth}{0.4pt}
  Function $G$ returns:\vspace{-1ex}
  \begin{codebox*}
     \li $P_G \gets S_G$; deallocate $S_G$ \lilabel{dsspreturn}
  \end{codebox*}
  \vspace{-2ex}
\noindent\rule{\linewidth}{0.4pt}
  Function $F$ calls \getF($G$):\vspace{-1ex}
  \begin{codebox*}
     \zi \RComment $u$ is the strand in $F$ that ended with the get 
     \zi \RComment $v$ is the strand immediately after $u$ in $F$ 
     \zi \RComment $w$ is the last strand of $G$.  
     \li $\proc{Attachify}(u)$  \lilabel{AttachifyGet1}
     \li $A_v \gets \MakeSet(\dsnsp, v)$ and make $A_v$ attached. \lilabel{AttachifyGet2}
     \li Add an arc from $\find{\dsnsp, u}$ to $A_v$ in $\cal R$  \lilabel{arcContGetter}
     \li Add an arc from $\find{\dsnsp, w}$ to $A_v$ in $\cal R$; 
     \zi \RComment $\find{\dsnsp, w}$ is guaranteed to be attached. 
  \end{codebox*}
  \end{minipage}%
  & 
  \begin{minipage}[t]{0.51\linewidth}
  \vspace{-2ex}
  \begin{codebox*}
    \Procname{$\proc{Attachify}(u)$ 
        \Comment{make the set containing $u$ attached if not already.}}
    \li $U_u \gets \proc{Find}(\dsnsp, u)$
    \li \If $U_u$ is unattached
    \li \Then mark $U_u$ as attached
    \li       add $U_u$ to $\cal R$
    \li       add the arc $(\attpred{U_u},U_u)$ to $\cal R$
        \End
  \end{codebox*}
  \vspace{-.5ex}
\noindent\rule{\linewidth}{0.4pt}
  Function $F$ calls \sync with child function $G$:\vspace{-1ex}
  \begin{codebox*}
    \li $S_F \gets \proc{Union}(\dssp, S_F,P_G)$; deallocate $P_G$ \lilabel{syncBegin}
    \li look at the corresponding fork
    \li let $f$ be the strand immediately preceding the fork
    \li let $s_1$ and $s_2$ be $f$'s two immediate successors of $f$
    \zi \RComment i.e., the first strand of $G$ and the first strand of 
                  the continuation
    \li let $j$ be the strand immediately after the sync
    \li let $t_1$ and $t_2$ be $j$'s immediate predecessors of the sync
    \zi \RComment i.e., the last strand of the $G$ and the continuation
    \li \If neither $\find{\dsnsp, t_1}$ nor $\find{\dsnsp, t_2}$ is attached
    \zi \RComment{No non-SP edges} \lilabel{unattachedUnionStart}
    \li \Then $\proc{Union}(\dsnsp, f,t_1)$ 
    \li    $\proc{Union}(\dsnsp, f,t_2)$
    \li    $\proc{Union}(\dsnsp, f,\MakeSet(j))$ \lilabel{unattachedUnionEnd}
    \li \ElseIf both $\find{\dsnsp, t_1}$ and $\find{\dsnsp, t_2}$ are attached \lilabel{twoAttachedSync}
    \li \Then $\proc{Attachify}(f)$ \lilabel{spawnAttachify}
    \li    add arc $(\find{\dsnsp, f},\find{\dsnsp, s_1})$ to $\cal R$ \lilabel{arcSourceChildren1}
    \li    add arc $(\find{\dsnsp, f},\find{\dsnsp, s_2})$ to $\cal R$ \lilabel{arcSourceChildren2}
    \li    $A_j = \MakeSet(\dsnsp, j)$ and make $A_j$ attached \lilabel{joinAttachify}
    \li    add a node $A_j$ to $\cal R$
    \li    add arc $(\find{\dsnsp, t_1},A_j)$ to $\cal R$ \lilabel{arcChildrenSink1}
    \li    add arc $(\find{\dsnsp, t_2},A_j)$ to $\cal R$ \lilabel{arcChildrenSink2}
    \li \Else let $t_a$ be the attached one and $t_u$ be the \lilabel{oneAttachedSync}
                unattached one
    \li   correspondingly $s_a$ is attached and $s_u$ is unattached
    \li   \If $\find{\dsnsp, f}$ is not attached 
    \li   \Then $\proc{Union}(\dsnsp, s_a,f)$ \lilabel{singleUnionSrc}
          \End
    \li   $\proc{Union}(\dsnsp, t_a, \proc{Make-Set}(j))$ \lilabel{singleUnionSync}
    \li         $\attsucc{\find{\dsnsp, t_u}} \gets \find{\dsnsp, j}$ \lilabel{setAttSucc}
        \End
  \end{codebox*}
  \end{minipage}%
  \\ \hline
  \end{tabular}
\caption{The actions taken by the algorithm to maintain $\dssp$, $\dsnsp$ and $\cal R$}
\label{code:maintainReach}
\vspace{-2mm}
\end{figure*}

\paragraph{Answering queries:} Figure~\ref{code:query} shows how the
reachability data structures are queried to find out if a
path exists between some previously executed node $u$ and the currently
executing node $v$.  In the first part of the query
(\lirefs{part1Start}{part1End}), we query $\dssp$ and if $u$ is in the
$S$ bag, then we can conclude that $u \prec_\full v$ and return.  If
$u$ is in the $P$ bag, then we check if attached successor of $u$
precedes the attached predecessor of $v$ in $\cal R$; if so, we say
that $u \prec_\full v$.  Otherwise $u$ is in parallel with $v$.

\begin{figure}
\begin{center}
\includegraphics[width=3.0in]{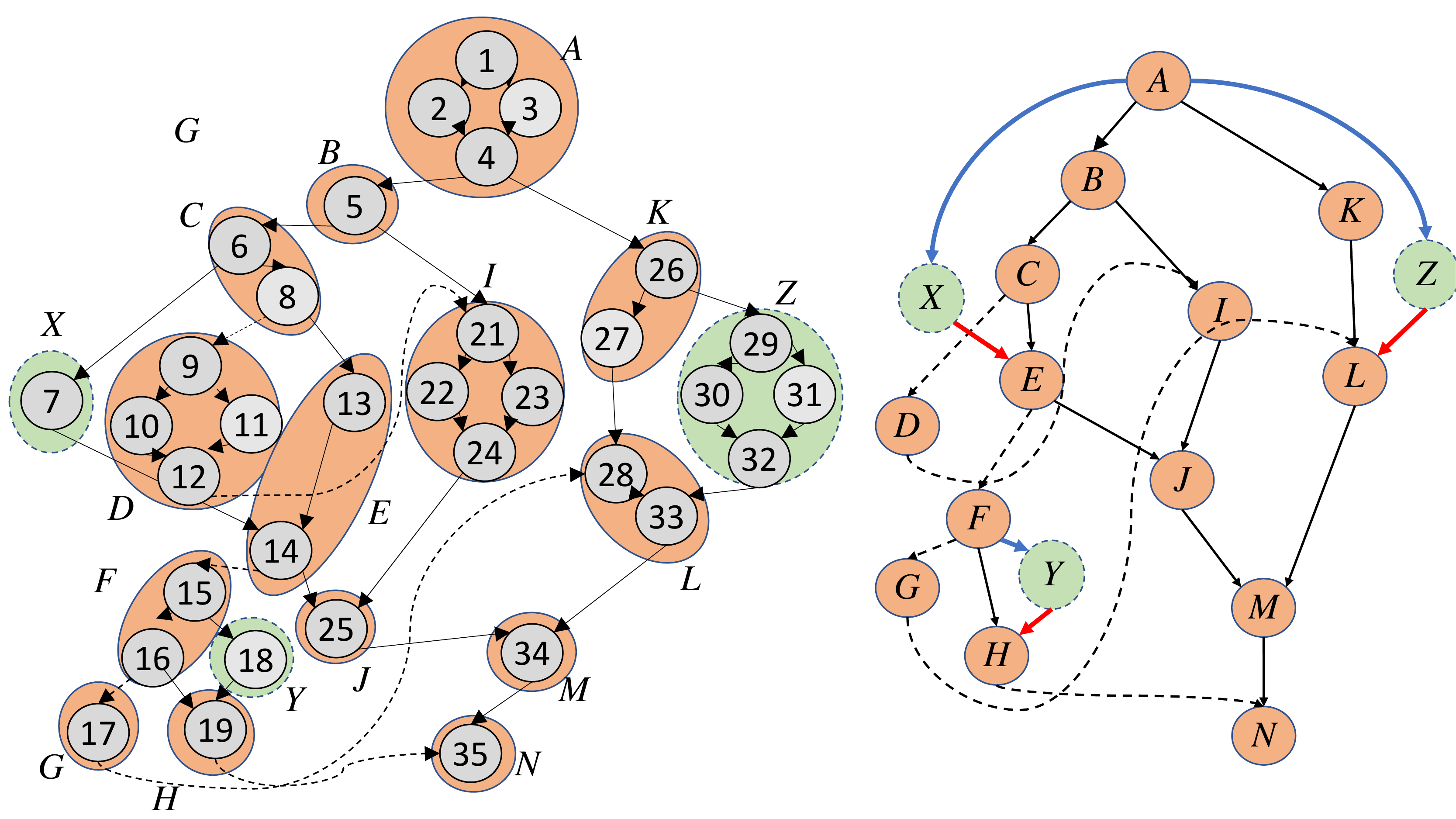}
\end{center}
\vspace{-3mm}
  \caption{An example of general futures.  The left figure shows the full dag.  
    The dashed edges are \createF and \getF edges.  The numbering shows 
    the order in which the nodes execute in depth-first eager execution.  
    The orange ovals (solid outline) are attached sets while the green 
    ovals (dashed outlines) are unattached.  The right figure shows 
    $\cal R$ at the end of the computation.  The thick blue lines 
    indicate an attached predecessor relationship while a thick red 
    line indicates an attached successor relationship.}
    \figlabel{fullExample}
    \vspace{-3mm}
\end{figure}

\paragraph{Maintaining $\dssp$, $\dsnsp$, and $\cal R$:}
Figure~\ref{code:maintainReach} shows the code for maintaining reachability
relationships between nodes in the computation.  
The first thing we do during a \spawn, \createF, \texttt{return} and \sync is
to manipulate $\dssp$
(\lireffour{dsspspawn}{dsspcreate}{dsspreturn}{syncBegin}) in a manner
identical to \secref{structured}.\footnote{Since we are assuming binary
forking, we sync with one function at a time.}  

Now lets consider the manipulations of $\dsnsp$ and $\cal R$. It uses an
auxiliary function $\proc{Attachify}(u)$, which simply checks if $U_u =
\find{\dsnsp, u}$ is an unattached set, and if so, converts it into an
attached set by adding it to $\cal R$ and adding an edge from $\attpred{U_u}$
to $U_u$ in $\cal R$.

The attached and unattached sets change as the execution continues.
\MultiBagsPlus unions sets in $\dsnsp$ growing both attached and
unattached sets.  Two attached sets are never unioned together.
Whenever we union an attached set and an unattached set, we always
union the unattached set into the attached set; therefore, the
resulting set is attached and remains in $\cal R$.  On the other hand,
an unattached set contains nodes of a complete series-parallel subdag
which have no incident non-SP edges.  In particular, consider a
parallel composition of two series-parallel subdags $G_1$ and $G_2$.
Say $G_1$ has no incident non-SP edges.  Then all nodes of $G_1$
constitute an unattached set if either the join node that joins $G_1$
and $G_2$ has not executed yet, or if $G_2$ has an incident non-SP
edge.  At a high-level, this design allows us to have the property
that each non-SP edge only leads to a constant number of attached
sets.  Since only attached sets are in $\cal R$, this idea allows us
to keep $\cal R$ small, allowing us to get good performance.  

Figure~\ref{fig:fullExample} shows a computation dag with futures and
its corresponding $\cal R$ at the end of the computation.  As can be
seen in this example, sets $X, Y$ and $Z$ are all unattached and each
have the above property.  In Appendix~\ref{sec:generalOmitted},
Figure~\ref{fig:partialExample} shows the dag when it has been
partially executed while nodes 23 and 31 are executing, respectively
and all unattached sets in those examples also have this property.

Due to space limitations, all the proofs are omitted to
Appendix~\ref{sec:generalOmitted}.  Here we simply state the performance and
correctness theorems.  

\newcommand{\performancethmgeneral}{
\MultiBagsPlus detects races in time $O(T_1\iack(m,n) + k^2)$ for
programs with $T_1$ work, $k$ \getF calls, $m$ memory accesses, and 
$n$ number of strands.  
}

\begin{theorem}
\thmlabel{generalPerfBound}
\performancethmgeneral
\end{theorem}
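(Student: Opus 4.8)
The plan is to split the total running time into three independent contributions and bound each in turn: (i) the cost of running the program and performing the race checks; (ii) the cost of all operations on the two disjoint-set structures $\dssp$ and $\dsnsp$; and (iii) the cost of building and querying the reachability dag $\cal R$. Part (i) is immediately $O(T_1)$ by definition of work. For part (ii), I would inspect the pseudocode in Figure~\ref{code:maintainReach} and observe that each \spawn, \createF, \texttt{return}, \sync, and \getF event triggers only a constant number of \MakeSet, \Union, and \Find calls, and that each \proc{Query} (Figure~\ref{code:query}) performs a constant number of \Find calls on each structure. Both structures hold $O(n)$ elements, and by the reader-list argument of \secref{race} the number of queries is $O(m)$; hence the total number of disjoint-set operations is $O(n+m) = O(T_1)$. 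Tarjan's bound then makes part (ii) contribute $O(T_1\,\iack(m,n))$.

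The interesting work is part (iii), and the first step there is the structural claim that $\cal R$ contains only $O(k)$ nodes. New attached sets are introduced in exactly three situations: at a \createF (at most three, namely the possible \proc{Attachify} of the predecessor plus $A_v$ and $A_w$), at a \getF (at most two), and at a \sync whose two joined subdags are both attached (at most two). The first two situations occur $O(k)$ times, since there are $O(k)$ create/get events. For the third, I would run a charging argument on the series-parallel parse forest: each of the $O(k)$ non-SP edges ``attaches'' a single leaf subdag, and a \sync falls into the both-attached branch only when each of its two child subtrees already contains an attached leaf. The number of such merge nodes in a forest with $O(k)$ marked leaves is $O(k)$. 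Combining the three cases gives $|\cal R| = O(k)$, and the number of node and arc insertions into $\cal R$ is likewise $O(k)$.

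Given $|\cal R| = O(k)$, I would next bound the transitive-closure maintenance. Because there are at most $O(k^2)$ ordered reachable pairs among $O(k)$ nodes and arcs are only ever inserted (never deleted), each reachable pair is created at most once; charging the reachability propagation to the newly created pairs, plus an $O(k)$ bookkeeping overhead for each of the $O(k)$ insertions, gives $O(k^2)$ in total (equivalently, this is the standard incremental transitive-closure bound of $O(k)$ amortized per edge insertion). Since the closure is stored explicitly, every \proc{Query} resolves its $\cal R$-reachability test in $O(1)$ time, so the $O(m)$ queries add only $O(m) = O(T_1)$. Summing the three parts yields $O(T_1\,\iack(m,n) + k^2)$.

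I expect the main obstacle to be the structural claim $|\cal R| = O(k)$, and in particular the bound on the number of both-attached \sync events. Making the charging argument rigorous relies on the invariant (maintained while manipulating $\dsnsp$) that every unattached set is exactly a complete series-parallel subdag with \emph{no} incident non-SP edge, so that attachedness can only be seeded by non-SP edges and can only propagate upward through a branching \sync a bounded number of times. Verifying that each of the \sync cases in Figure~\ref{code:maintainReach}---neither attached, both attached, and exactly one attached---preserves this invariant (and that the one-attached case genuinely adds no node to $\cal R$) is where the real care is needed; once that invariant is established, the rest of the accounting is routine.
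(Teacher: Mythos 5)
Your proposal is correct and follows essentially the same route as the paper's proof: count $O(k)$ attached-set creations (three per \createF, two per \getF, and two per \sync whose both subdags are attached, with $O(k)$ such syncs), pay $O(k)$ per insertion into $\cal R$ because the transitive closure is stored explicitly, charge $\iack(m,n)$ amortized per disjoint-set operation on $\dssp$ and $\dsnsp$, and use the reader-list argument of \secref{race} to bound the number of constant-time queries by $O(m)$. If anything, you supply two details the paper merely asserts---the parse-forest charging argument showing that only $O(k)$ \sync nodes fall into the both-attached case, and the amortized accounting (each reachable pair created once) for the incremental transitive-closure maintenance---so your write-up is slightly more complete than the paper's own proof.
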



\newcommand{\correctnessthmgeneral}{ If the program is executed in a
  depth first eager execution order, $\proc{Query}(u,v)$ returns true
  iff $u\prec v$ in $\gfull$.}

\begin{theorem}
\label{thm:unstructCorrect}
\correctnessthmgeneral
\end{theorem}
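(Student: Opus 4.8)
The plan is to prove $\proc{Query}(u,v)$ correct by splitting each query along the two reachability mechanisms the algorithm maintains. A path from $u$ to $v$ in $\gfull$ either (i) uses only SP edges, keeping $u$ and $v$ in the same SP dag, or (ii) traverses at least one non-SP (\createF/\getF) edge. The first part of the query (\lirefs{part1Start}{part1End}) is responsible for case (i) via $\dssp$, and the remainder (\lirefs{part2Start}{part2End}) is responsible for case (ii) via $\dsnsp$ and $\cal R$. I would therefore prove two complementary claims --- that Part~1 returns \const{true} exactly when $u \prec_{SP} v$, and that Part~2 returns \const{true} exactly when some $u$-to-$v$ path uses a non-SP edge --- and then observe that their disjunction is precisely $u \prec v$ in $\gfull$ (if both hold, Part~1 already returns \const{true}, so no overlap issue arises).

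For the SP half I would reuse the machinery of \secref{structured}. Because $\dssp$ is manipulated identically to \MultiBags once \spawn is identified with \createF and \sync with \getF, and because no operation touches $\dssp$ on a true \getF, the $S$/$P$ bag dynamics restricted to a single SP dag are exactly those analyzed in Theorem~\ref{thm:correctness}. The target lemma is: when $v$ is executing, $\find{\dssp,u}$ is an $S$-bag iff $u \prec_{SP} v$. This yields soundness and completeness of Part~1, and in particular guarantees that whenever control reaches Part~2 we have $u \not\prec_{SP} v$, so any witnessing path must use a non-SP edge.

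The non-SP half is the substance and I would establish it by induction on the execution, maintaining three invariants over the actions of Figure~\ref{code:maintainReach}. (I1) Every set of $\dsnsp$ is a connected series-parallel subdag of $\gfull$ with a well-defined source and sink; the attached sets are exactly those currently recorded in $\cal R$, and two attached sets are never unioned. (I2) For attached sets $A$ and $B$, the transitive closure stored with $\cal R$ reports $A \prec B$ iff a node of $A$ reaches a node of $B$ in $\gfull$; the arcs inserted at \createF, \getF, and the two-attached \sync case mirror exactly the non-SP edges (and the fork/join edges linking two attached components) that create such reachability. (I3) \emph{Proxy invariant:} for an unattached set $U$, $\attpred{U}$ is the unique attached set through which every path entering $U$ across a non-SP edge must pass, and $\attsucc{U}$ (when non-null) is the unique attached set through which every path leaving $U$ across a non-SP edge must pass, with $\attsucc{U}$ null exactly when no such outgoing path exists yet. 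Granting these, the query is correct: in Part~2 the replacements $S_v \gets \attpred{S_v}$ and $S_u \gets \attsucc{S_u}$ project $v$ and $u$ onto the attached gateways their non-SP paths are forced through by (I3), the \const{null} test correctly rejects when $u$ has no outgoing non-SP path, and the $\cal R$ query then decides cross-dag reachability by (I2); combined with (I1) and the SP lemma, the disjunction of the two parts equals $u \prec v$.

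The main obstacle is the inductive maintenance of (I2) and (I3) through the \sync handling (\lirefs{unattachedUnionStart}{setAttSucc}), which branches on whether zero, one, or two of the joined subdags are attached. The delicate case is the single-attached branch (\lirefs{oneAttachedSync}{setAttSucc}): the attached side is unioned into the growing component while the unattached side is left out of $\cal R$ and handed a fresh attached successor in the final assignment of that branch. I must argue that this assignment preserves (I3) --- that the join node is genuinely the unique attached gateway out of the unattached side --- while simultaneously showing that no incident non-SP edge is orphaned from $\cal R$, so that (I2) continues to characterize cross-dag reachability. The \createF and \getF cases, which splice new attached sets around creator and getter strands and add the corresponding $\cal R$ arcs, are comparatively routine once (I1) fixes the source/sink structure, but they still must be checked to confirm that each non-SP edge contributes only a constant number of attached sets, the structural fact underlying both the proxy invariant and the performance bound of Theorem~\ref{thm:generalPerfBound}.
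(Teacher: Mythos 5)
Your overall architecture (splitting the analysis along the two halves of $\proc{Query}$, structural invariants for attached and unattached sets, transitive-closure correctness for $\cal R$, and proxy correctness for attached predecessors/successors) tracks the paper's proof, and your invariants (I2) and (I3) roughly correspond to the paper's Lemmas~\ref{lem:RCorrect}, \ref{lem:attPred}, and \ref{lem:attSuccSink}. But your first target lemma is false, and this is a genuine gap, not a presentational difference. You claim Part~1 returns \const{true} iff $u \prec_{SP} v$, i.e., iff $u$ and $v$ are connected within a single SP dag. In $\dssp$, however, \createF is handled \emph{exactly} like \spawn (it creates an $S$-bag for the future function), while \getF performs no operation at all; consequently $S$-bag membership is not confined to one SP dag. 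Concretely: let $F$ call $\createF(G)$, let $u$ be a strand of $F$ that executed before that call, and let $v$ be the strand currently executing inside $G$. Under eager execution $F$ has not returned, so $u$ sits in the $S$-bag $S_F$ and Part~1 returns \const{true}; yet $u$ and $v$ lie in different SP dags and $u \not\prec_{SP} v$ (the only path crosses a create edge, which is non-SP). The correct characterization --- the paper's Lemma~\ref{lem:dssp} --- is that Part~1 answers \const{true} exactly when there is a path from $u$ to $v$ using only spawn, create, join, and continue edges, i.e., a path avoiding \emph{get} edges; create edges are deliberately folded into $\dssp$. So the right decomposition is ``paths without get edges'' versus ``paths requiring a get edge,'' not ``SP paths'' versus ``paths with a non-SP edge.'' This asymmetry (create treated like spawn; get treated as a no-op) is also why the appendix redefines spawn/join predecessors asymmetrically, and it means your plan to reuse Theorem~\ref{thm:correctness} ``restricted to a single SP dag'' cannot yield the lemma you need.

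The error propagates into your Part~2 claim. Under your decomposition, a path that crosses a create edge but no get edge must be caught by Part~2; but if $u$ lies in an unattached set whose attached successor is null, Part~2 returns \const{false}, so you would have a completeness hole to close. Ruling out that configuration is not the static ``gateway'' fact stated in your (I3): it is the paper's Lemma~\ref{lem:noAttSucc}, which argues from \emph{depth-first eager execution} that when the attached successor is null, the sync node following $u$'s unattached set has not yet executed, hence no non-SP edge leaves anything sequentially after $u$, and therefore (given that Part~1 already returned \const{false}) $u$ is in parallel with the currently executing $v$. Your proposal never invokes the execution order in the non-SP half, yet both this lemma and direction (b) of Lemma~\ref{lem:dssp} depend on it. Repairing your proof essentially means adopting the paper's split between get-free and get-crossing paths and its two execution-order-dependent lemmas.
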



\paragraph{Comparison to algorithm by Agrawal et al.\@~\cite{AgrawalDeFi18}:}
As we have fully described the \MultiBagsPlus algorithm, we discuss the
differences between \MultiBagsPlus and the the-of-the-art algorithm by
Agrawal et al.\@~\cite{AgrawalDeFi18} and provide an analytical analysis
as to why the algorithm by Agrawal et al.\@ is much more challenging to 
implement in practice.

The algorithm by Agrawal et al.\@ utilizes the following data structures to
answer reachability queries: 1) an order-maintenance data structure for
answering series-parallel queries; 2) the full computation DAG to update and
maintain ``anchor-predecessors'' and ``proxies'' used to infer
``anchor-successors;'' and 3) a reachability matrix $\cal R$ which contains
anchor nodes to answer reachability queries involving non-SP edges.  The
functionalities served by these data structures are similar to that of
$\dssp$, $\dsnsp$, and $\cal R$ in \MultiBagsPlus; in particular, their
algorithm utilize anchor-predecessors and anchor-successors to allow for
correct reachability queries involving non-SP edges, sharing similar roles as
the attached predecessors and attached successors in \MultiBagsPlus.  The main
difference is in the second data structure and how the anchor-predecessors and
anchor-successors are maintained.

In the algorithm by Agrawal et al., the mechanism for maintaining
anchor-predecessors and proxies (which are used to infer anchor-successors)
are more complex.  In particular, to maintain anchor-predecessors, the
algorithm maintains the full computation dag, and each strand (a node in the
dag) explicitly stores its anchor-predecessor.  However, anchor-predecessors
can sometimes change as the program executes.  When that occurs, the algorithm
must explicitly traverse subpart of the dag and update some of the
predecessors explicitly.  The asymptotic complexity of such updates is still
ok because the paper argues that a strand's anchor predecessor can only change
a constant number of times.

Similarly, the algorithm maintains a proxy per strand, used to infer a
strand's anchor-successor.  A proxy for a strand is stored instead of its
anchor-successor is because, while an anchor-predecessor of a node can change
a constant number of times, its anchor-successor can change many times.  Thus
instead, the algorithm maintains a proxy, which indirectly allows the
algorithm to deduce its anchor-successor.  Like the anchor-predecessor, a
proxy of a node can only change a constant number of times, and when that
occurs, the algorithm again explicitly traverses the relevant subdag and
updates the proxies explicitly.

We argue that this algorithm is harder to implement and likely has higher
overheads due to the following reasons.  First, explicitly maintaining the
entire program dag and also storing each strand's anchor-predecessor and proxy
would be more memory intensive than keeping these strands in union-find data
structures which are tagged appropriately.  Second, explicit dag traversals in
order to update proxies and anchor-predecessors of nodes would be expensive
(even though the asymptotic complexity is manageable).  This prior work
establishes the state-of-the art time bound for race detecting programs that
use general futures, but no implementation exists.

\secput{exp}{Experimental Evaluation}

This section empirically evaluates \FutureRD that implements 
\MultiBags and \MultiBagsPlus described earlier.  We first evaluate 
the practical efficiency of these algorithms and then the performance 
difference between them, focusing on the impact of the additional $k^2$ 
overhead that \MultiBagsPlus incurs, where $k$ is the number of \getF 
operations.  

\subsection*{Implementation of \FutureRD}
 
\FutureRD works by instrumenting parallel program executions: upon
the execution of a parallel construct
(i.e., \spawn, \sync, \createF, and \getF), it invokes the necessary
operations to update the reachability data structures; likewise, upon
the execution of a memory access, it invokes the necessary operations
to update the access history data structure and query both data 
structures.

We use Intel Cilk Plus~\cite{IntelCilkPlus13} as our language front end, which
is a C/C++ based task parallel platform that readily supports fork-join
parallelism.  Cilk Plus does not currently support the use of futures,
however, so we have implemented our own future library. Since our race
detector executes the program sequentially with eager evaluation of futures,
the future library never actually interacts with Cilk Plus runtime during race
detection.

Both \MultiBags and \MultiBagsPlus utilize disjoint-sets data structures to
maintain reachability as described in \secref{structured}). \MultiBagsPlus
additionally needs to maintain $\cal R$ as part of its reachability data
structure (defined in \secref{general}). Conceptually, $\cal R$ is simply a
boolean reachability matrix where each cell $(i, j)$ indicates whether
there is a path from attached set $i$ to attached set $j$.  \FutureRD
maintains $\cal R$ as a vector of bit vectors, representing the reachability
between any two sets using a single bit.  Whenever an edge is added to $\cal
R$, reachability is transitively propagated via parallel bit operations. 


\FutureRD maintains the access history like a two-level direct-mapped cache, 
and keeps track of the reader list and last writer at four-byte
granularity (all our benchmarks perform four-byte or larger accesses).
That is, to query or update readers/writers for an address $a$, the more
significant bits of $a$ are used to index into the top-level table and the
rest of the bits are used to index into the second-level table.

\subsection*{Experimental setup}

We evaluate \FutureRD using six benchmarks:
longest-common subsequence (\texttt{lcs}), Smith-Waterman (\texttt{sw}),
matrix multiplication without temporary matrices (\texttt{mm}), binary tree
merge (\texttt{bst}) as described by \citet{BlellochRe97}, Heart Wall Tracking
(\texttt{heartwall}), and Dedup (\texttt{dedup}). Heart Wall Tracking and
Dedup both contain parallel patterns that cannot be easily implemented using
fork-join constructs alone. The Heart Wall Tracking algorithm is adapted from
the Rodinia benchmark suite~\cite{CheBoMe09} that tracks the movement of a
mouse heart over a sequence of ultrasound images.  
Dedup is a compression program that exhibits pipeline
parallelism~\cite{BieniaLi10}, taken from the Parsec benchmark
suite~\cite{BieniaKuSi08}.  All but \texttt{dedup} have two implementations:
structured and general futures; \texttt{dedup} does not utilize the flexibility
of general futures. 
We use the following input sizes: \texttt{lcs} uses $N = 16k$, \texttt{mm} and
\texttt{sw} use $N = 2048$, \texttt{heartwall} uses $10$ images,
\texttt{dedup} uses input large, and \texttt{bst} 
uses input tree sizes $8\mathrm{e}6$ and $4\mathrm{e}6$.
For \figreftwo{structured-overhead}{general-overhead}, we use base case $B =
\sqrt{N}$ for \texttt{lcs}, \texttt{mm}, and \texttt{sw} to keep the work 
the same for the baseline, \MultiBags, and \MultiBagsPlus (since \MultiBagsPlus has
$k^2$ additional overhead).  We then vary the base case size for
\figref{perf-diff}. 

We ran our experiments on an Intel Xeon E5-4620 with $32$ 2.20-GHz cores on
four sockets. Each core has a 32-KByte L1 data cache, 32-KByte L1 instruction
cache, a 256-KByte L2 cache.  There is a total of 500 GB of memory, and
each socket shares a 16-MByte L3-cache. All benchmarks are compiled with
LLVM/Clang 3.4.1 with \code{-O3 -flto} running on Linux kernel version 3.10.
Each data point is the average of $5$ runs with standard deviation less than
$5\%$ with the exception of running \texttt{dedup} with full race detection,
which sees a standard deviation under $9\%$.

\subsection*{Practical efficiency of \FutureRD}

\addtolength{\textfloatsep}{-5mm}

First, we evaluate the overhead of \FutureRD and show that the algorithms
can be implemented efficiently.  To get the sense of
where the overhead comes from, we ran the application benchmarks with
four configurations: 
\begin{closeitemize}
\item \defn{baseline}: running time without race detection; 
\item \defn{reachability}: running time with only the reachability components,
including the instrumentation overhead to capture parallel control constructs;
\item \defn{instrumentation}: running time with memory-access instrumentation 
overhead on top of the reachability configuration, but does not maintain 
or query the access history;
\item \defn{full}: running time with the full race detection overhead.
\end{closeitemize}

\begin{figure}[h]
\centering
\begin{small}
\setlength\tabcolsep{4pt}
\newcommand{\mhead}[2]{\multicolumn{#1}{c}{\it #2}}
\newcommand{\oh}[1]{\hspace{1pt}\scriptsize (#1$\times$)}
\begin{tabular}{crr@{\hspace{1pt}}rr@{\hspace{1pt}}rr@{\hspace{1pt}}r}
\mhead{1}{bench} & \mhead{1}{baseline} & \mhead{2}{reachability} 
       & \mhead{2}{instr} & \mhead{2}{full} \\
\hline
\texttt{lcs} &  2.19 
    &  2.23 & \oh{1.02} &  6.65 & \oh{3.04} &  54.27 & \oh{24.77} \\
\texttt{sw}  & 14.78 
    & 14.25 & \oh{0.96} & 28.79 & \oh{1.95} & 325.10 & \oh{22.00} \\
\texttt{mm}  & 13.94 
    & 13.82 & \oh{0.99} & 58.84 & \oh{4.22} & 468.75 & \oh{33.61} \\
\texttt{heartwall} & 13.86 
    & 13.77 & \oh{0.99} & 63.39 & \oh{4.58} & 340.04 & \oh{24.54} \\
\texttt{dedup} & 12.38 
    & 12.15 & \oh{0.98} & 13.79 & \oh{1.11} &  26.43 & \oh{2.14} \\
\texttt{bst} & 1.37 
    &  1.92 & \oh{1.41} &  2.65 & \oh{1.94} &  10.94 & \oh{8.02} \\
\end{tabular}
\vspace{-2mm}
\caption{The execution times for the benchmarks using structured futures,
    shown in seconds, with \MultiBags used for race detection.
    Numbers in the parentheses show the overhead compared to the baseline.}
\label{fig:structured-overhead}
\end{small}
\vspace{-3mm}
\end{figure}

\begin{figure}[h]
\centering
\begin{small}
\setlength\tabcolsep{4pt}
\newcommand{\mhead}[2]{\multicolumn{#1}{c}{\it #2}}
\newcommand{\oh}[1]{\hspace{1pt}\scriptsize (#1$\times$)}
\begin{tabular}{crr@{\hspace{1pt}}rr@{\hspace{1pt}}rr@{\hspace{1pt}}r}
\mhead{1}{bench} & \mhead{1}{baseline} & \mhead{2}{reachability} 
       & \mhead{2}{instr} & \mhead{2}{full} \\
\hline
\texttt{lcs} &  2.03 
    &  2.30 & \oh{1.14} &  6.47 & \oh{3.19} &  54.95 & \oh{27.13} \\
\texttt{sw}  & 14.73 
    & 14.65 & \oh{0.99} & 27.87 & \oh{1.89} & 380.19 & \oh{25.82} \\
\texttt{mm}  & 13.13 
    & 15.07 & \oh{1.15} & 64.04 & \oh{4.88} & 498.65 & \oh{37.99} \\
\texttt{heartwall} & 13.82 
    & 13.89 & \oh{1.00} & 56.58 & \oh{4.09} & 487.95 & \oh{35.31} \\
\texttt{dedup} & 12.11 
    & 27.73 & \oh{2.29} & 29.60 & \oh{2.44} &  52.39 & \oh{4.33} \\
\texttt{bst} & 1.44 
    &  6.01 & \oh{4.16} &  6.79 & \oh{4.70} &  18.18 & \oh{12.60} \\
\end{tabular}
\vspace{-2mm}
\caption{The execution times for the benchmarks using general futures,
    shown in seconds, with \MultiBagsPlus used for race detection.
    Numbers in the parentheses show the overhead compared to the baseline.}
\label{fig:general-overhead}
\end{small}
\end{figure}

\figref{structured-overhead} shows the list of programs that employ structured
futures running with different configurations, where \FutureRD maintains
reachability using the \MultiBags algorithm.  First, observe that the
reachability configuration incurs almost no overhead, except for \texttt{bst},
which has very little work per parallel construct.   
Since the operations on the disjoint-sets data structure are very
efficient, as long as there is sufficient work per parallel construct, the
overhead of maintaining reachability in \MultiBags should be low.  
These program contains large number of memory accesses, however, and thus
adding instrumentation for memory accesses alone incurs additional
$2$--$4.5\times$ overhead.

Going from the instrumentation configuration to the full race detection incurs
another $6$--$10\times$ overhead, with the exception of \texttt{dedup}.  We 
expect the additional overhead incurred to be about
$8$--$10\times$ because the full configuration transforms every
memory access into updates to access history and queries to both access 
history and reachability data structures.  Thus, each memory access is
translated into a few function calls and several pointer chases 
to multiple data structures.  The benchmark \texttt{heartwall} only incurs 
additional $6\times$, because it spends non-negligible amount of time 
performing I/O (reading in image files).  Finally, \texttt{dedup} is an 
outlier because \texttt{dedup} calls into a dynamic library to perform 
compression, which we could not recompile to include instrumentation.  
Thus, any memory accesses performed within the library do not incur 
additional overhead.  Since the compression takes up a
substantial amount of execution time, the additional overhead is small.

\figref{general-overhead} shows the runtime of programs that employ general
futures where \FutureRD maintains reachability using the \MultiBagsPlus
algorithm.  The additional overhead incurred going from one configuration to
the next is similar to \figref{structured-overhead} except the higher overhead
from \MultiBagsPlus is evident in the reachability configuration.

Over five benchmarks (excluding \texttt{dedup}, since we could not
instrument its compression library), we see a geometric mean
overhead of $1.06\times$ and $1.40\times$ to maintain reachability
using \MultiBags and \MultiBagsPlus, respectively. Full
race detection exhibits $20.48\times$ and $25.98\times$ overhead, respectively.

\subsection*{Comparison between \MultiBags and \MultiBagsPlus}

Next, we compare the performance difference between \MultiBags and
\MultiBagsPlus.  To evaluate the overhead difference between them, we run the
same programs (i.e., with structured futures) with both algorithms.  Although
\MultiBagsPlus is designed for general futures, it also works with programs
that use structured futures, albeit with an additional $k^2$ overhead, where
$k$ is the number of \getF calls.

For \texttt{lcs}, \texttt{sw}, and \texttt{mm}, $k$ is dictated by how
much the base case is coarsened --- the smaller the base case, the
more \getF calls, and the higher $k$ is (which leads to higher
overhead).  Runtimes shown before used base case of $B = \sqrt{N}$ to keep 
the work asymptotically the same across baseline, \MultiBags, and  
\MultiBagsPlus.  Now we decrease the base case size below
(i.e., increase $k$) to see how the overhead of \MultiBagsPlus changes 
compared with the overhead of \MultiBags.

\begin{figure}[h]
\vspace{-2mm}
\centering
\begin{small}
\setlength\tabcolsep{5pt}
\newcommand{\oh}[1]{\hspace{1pt}\scriptsize (#1$\times$)}
\newcommand{\cheader}[3]{\multicolumn{#1}{#2}{\textit{#3}}}
\begin{tabular}{rrr@{\hspace{1pt}}rr@{\hspace{1pt}}r}
    & & \multicolumn{4}{c}{\it reachability} \\
    \cheader{1}{c}{bench} & \cheader{1}{c}{baseline} 
    & \cheader{2}{c}{\MultiBags} & \cheader{2}{c}{\MultiBagsPlus} \\
\hline
\texttt{lcs} \texttt{(B=64)} &  2.14 &  2.20 & \oh{1.03} &  4.68 & \oh{2.19} \\
\texttt{lcs} \texttt{(B=32)} &  2.14 &  2.09 & \oh{0.98} & 39.82 & \oh{18.63} \\
\hline
\texttt{sw}  \texttt{(B=32)} & 14.57 & 14.69 & \oh{1.01} & 13.97 & \oh{0.96} \\
\hline
\texttt{mm}  \texttt{(B=32)} & 13.08 & 13.12 & \oh{1.00} & 49.11 & \oh{3.75} \\
\end{tabular}
\vspace{-2mm}
\caption{The execution times under the baseline and reachability configurations
    (both \MultiBags and \MultiBagsPlus) for a subset of benchmarks
    implemented with structured futures.  Numbers in the parentheses show 
    the overhead compared to the baseline.}
\label{fig:perf-diff}
\end{small}
\end{figure}


\figref{perf-diff} shows the measurements for running programs with
structured futures using \MultiBags and \MultiBagsPlus in the reachability
configuration with different base cases.  The overhead difference between 
\MultiBags and \MultiBagsPlus can readily be observed in 
\figreftwo{structured-overhead}{general-overhead} --- compared to \MultiBags, 
    \MultiBagsPlus incurs $2+\times$ more overhead running \texttt{dedup} 
    and $3+\times$ more running \texttt{bst} for maintaining reachability.
Here we show additional numbers for benchmarks where varying base case 
sizes changes $k$.


The measurements with \texttt{lcs} and \texttt{mm} bear out the extra
overhead of \MultiBagsPlus.  The \texttt{lcs} benchmark has $\Theta(n^2)$ work
versus $\left(n/B\right)^2$ futures, while \texttt{mm} has more work
($\Theta(n^3)$), but also requires $(n/B)^3$ futures.  With a higher ratio of
futures to total work, the overhead is more apparent.  Moreover, the memory
required for the reachability matrix $\cal R$ becomes substantial for small
base cases, adding more overhead. The \texttt{sw} benchmark, however, has
$\Theta(n^3)$ work compared to $\left(n/B\right)^2$ futures, so the effect of
smaller base cases is small.

\secput{related}{Related Work}




Besides works discussed in \secref{intro}, researchers have considered
race detection for other structured computations. \citet{DimitrovVeSa15}
propose a sequential near-optimal race detection algorithm for two-dimensional
dags which also exhibit nice structural properties.  Subsequently,
\citet{XuLeAg18} propose a race detector for two-dimensional dags with
asymptotically optimal parallel running time.  \citet{LeeSc15} propose a
sequential race detector for fork-join computations with reductions, where the
computation dag is almost series-parallel except when reductions are
performed.



Beyond task parallel code, there is a rich literature on race detection for
programming models that generate nondeterministic computations, such as ones
that employ persistent threads and locks.  For such models, since the dag
necessarily depends on the schedule, the best correctness guarantee that a
race detector can provide is for a given program, for a given input, and for a
given \emph{schedule}.  Early work~\cite{SavageBuNe97,VonPraunGr01} employs
lock-set algorithm, which provides wide coverage but can lead to many false
positives, because it cannot precisely capture happens-before (HB) relations
formed between threads.  

A vector-clock (VC) based algorithm such as one proposed
by~\citet{FlanaganFr09} can capture HB precisely for a given schedule.  Such
algorithm can be used on computation with arbitrary dependences, but naively
applying it to task parallel code would be impractical, since it requires
storing a VC of length $n$ with each each memory location querying against it
per access, incurring a multiplicative factor of $n$ overhead on top of the
work, where $n$ is the number of strands, which can be on the order of
millions.

In the context of race detecting nondeterministic code, researchers have
investigated hybrid approaches incorporating VC and
lock-set~\cite{OCallahanCh03, PoznianskySc03, YuRoCh05, SerebryanyIs09} to
trade-off precisions and coverage.  More recently, researchers have proposed
predictive analysis to explore alternative feasible schedules among close by
instructions to increase the coverage (e.g.~\cite{SmaragdakisEvSa12, SaidWa11,
LiuTrZh16, KiniMaVi17}) while keeping the precision.

\punt{ 
Some other race detectors, such as FastTrack~\cite{FlanaganFr09}, do not
differentiate between nested parallelism and unstructured parallelism. For a
program with sequential running time $T_1$, $n$ nested parallel constructs,
and $k$ arbitrary additional edges, FastTrack achieves a running time of
$O(T_1 + (n+k)^2)$.  Our algorithm would perform similar to FastTrack's if $k
= \Omega(n)$, but is much better if most of the parallelism in the program can
be expressed through nested parallelism.  }

\section*{Acknowledgements}
This research was supported in part by National Science Foundation
grants CCF-1150036, CCF-1527692, CCF-1733873, and XPS-1439062. We
thank the referees and our shepherd for their excellent comments.

\clearpage

\thispagestyle{empty}


\newcommand{\noopsort}[1]{} \newcommand{\singleletter}[1]{#1} \punt{ Uncomment
  the following lines for short conference/journal names @String{SODA = {SODA}}
  @String{JACM = {Journal of the ACM}} @String{SPAA = {SPAA}} @String{PPoPP =
  {PPoPP}} @String{PLDI = {PLDI}} @String{STOC = {STOC}} @String{FOCS = {FOCS}}
  @String{ESA = {ESA}} @String{ALP = {Colloquium on Automata, Languages, and
  Programming}} @String{SWAT = {SWAT}} @String{JALGO = {Journal of Algorithms}}
  @String{PODC = {PODC}} @String{LNCS = {LNCS}} @String{SUPERCOMP =
  {Supercomputing}} @String{ICCSE = {Israeli Conference on Computer Systems
  Engineering}} @String{CMD = {Conference on Management of Data}} }

\newpage
\appendix

\secput{generalOmitted}{Proofs from \secref{general}}

\subheading{Proof of Performance of MultiBagsPlus}

\begin{reptheorem}{thm:generalPerfBound}
\performancethmgeneral
\end{reptheorem}
\begin{proof} 
  We create three new attached sets when we encounter a \createF edge
  and 2 new sets when we encounter a \getF edge.  The only interesting
  part is when we encounter a \sync (\lirefs{syncBegin}{setAttSucc}).
  When neither of the component SP-dags have a non-SP edge 
  (\lirefs{unattachedUnionStart}{unattachedUnionEnd}) or if only one
  of them has a non-SP edge(\lirefs{oneAttachedSync}{setAttSucc}), no
  new attached sets are created.  The only case where (at most two)
  additional attached sets are created is if both subcomponents have
  non-SP edges (\lirefs{twoAttachedSync}{arcChildrenSink2}).  The total
  number of such sync nodes is $O(k)$.  Therefore, \MultiBagsPlus
  creates $O(k)$ attached sets.  Each time an attached set is created,
  it takes $O(k)$ time to insert it into $\cal R$ since \MultiBagsPlus
  maintains a transitive closure.  Other than this, each operation
  ($\proc{Make-Set}$, $\proc{Union}$ and $\proc{Find}$) into $\dssp$
  and $\dsnsp$ runs in $\iack(m,n)$. As we argued in \secref{race},
  each memory access generates a constant number of queries.  We see
  from the code that each query leads to a single $\proc{Find}$ into
  $\dssp$ and a constant number of $\proc{Find}$s into $\dsnsp$.
  Therefore, the total cost of race detection is
  $O(T_1\iack(m,n)+ k^2)$. 
\end{proof}

\subheading{Proof of Correctness of \MultiBagsPlus}

We will use the example of Figure~\ref{fig:fullExample} to illustrate
the proof.  Figure~\ref{fig:partialExample} shows the same dag and the
corresponding $\cal R$ with all the attached and unattached sets when
it has been partially executed while nodes 23 and 31 are executing,
respectively.

\begin{figure}
  \begin{minipage}{\columnwidth}
  \hspace{-5mm}
  \begin{center}
    \includegraphics[height=1.6in]{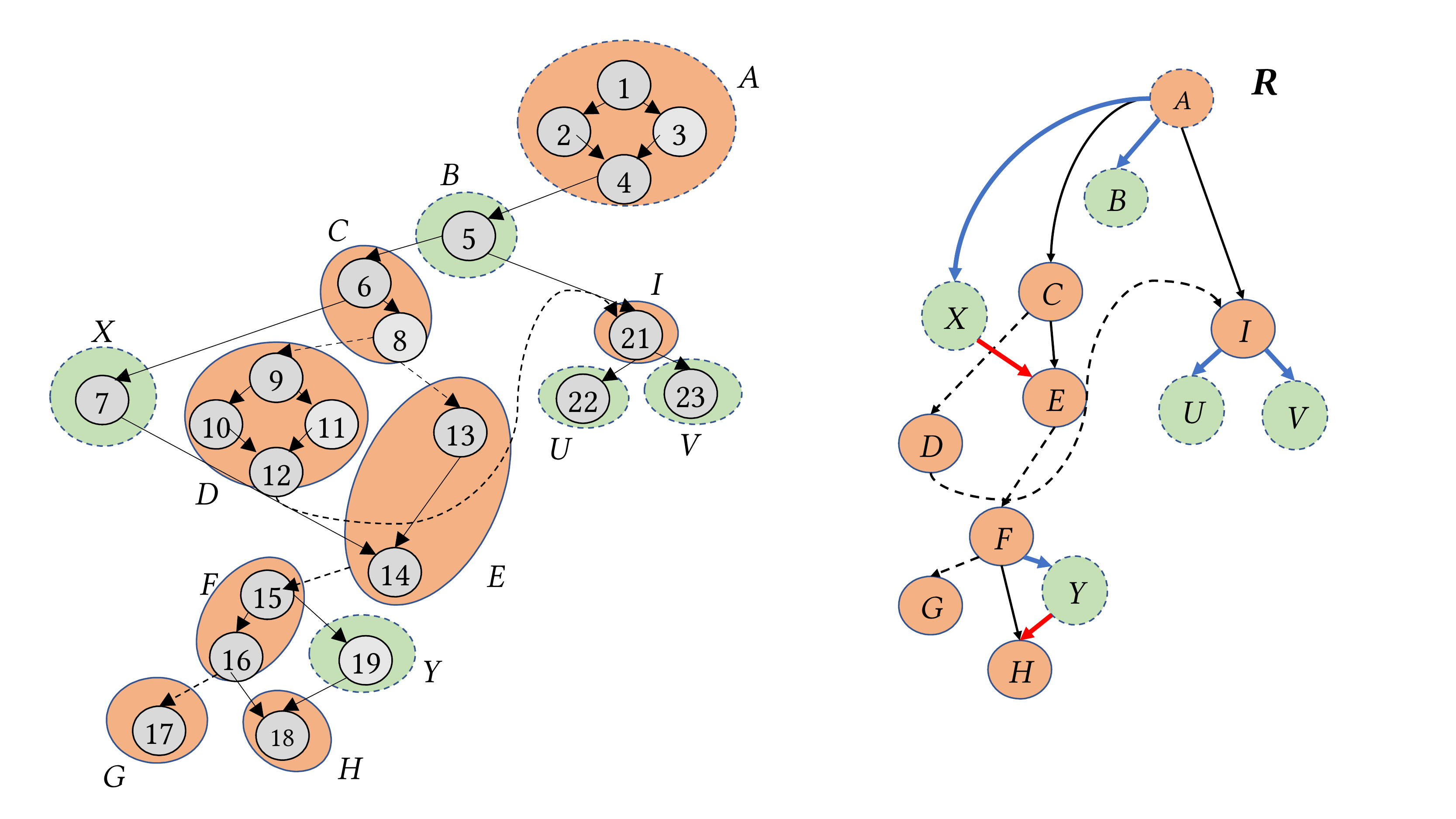}
  \end{center}
\end{minipage}
\begin{minipage}{\columnwidth}
  \flushright
  \begin{center}
    \includegraphics[height=1.6in]{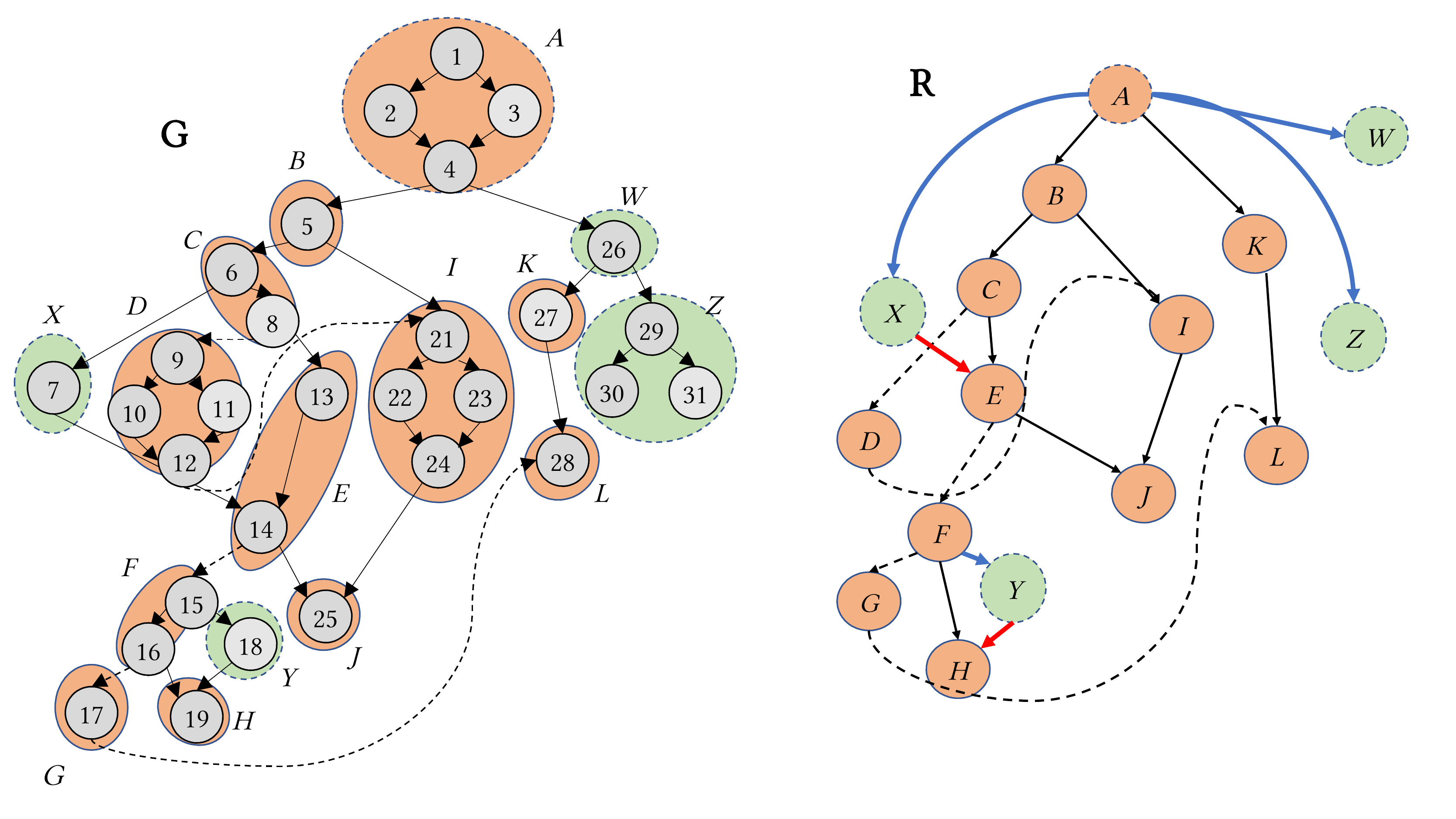}
  \end{center}
\end{minipage}
  \caption{The example from Figure~\ref{fig:fullExample} when it has
    only executed up to node 23 (top) and node 31 (bottom).  Note that
    sets $U$ and $V$ do not yet have attached successors on the left.
  }\figlabel{partialExample}
\end{figure}

We will redefine the terminology a little bit differently than
Section~\ref{sec:structured}.  A node $u$ is a \defn{spawn
  predecessor} of a node $v$ if there is a path from $u$ to $v$ which
consists of only spawn, create and continue edges.  A node $u$ is a
\defn{join predecessor} of $v$ if there is a path from $u$ to $v$ that
consists of only join and continue edges.  Notice the asymmetry here
--- a creator node is the corresponding future's spawn successor, but
the future is not the getter node's join predecessor.  This mimics the
actions of the algorithm on $\dssp$, since \spawn and \createF behave
identically while \sync and \getF do not.  

We can now define operating function and confluence in the manner
identical to Section~\ref{sec:structured} and it should be clear that
Property~\ref{prop:eager} and Lemma~\ref{lem:SPStuff} still hold.  

First consider the first part of the query, where we just check if $u$
is in the $S$-bag.  As shown in \lireftwo{dsspspawn}{dsspcreate}, when
a function $F$ calls either $\spawn(G)$ of $\createF(G)$, the
algorithm exactly mimics \MultiBags, simply creating an $S$-bag for
$G$ containing the first node of $G$.  Similarly, on function $G$'s
return, $S_G$ becomes $P_G$.  Finally, when a function $F$ calls \sync
on function $G$, it mimics \getF in structured future and $P_G$ is
unioned in $S_F$.  The only difference is that nothing happens on
$\getF$ to $\dssp$. Therefore:

\newcommand{\dssplemma}{ Consider the
  currently executing strand $v$ and a previously executed strand $u$.
  The following is true: (a) If $u$ is in an $S$ bag, then
  $u \prec v$; further more, there is a path from $u$ to $v$
  consisting of only spawn, create, join and continue edges.  (b) If
  there is a path from $u$ to $v$ consisting of only spawn, create,
  join and continue edges (no get edges), then $u$ is in an $S$ bag.}

\begin{lemma}
\dssplemma
\lemlabel{dssp}
\end{lemma}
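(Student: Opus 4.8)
The plan is to exploit the observation, already made in the surrounding text, that the operations performed on $\dssp$ are an exact transcription of \MultiBags once we identify \spawn and \createF with the single ``spawn'' step that allocates a fresh $S$-bag, identify \sync with the ``get'' step that unions $P_G$ into $S_F$, and treat \getF as a no-op on $\dssp$. Under this identification the redefined notions of spawn-predecessor (paths of spawn, create, and continue edges) and join-predecessor (paths of join and continue edges, with get edges deliberately excluded) correspond exactly to the spawn- and join-predecessor relations of \secref{structured}. Consequently the bag-maintenance invariant of Lemma~\ref{lem:SPStuff} and the eager-execution Property~\ref{prop:eager} hold verbatim for $\dssp$, together with the definitions of active, confluent, and non-confluent functions and of the operating function; I would simply invoke these rather than re-prove them.

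First I would dispatch part~(a). Suppose $u$ lies in an $S$-bag. By Lemma~\ref{lem:SPStuff}, $u$ cannot be in a $P$-bag, so the function $F$ containing $u$ is not non-confluent; hence $F$'s operating function $G$ is active (with $G=F$ when $F$ is itself active). By the definition of the operating function there is a strand $w\in G$ that is a join-successor of every strand of $F$, and in particular a join-predecessor path (join and continue edges only) runs from $u$ to $w$. Because $G$ is active and $v$ is the currently executing strand, Property~\ref{prop:eager} makes $w$ a spawn-predecessor of $v$, i.e. a spawn/create/continue path runs from $w$ to $v$. Concatenating the two yields a path from $u$ to $v$ using only spawn, create, join, and continue edges, which is in particular a path in $\gfull$ and therefore witnesses $u\prec v$.

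For part~(b) the argument runs in reverse, and this is where the real work sits. Given a path from $u$ to $v$ built from spawn, create, join, and continue edges (no get edges), I would first establish a structural ``canonical-form'' statement analogous to Lemma~\ref{lem:path}: any such path can be rewritten so that a (possibly empty) prefix of join and continue edges is followed by a (possibly empty) suffix of spawn, create, and continue edges, with no join edge ever following a spawn or create edge. This produces an intermediate node $w$ with $u$ a join-predecessor of $w$ and $w$ a spawn-predecessor of $v$. Since $w$ is a spawn-predecessor of the executing strand $v$, Property~\ref{prop:eager} forces $w$'s function to be active; that active function is then the operating function of $F$, so $F$ is confluent and, by Lemma~\ref{lem:SPStuff}, all strands of $F$ --- including $u$ --- reside in an $S$-bag.

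The main obstacle is precisely the canonical-form statement invoked in part~(b): the analog of Lemma~\ref{lem:path} for the edge set that retains create edges (as spawn-type) but discards get edges. The cleanest route I see is to mirror the structured proof by restricting $\gfull$ to spawn/create/join/continue edges and observing that this restricted dag is itself a structured-futures dag, in which spawned children behave like futures that are eventually ``gotten'' (via \sync) while created futures behave like single-touch futures that simply are never gotten. I would then induct on the futures in a canonical order (Lemma~\ref{lem:canonical}), checking in the inductive step that inserting a new future introduces no spawn-then-join pattern and preserves uniqueness of the decomposition, exactly as in the proof of Lemma~\ref{lem:path}. The subtlety to watch is the asymmetry deliberately introduced here --- a creator is a spawn-predecessor of its future, but the future's last strand is \emph{not} a join-predecessor of the getter --- so I must confirm that excluding get edges does not discard reachability that $\dssp$ is responsible for; this is legitimate because get-induced reachability is the job of $\dsnsp$ and $\cal R$, not of $\dssp$, and is established separately.
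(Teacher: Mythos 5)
Your proposal is correct, and part (a) matches the paper's argument almost exactly (Lemma~\ref{lem:SPStuff} plus Property~\ref{prop:eager}; you additionally spell out the path construction that justifies the ``furthermore'' clause, which the paper leaves implicit). For part (b), however, you take a genuinely different route. The paper localizes the argument to the single SP dag $X = \spDag(u)$: since the given path has no get edges, it cannot use a create edge before leaving $X$ for good, so letting $w$ be the last node of $X$ on the path, $u$ precedes $w$ using SP edges only; the paper then invokes Theorem~\ref{thm:correctness} as a black box (SP dags being a special case of structured futures) to conclude $u$ was in an $S$-bag when $w$ executed, and finishes with a depth-first-eager ``freezing'' argument --- after $w$, control stays inside the created subtree containing $v$, so no strand of $X$ executes and $u$ cannot migrate to a $P$-bag. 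You instead prove a global canonical-form decomposition of get-free paths in $\gfull$ (an analog of Lemma~\ref{lem:path} for the spawn/create/join/continue edge set), obtained by viewing the get-free restriction of $\gfull$ as itself a structured-futures dag --- spawned children become futures gotten at the \sync, real futures become never-gotten single-touch futures, with creator-precedes-getter holding via each function's own continue chain --- and then rerun the confluence argument of Theorem~\ref{thm:correctness} verbatim. Both are sound and both ultimately lean on eager execution (the paper directly, you through Property~\ref{prop:eager}). The paper's version is shorter given Theorem~\ref{thm:correctness}, but requires the somewhat delicate claim that no strand of $X$ executes between $w$ and $v$; your version costs an extension of the structural lemma but then handles paths crossing many SP dags uniformly, and, pushed to its logical end, it shows something cleaner: the $\dssp$ operations of \MultiBagsPlus are literally \MultiBags run on a derived structured program whose dag is $\gfull$ minus its get edges, so both directions of the lemma are a single application of Theorem~\ref{thm:correctness} to that program.
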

\begin{proof}
  Since moves to $S$-bags happen in a more restricted setting here
  than in \secref{structured}, statement (a) follows somewhat
  intuitively from Section~\ref{sec:structured}.  More formally, from
  Lemma~\ref{lem:SPStuff}, if $u$ in an $S$ bag, then either the
  function containing $u$ is active, or $u$'s operating function is
  active.  Therefore, either $u$ or some join successor of $u$ is part
  of an active function --- therefore, $u$ is sequentially before $v$
  by Property~\ref{prop:eager}.

  For (b), say
$X=\spDag(u)$ and let $w$ be the last node in $X$ on the path from $u$
to $v$ (if $v$ is in $X$, then $v=w$).  Since the path from $u$ to $w$
does not contain any \getF edges, this path can not go through any
\createF edges either (since a \createF edge will take execution out
of $X$ and one would need a \getF edge to come back).  Therefore,
since SP dags are a special case of structure futures,
and $u$ is sequentially before $w$, $u$ must have been in some $S$ bag
when $w$ executed (from Theorem~\ref{thm:correctness}).  After $w$
executed, no strand of $X$ was executed, therefore $u$ can not have
moved to a $P$ bag since then.
\end{proof}

We now consider the case when $\dssp$ returns false and use $\dsnsp$.
Here, we must prove that if $u$ is in a $P$ bag when $v$ executes
(otherwise, the query returned), then $u \prec_\full v$ iff there is a
path in $\cal R$ from $\attsucc{u}$ to $\attpred{v}$.  We have to make
a few observations in order to see this.

The following two lemmas state some structural properties of attached
and unattached sets and can be proven by inducting on set unions done
by the algorithm.

\newcommand{\unattachedStructProps}{Each unattached set consists of
  the nodes belonging to a maximal series-parallel subdag $Q$ such
  that (1) all nodes in the subdag have been executed, (2) there are
  no \createF or \getF edges in $\gfull$ incident on nodes in $Q$, (3) there are
  at most two arcs in $\gfull$ incident on nodes in $Q$---one directed
  towards $Q$'s source, and one directed out from $Q$'s sink.
}
\begin{lemma}
\unattachedStructProps
\lemlabel{unattachedStruct}
\end{lemma}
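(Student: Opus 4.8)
The plan is to induct on the sequence of operations that \MultiBagsPlus performs on $\dsnsp$, treating properties (1)--(3) as a loop invariant over the collection of unattached sets. The only places an unattached set can be created or modified are: the \spawn handler, which creates the two singleton unattached sets $U_v$ and $U_w$; the $\proc{Attachify}$ calls inside the \createF and \getF handlers, which can only move a set out of the unattached collection; and the three branches of the \sync handler, which perform all of the interesting unions. For the base case, before any parallel construct executes, the computation consists of the single source node, which is placed in an attached set; hence there are no unattached sets and the invariant holds vacuously.

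For the inductive step I would first dispatch the easy handlers. When a \spawn of $G$ executes, the singletons $U_v=\{v\}$ and $U_w=\{w\}$ are created, each a degenerate (single-node) series-parallel subdag whose only incident edge is an SP edge---a continue edge into $v$ and a spawn edge into $w$---so (1)--(3) hold trivially, and each singleton is maximal because its lone successor has not yet executed. The \createF and \getF handlers never union into an unattached set; instead they call $\proc{Attachify}$ on the endpoint $u$ of the new non-SP edge, converting $u$'s set to attached (or leaving it attached if it already was). This is exactly what preserves property~(2): every node on which a \createF or \getF edge is incident is forced into an attached set, so no surviving unattached set can acquire an incident non-SP edge. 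The remaining unattached sets are unchanged in content, and the new non-SP edge cannot extend any of them since its endpoint now lies in an attached set, so maximality is preserved as well.

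The substantive case is the \sync handler, split along its three branches. When neither child set is attached (\lirefs{unattachedUnionStart}{unattachedUnionEnd}), the algorithm unions the fork strand $f$, the two child sets reached from $t_1$ and $t_2$, and a fresh set for the join $j$ into a single set; by the inductive hypothesis each child set is a maximal SP subdag with no incident non-SP edge, so their parallel composition together with $f$ and $j$ is again a series-parallel subdag, all of whose nodes have now executed, with boundary arcs only into $f$ and out of $j$, giving (1)--(3). When both children are attached (\lirefs{twoAttachedSync}{arcChildrenSink2}), the handler only creates attached sets and possibly Attachifies $f$, so no unattached set grows and the surviving unattached sets keep their properties as before. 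In the mixed case (\lirefs{oneAttachedSync}{setAttSucc}), the unattached child $t_u$ is left as an unattached set with its nodes unchanged and is merely assigned an attached successor; its incident arcs are still the entering spawn edge and the single join edge leaving its sink toward $j$, so property~(3) continues to hold with at most two arcs.

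The main obstacle I anticipate is the maximality clause and reconciling it with property~(3) across the \sync cases. Establishing maximality means showing that no unattached set $Q$ can be enlarged to a bigger SP subdag still satisfying (1)--(3): each of $Q$'s at most two boundary arcs must lead either to a node that has not yet executed, or to a node carrying an incident non-SP edge (hence forced attached by the argument above), or across a fork/join whose other side is attached---precisely the mixed \sync case, where extending $t_u$ across $j$ would pull in the attached sibling $t_a$ and thereby introduce a non-SP edge. Carrying the ``at most two boundary arcs'' bookkeeping through the parallel composition in the first branch is the fiddly part: I would argue that composing two subdags, each with a single free outgoing arc to the not-yet-created join, yields after adding and unioning in the fork and join a subdag whose only free arcs are again the single entry into $f$ and the single exit out of $j$, matching the recursive SP structure and keeping the count at two.
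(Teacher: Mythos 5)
Your proposal takes essentially the same approach as the paper's proof: an induction on the algorithm's operations on $\dsnsp$, observing that unattached sets are born as singletons, that every endpoint of a \createF or \getF edge is immediately forced into an attached set, and that the only union that leaves a set unattached is the all-unattached \sync branch, which merges a complete parallel composition with boundary arcs only at its source and sink. The paper's own proof is a two-sentence sketch of exactly this induction, so your case analysis (including the maximality and two-arc bookkeeping) is a faithful, more detailed rendering of the same argument.
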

\begin{proof}
  Induct on the construction.  When we create an unattached set, it
  consists of a single node.  When we merge sets without marking them
  attached, they are always complete series parallel dags
  (\lirefs{unattachedUnionStart}{unattachedUnionEnd}) with no incident non-SP edges.  
\end{proof}

\newcommand{\attachedStructProps}{ The nodes in each attached set
  induce a series-parallel subdag $Q$ such that if we contract all
  maximal series parallel dags within it, it forms a chain.(1) if there
  is a \getF or a \createF edge directed towards a node in $Q$, then it
  is directed towards the source, (2) if there is a \getF or \createF
  edge directed from a node in $Q$, then it originates on the sink.
  Unlike the unattached sets, attached sets do not necessarily match
  the series-parallel decomposition of $\gfull$---an attached set may
  have many incoming or outgoing edges.   }

\begin{lemma}
\attachedStructProps
  \lemlabel{attachedStruct}
\end{lemma}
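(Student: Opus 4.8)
The plan is to prove Lemma~\ref{lem:attachedStruct} by induction on the sequence of disjoint-set operations that \MultiBagsPlus performs (Figure~\ref{code:maintainReach}), mirroring the proof of Lemma~\ref{lem:unattachedStruct}. An attached set is born as a single node and is thereafter changed only by (i) being Attachified (converted from an unattached set), or (ii) absorbing a fork node and a join node in the one-attached \sync branch. For each step I verify the three invariants: that the set induces a series-parallel subdag whose maximal SP pieces form a chain, that every incident incoming \createF/\getF edge lands on the chain's source, and that every outgoing one originates at the sink. The base case is the single source node: a one-element chain with coincident source and sink and no incident non-SP edge. Every other freshly created single-node set ($A_v$, $A_w$ at a \createF/\getF in \lirefs{AttachifyCreate1}{arcCreatorCont}, and $A_j$ at a both-attached \sync) is likewise a one-element chain; the only non-SP edge present at birth is the incoming create edge into $A_w$ (at a \createF) or the get edge into $A_v$ (at a \getF), which targets the lone node, i.e.\ the source, satisfying (1). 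For Attachify, Lemma~\ref{lem:unattachedStruct} already gives that the converted set is a maximal SP subdag with at most one incoming and one outgoing non-SP edge incident at its source and sink; marking it attached changes neither its nodes nor its incident edges, so a one-element chain and both edge conditions are inherited verbatim.

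The substantive inductive step is the one-attached \sync branch (\lirefs{oneAttachedSync}{setAttSucc}): the attached set on the joined branch absorbs the fork strand $f$ by \Union (prepending) and the new join strand $j$ by \Union (appending). The crux is to show that these two unions glue $f$ and $j$ onto the two \emph{ends} of the existing chain $c_1\to\cdots\to c_\ell$, producing $\{f\}\to c_1\to\cdots\to c_\ell\to\{j\}$ (or merging $f,j$ into the adjacent end pieces), rather than into the interior. Granting this, the result is still series-parallel (a series composition with single-node ends) and still a chain, and the new source is $f$ while the new sink is $j$; any non-SP edge entering the set must now do so at $f$ and any leaving it at $j$, since the former endpoints become internal, so (1) and (2) are restored. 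This also accounts for the ``many edges'' remark: multi-touch futures place several outgoing get edges on a future's sink and several creators/getters may arc into one source in $\cal R$, yet all remain at the source/sink because the internal maximal SP pieces are exactly the subdags carrying no incident non-SP edges.

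The main obstacle is this gluing claim, together with the dual fact used implicitly in the single-node and Attachify cases: that whenever the algorithm attaches an outgoing create/get edge from a strand $u$ (or records an incoming one), $u$ is the sink (resp.\ source) of its set. I would discharge both by carrying an auxiliary ``frontier-at-the-sink'' invariant through the same induction, leaning on the depth-first eager order (Property~\ref{prop:eager}) and on the chain shape already established: the currently executing strand and each newly created join node always sit at the sink end of their attached set, while fork nodes and creator/getter targets enter at the source end. Pinning down this invariant precisely and checking it against every branch of Figure~\ref{code:maintainReach} is what makes the edge conditions hold exactly when a non-SP edge is attached, and is the part of the argument that requires the most care.
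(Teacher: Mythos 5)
Your overall strategy---inducting on the growth of the disjoint sets, with Lemma~\ref{lem:unattachedStruct} supplying the structure of whatever unattached set gets absorbed---is the same as the paper's, but your enumeration of the ways an attached set can change has two concrete errors, and they leave the induction with real holes. First, you miss the both-unattached \sync branch (\lirefs{unattachedUnionStart}{unattachedUnionEnd}): when neither $\find{\dsnsp, t_1}$ nor $\find{\dsnsp, t_2}$ is attached, the algorithm unions both of those sets \emph{and} the new join node into the set containing the spawn strand $f$, and that set can perfectly well be attached (e.g., $f$ sits in a singleton attached set created at an earlier \createF). So an attached set can also grow by swallowing an entire parallel composition at its sink end; this is the first case the paper's proof handles, using Lemma~\ref{lem:unattachedStruct} to argue that the absorbed nodes carry no incident non-SP edges, so the chain shape and the edge conditions survive.

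Second, in the one-attached branch you picture a single chain $c_1\to\cdots\to c_\ell$ receiving $f$ at its source and $j$ at its sink. That object does not exist in general: $\find{\dsnsp, s_a}$ and $\find{\dsnsp, t_a}$ are usually \emph{different} attached sets, because the attached branch is itself fragmented into several attached and unattached sets by the non-SP edges in its interior. What actually happens is two independent unions---\liref{singleUnionSrc} prepends the whole unattached set containing $f$ (a complete SP subdag by Lemma~\ref{lem:unattachedStruct}, not just the strand $f$) onto $\find{\dsnsp, s_a}$, and \liref{singleUnionSync} appends $j$ onto $\find{\dsnsp, t_a}$---and each union must be checked against the invariant for its own set. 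The paper discharges each one directly from the inductive hypothesis plus a local observation: $s_a$ is the source of its set and cannot have an incoming \createF or \getF edge because its immediate predecessor $f$ is a spawn strand rather than a creator, and symmetrically for $t_a$ and $j$. Your proposal instead defers exactly this point to an unproven ``frontier-at-the-sink'' invariant, so the crux of the lemma---why non-SP edges can only ever be incident on the current source and sink of the affected set---is left open rather than proved.
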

\begin{proof}
  We induct on the growth of sets.  The sets always start by
  containing single nodes.  We union sets at two places: On
  \lirefs{unattachedUnionStart}{unattachedUnionEnd}, we union an
  entire series-parallel subdag together.  In this case, the sets
  containing $t_1$ and $t_2$ are unattached and have no incident
  non-SP edges by Lemma~\ref{lem:unattachedStruct}.  Therefore, only
  the set containing $f$ may be attached. By induction, it can only
  have an incident non-SP edge at its source node.  Before the unions,
  Its sync node is $f$ and it clearly does not have an outgoing non-SP
  arc since it is a spawn node, not a \createF node.  

  On \liref{singleUnionSrc}, we union the set which contains a spawn
  node $f$ with a set containing one of spawn's successors $s_a$.  By
  induction, the set containing $f$ only has an incident non-SP edge 
  towards its source.  Similarly, the set containing $s_a$ can only
  have a non-SP edge directed away from its sink node (again $s_a$ is
  its source node and it does not have a non-SP edge directed towards
  it since its previous instruction $f$ is not a \getF or \sync call).

  Similarly, on \liref{singleUnionSync}, we union a sync node with a
  set containing sync nodes successor.  Again, we induct in a similar manner.
\end{proof}

We can see that the above lemmas are true for examples shown in
Figures~\ref{fig:fullExample} and~\ref{fig:partialExample}.  From the
above two lemmas, we can see that if $u$ and $v$ are in the same
attached or unattached set, and $u \prec_\full v$ then $u\prec_{SP} v$
since $u$ and $v$ are in the same SP-dag and there are no incoming or
outgoing non-SP edges to nodes within the dag except at the source and
sink.  Therefore, $\dssp$ will answer the query between them
correctly.

The following key lemma says that the relationship between nodes in
two different attached sets is always correctly represented in
$\cal R$.  \newcommand{\RcorrectLemma}{Consider nodes $u$ and $v$,
  where $A_u = \find{\dsnsp, u}$ and $A_v = \find{\dsnsp, v}$ are
  distinct attached sets.  Then $u \prec_\full v$ iff
  $\find{\dsnsp,u} \prec_{\cal R} \find{\dsnsp, v}$.}

\begin{lemma}
\RcorrectLemma
\lemlabel{RCorrect}
\end{lemma}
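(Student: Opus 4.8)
The plan is to prove the two implications of the ``iff'' separately, leaning on the structural Lemmas~\ref{lem:unattachedStruct} and~\ref{lem:attachedStruct} together with Lemma~\ref{lem:dssp}, and on the fact that, by Lemma~\ref{lem:attachedStruct}, each attached set is a single chain of maximal series-parallel subdags. Such a set therefore has a unique source and a unique sink, its source reaches every one of its nodes, and every one of its nodes reaches its sink. Consequently reachability between two distinct attached sets is \emph{node-independent}: $a \prec_\full b$ either holds for all $a\in A_u, b\in A_v$ or for none, so it suffices to track reachability at the granularity of sink-to-source. Establishing this normalization first is the bookkeeping that makes both directions go through.

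For soundness ($\find{\dsnsp,u} \prec_{\cal R} \find{\dsnsp,v} \Rightarrow u \prec_\full v$) I would prove the single-arc invariant: whenever the algorithm adds an arc $(A,B)$ to $\cal R$, there is an actual edge of $\gfull$ from a node of $A$ to a node of $B$. This is a finite inspection of the places where arcs are created --- the two create arcs, the two get arcs, and the fork/join arcs of the two-attached \sync case --- plus the arc $(\attpred{U_u},U_u)$ added inside $\proc{Attachify}$. For the last one I would invoke, or prove as a companion invariant, that an unattached set's attached predecessor always reaches it in $\gfull$. By Lemma~\ref{lem:attachedStruct} each such edge runs from the sink of $A$ to the source of $B$, so since $\cal R$ is maintained as a full transitive closure, any $\cal R$-path $A_u \prec_{\cal R} A_v$ composes, via the chain structure of attached sets, into a genuine $\gfull$-path from $u$ (which reaches $A_u$'s sink) to $v$ (reachable from $A_v$'s source).

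For completeness ($u \prec_\full v \Rightarrow \find{\dsnsp,u}\prec_{\cal R}\find{\dsnsp,v}$) I would take any $\gfull$-path $P$ from $u$ to $v$ and first argue it can be rerouted to visit only attached sets. By Lemma~\ref{lem:unattachedStruct} any unattached set $U$ that $P$ enters is a maximal series-parallel subdag with a single arc into its source, a single arc out of its sink, and no incident non-SP edges; I would show that, at the moment $v$ executes and $A_u,A_v$ are both attached, such a $U$ is never an essential bridge between two distinct attached sets. Tracing how unattached sets are formed and consumed at \sync shows $U$ is either already absorbed --- at a \sync whose fork node already lies in an attached set, the two children and the join node are unioned into that attached set --- or it is one branch of a fork whose sibling branch, fork, and join are all attached, in which case $P$ detours through the attached sibling. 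Either way $P$ becomes an attached-only path, and each of its set-crossing edges is, by the same case analysis used for soundness, exactly an arc recorded in $\cal R$, so the rerouted path projects onto an $\cal R$-path witnessing $A_u\prec_{\cal R}A_v$.

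I expect the rerouting step to be the main obstacle, and closely tied to it, the timing of $\attpred$ and $\attsucc$ updates relative to $\proc{Attachify}$. The delicate point is that a fork edge $f\to s$ between two sets that are both attached by the time of the \sync is not recorded at the \sync itself; it must have been recorded earlier as the attachify arc $(\attpred{s},s)$, which forces me to track exactly what $\attpred{s}$ named when $s$ was attachified and to argue it still designates the right attached set. I would therefore formulate and carry, as a strengthened induction hypothesis over the execution, precise invariants stating that $\attpred{U}$ (resp.\ $\attsucc{U}$) is an attached set through which every attached predecessor (resp.\ successor) of $U$ must route in $\cal R$; the two directions above then fall out of that invariant, and the bulk of the work is verifying its preservation across the spawn, create, return, get, and the three \sync cases.
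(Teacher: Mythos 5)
Your overall route matches the paper's: soundness by induction on the order in which arcs are added to $\cal R$, checking that each explicit arc reflects true precedence in $\gfull$ (with Lemma~\ref{lem:attPred} covering the $\proc{Attachify}$ arcs), and completeness by rerouting a $\gfull$-path so that it crosses only attached sets whose crossings are recorded arcs --- your ``detour through the attached sibling'' is exactly the paper's treatment of the one-attached \sync case, and your hop-by-hop projection across \createF edges is its induction on the path using \liref{arcCreatorFut}. So the strategy is not in dispute; the problem is the step you designate as the foundation.

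The genuine gap is the ``normalization'' you claim to establish first: that reachability between distinct attached sets is node-independent, justified only by the chain structure of Lemma~\ref{lem:attachedStruct}. That lemma restricts only \createF and \getF edges to the source/sink of an attached set; it explicitly warns that an attached set ``may have many incoming or outgoing edges,'' and SP edges need not be incident on the sink or the source. Concretely, after the source union at \liref{singleUnionSrc} the fork node $f$ is prepended as the \emph{new source} of an attached set and carries an outgoing SP edge into the unattached sibling subdag --- a path leaving the set from its source, which your sink-to-source composition cannot traverse. Moreover, attached sets grow at both ends during execution (\liref{singleUnionSrc} prepends $f$; \liref{singleUnionSync} appends the join $j$), so an arc of $\cal R$ that ran sink-to-source when inserted need not remain so; ``every node of $A_u$ reaches every node of $A_v$'' is precisely the property that set growth threatens. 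Node-independence is thus equivalent in strength to the lemma itself and cannot be dispatched as upfront bookkeeping: it has to be carried as an invariant through the induction, which is how the paper proceeds --- e.g., for \liref{singleUnionSrc} it invokes suffix-completeness (Lemma~\ref{lem:suffixAttachedStructure}) to argue that the absorbed set has a single incoming edge of $\gfull$, from $f$, so anything that preceded the old set also precedes the prepended node. Your closing paragraph gestures at the right repair (strengthened invariants verified across the spawn, create, get, and three \sync cases), but as written the plan is circular: both directions lean on a normalization whose one-sentence justification (``source reaches every node, every node reaches the sink'') does not survive the fork and join unions.
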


The proof of the lemma is complicated and depends on many structural
properties of the sets; however, there are two main intuitions.
First, if we only consider only attached sets containing nodes of a
single series parallel dag $D$ (ignoring all \createF and \getF edges)
then the attached sets form the same series parallel relationship as
$D$.  This can be seen from in Figure~\ref{fig:fullExample} where
$A, B, C, E, I, J, K, L, M$ and $N$ form a series parallel dag and
induce the same dependencies as the original dag.  Therefore, if the
path from $u$ to $v$ doesn't contain any \createF or \getF edges, then
we get the correct relationship.

Second, say there is a path $p$ from $u$ to $v$ which contains
\createF and \getF edges.  Walk along the path from $u$ until we
encounter the first such edge.  Let $w$ the source of this edge and
$x$ be the destination.  If both $u$ and $w$ are in attached sets,
then there is a path from $u$ to $w$ in $\cal R$ (from the previous
paragraph).  In addition, there is an edge from $w$ to $x$ in $\cal R$
(since $w$ is the creator node and $x$ is the first node of a future
and we explicitly add this edge on \liref{arcCreatorFut}).  Therefore,
there is a path from $u$ to $x$ in $\cal R$.  We can then find the
next \createF or \getF edge on this path and continue with the
induction.  Again, we can see this in Figure~\ref{fig:fullExample}
where there is a path from $C$ to $I$ via $D$ and also a path from $C$
to $L$ via $E, F$ and $G$.

In order to prove Lemma~\ref{lem:RCorrect} formally, we will define
two kinds of attached sets.  Intuitively, \defn{prefix-complete sets}
are those where the first node added to the set is the source node of
the set and the set can grow in the forward direction; and
\defn{suffix-complete sets} are those where the first node added is
the sink node of the set and the set can grow in the backward
direction by unions with unattached sets.

We first understand the structure of prefix-complete attached sets.
Consider an attached set $A$ where the first node added to $u$.  $A$
is prefix complete if $u$ is either (1) the strand immediately after a
\getF node (\liref{AttachifyGet2}); or (2) $u$ the first node in the
continuation of a creator node (\liref{AttachifyCreate3}); or (3) $u$
is the first node of a future function (\liref{AttachifyCreate2}); or
(4) a sync node where both its subdags that are joining are attached
(\liref{joinAttachify}).

\begin{lemma}
If $u$ is the first node added to a prefix-complete attached set $A$, then
for all other nodes $v \in A$, we have $u \prec v$.  
  In addition, consider a node $v \in A$ where $A$ is a prefix-complete attached
  set.  (1) If there is an edge $(v,w)$ in $\gfull$ such that $w$ is not in
  $A$, then $v$ must be the sink node of $A$.  (2) If there
  is any edge $(w,v)$ in $\gfull$ where $w$ is not in $A$, then $v$ is
  either the source or $w$ is in an unattached set.
  \lemlabel{prefixAttachedSetStructure}
\end{lemma}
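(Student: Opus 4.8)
The plan is to prove all three assertions simultaneously by induction on the sequence of $\proc{Make-Set}$ and $\proc{Union}$ operations that the algorithm performs on $\dsnsp$, maintaining the stronger invariant that the first node $u$ of a prefix-complete set $A$ is the unique \emph{source} of the series-parallel subdag induced by $A$, that $A$ only ever acquires new nodes at its \emph{sink} end, and that conditions (1) and (2) hold throughout. Because the growth of prefix-complete sets is entangled with that of suffix-complete sets (those whose first node is the sink and which grow backwards at the source), I would carry the two statements together as a single mutual induction, using the companion structural facts, Lemma~\ref{lem:unattachedStruct} and Lemma~\ref{lem:attachedStruct}, to describe what the unattached and attached sets look like at each step.

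For the base case, a prefix-complete set is born as a singleton $\{u\}$ at exactly one of the four $\proc{Make-Set}$ points (\liref{AttachifyGet2}, \liref{AttachifyCreate2}, \liref{AttachifyCreate3}, or \liref{joinAttachify}); with $A=\{u\}$ the node $u$ is simultaneously source and sink, so the $u\prec v$ claim is vacuous and conditions (1) and (2) hold because $u$ is the unique endpoint of every incident edge. For the inductive step the key preliminary observation is that a prefix-complete set never appears as the \emph{second} argument of a $\proc{Union}$: every second argument in Figure~\ref{code:maintainReach} is either a freshly created singleton ($\proc{Make-Set}(j)$) or an unattached set ($t_1,t_2,f$), so a prefix-complete set keeps its identity, and hence its first node $u$, forever. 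Consequently $A$ can only be enlarged when it is the surviving set, which happens in exactly two forward-growth situations: appending the single sync node $j$ at \liref{singleUnionSync}, and---when the representative strand $f$ of $A$ sits immediately before a fork both of whose subdags are unattached---absorbing an entire fork-join block via \lirefs{unattachedUnionStart}{unattachedUnionEnd}.

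In both forward-growth cases the three assertions are preserved by a direct check. Every newly absorbed node lies strictly after a node already in $A$ (after $t_a$ via a join edge in the first case, after $f$ via the fork in the second), so by the induction hypothesis $u$ precedes it, giving the $u\prec v$ claim. For condition (1), the only node of $A$ previously carrying an edge out of $A$ was its old sink, whose single outgoing edge (the join edge to $j$, respectively the two fork edges) becomes internal; crucially this old sink has no outgoing non-SP edge, since a strand that immediately precedes a \sync returns rather than ending in \createF, and $f$ is a \spawn node rather than a \createF node (the same reasoning used in the proof of Lemma~\ref{lem:attachedStruct}), so the new sink $j$---whose only outgoing edge is the continue edge to the next, as-yet-unexecuted strand---is again the unique node with an edge leaving $A$. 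For condition (2), the only fresh boundary edges point into $j$: its two join predecessors are $t_a\in A$ and the \emph{unattached} strand $t_u$ from the other subdag, so the node $w=t_u$ outside $A$ lies in an unattached set exactly as (2) permits, while all older incoming boundary edges are untouched.

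The hard part will be justifying the claim that prefix-complete sets grow \emph{only} forward---equivalently, that the backward-growth union $\proc{Union}(\dsnsp,s_a,f)$ at \liref{singleUnionSrc} never takes a prefix-complete set as its surviving argument. This is where the mutual induction earns its keep: I must show that the set containing the fork-successor $s_a$ is always suffix-complete, so that prepending the predecessor $f$ is consistent with its source-end growth and never corrupts a source-rooted prefix-complete set. Establishing this requires pinning down, from the eager depth-first execution order and the four-case classification of first nodes, that whenever a set's representative is one of the prefix-complete creation points that node is genuinely the source (so no predecessor of it is ever unioned in), and invoking Lemma~\ref{lem:attachedStruct} to confine incident non-SP edges to the source and sink. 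Once this forward-only invariant is in hand, the case analysis above closes the induction.
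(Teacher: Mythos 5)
Your proposal is correct and takes essentially the same approach as the paper: the paper's own proof is a three-sentence sketch that inducts on the growth of a prefix-complete set and simply asserts the very invariant you single out as the hard part, namely that such a set ``will never union with an unattached set that has nodes that precede $u$'' (equivalently, that the backward union at \liref{singleUnionSrc} never has a prefix-complete survivor). Your explicit case analysis of the forward-growth unions is in fact more detailed than what the paper records, so the one step you defer to the mutual induction is precisely the assertion the paper also leaves unproven.
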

\begin{proof}
  The first statement is clear by construction. Any attached set
  constructed in one of the ways described above will never union with
  an unattached set that has nodes that precede $u$.  In particular,
  we can induct on the growth of a prefix-complete set to show all the
  properties described above.  
\end{proof}

Consider an attached set $A$ where the first node added to $u$.  $A$
is suffix-complete if $u$ is either (1) the strand ending with a
\createF instruction
(\liref{AttachifyCreate1}); or (2) the node immediately preceding the
getter node in the SP dag of the getter node
(\liref{AttachifyGet1}); or (3) the spawn node where both
spawned subdags are attached (\liref{spawnAttachify}).

\begin{lemma}
If $u$ is the first node added to a prefix-complete attached set $A$, then
for all other nodes $v \in A$, we have $v \prec u$.  
Consider a node $v \in A$; if there is an edge $(w,v)$ in $\gfull$
such that $w$ is not in $A$, then $v$ must be the source node of $A$.
In addition, if there is any edge $(v,w)$ in $\gfull$ where $w$ is not
in $A$, then $v$ is either the sink or $w$ is in an unattached set.  
\lemlabel{suffixAttachedStructure}
\end{lemma}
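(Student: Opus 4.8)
The plan is to prove this by the same induction-on-set-growth argument used for Lemma~\ref{lem:prefixAttachedSetStructure}, but with the direction of growth reversed; the statement is in fact the mirror image of the prefix-complete lemma, and the opening clause should read ``suffix-complete attached set'' (a prefix-complete set would satisfy $u\prec v$, not $v\prec u$). Where a prefix-complete set is born with its \emph{source} as the first-added node and only ever grows forward by absorbing successor unattached sets, a suffix-complete set is born with its \emph{sink} as the first-added node $u$ and only ever grows backward by absorbing predecessor unattached sets. I will carry as the inductive invariant exactly the three asserted properties: $v\prec u$ for every $v\in A$ (so $u$ stays the sink), every edge entering $A$ from outside lands on $A$'s source, and every edge leaving $A$ departs from $A$'s sink unless its target lies in an unattached set. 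Throughout I will lean on Lemma~\ref{lem:unattachedStruct} to control the shape of any unattached set that gets merged in, and on Property~\ref{prop:eager} and the eager execution order to identify the sink at each creation point.

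First the base cases, one for each way a suffix-complete set is created. In all three --- Attachify on the creator node (\liref{AttachifyCreate1}), Attachify on the node immediately preceding a getter (\liref{AttachifyGet1}), and Attachify on the spawn node when both spawned subdags are attached (\liref{spawnAttachify}) --- the defining node $u$ is, by the eager execution order, the most recently executed node of its own series-parallel dag, so nothing sequentially after $u$ within that dag has yet been placed in the set. Hence $u$ is the sink of the (possibly non-singleton, previously unattached) set being attached, and Lemma~\ref{lem:unattachedStruct} supplies the edge properties directly: the set is a maximal SP subdag carrying at most one incident arc at its source and one at its sink, and merely marking it attached changes none of these incidences. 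This establishes the invariant at the moment of creation.

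For the inductive step I identify the single union that can enlarge a suffix-complete set, namely \liref{singleUnionSrc}, $\proc{Union}(\dsnsp, s_a, f)$, which merges the unattached set containing the fork node $f$ into the attached set containing the child source $s_a$; this is the exact mirror of \liref{singleUnionSync}, which grows \emph{prefix}-complete sets forward and is handled by Lemma~\ref{lem:prefixAttachedSetStructure}. By Lemma~\ref{lem:unattachedStruct} the absorbed set is a complete SP subdag whose sink is $f$, joined to $s_a$ by the spawn edge, so after the union the combined set's source becomes that subdag's source while its sink $u$ is untouched, preserving both ``$u$ is the sink'' and ``incoming edges hit the source.'' The only outgoing edge newly exposed at an interior node is the fork edge from $f$ to its \emph{sibling} child $s_u$, whose set is unattached; this is precisely the escape clause ``or $w$ is in an unattached set'' of property~(3), and it is exactly why that clause is needed. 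Since $f$ is now interior, every other edge out of $A$ still leaves only from the sink, completing the step.

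The main obstacle I anticipate is not any single edge-incidence check above but the bookkeeping that makes the induction well-typed: I must show that at \liref{singleUnionSrc} the set containing $s_a$ is genuinely suffix-complete (so the backward union is legitimate) while the set containing the child sink $t_a$ is prefix-complete --- that is, that when an attached child subdag carries internal non-SP edges its source end and its sink end reside in \emph{distinct} attached sets of opposite orientation --- and that no union ever fuses two oppositely oriented sets or attaches a successor to a suffix-complete set, which would dethrone $u$ as the sink. Pinning down this orientation invariant, essentially a joint induction with Lemma~\ref{lem:prefixAttachedSetStructure} over all the union sites of Figure~\ref{code:maintainReach}, is the delicate part; once it is in hand each case collapses to the routine verification sketched above.
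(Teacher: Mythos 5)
Your proposal is correct and takes essentially the same approach as the paper: the paper's entire proof of this lemma is the single sentence ``Again, we can induct on the growth of sets created by the above methods,'' and your induction --- base cases at the three Attachify creation sites, backward growth occurring only when the fork's unattached set is unioned into the set containing $s_a$, and the joint orientation invariant with Lemma~\ref{lem:prefixAttachedSetStructure} --- is exactly that induction made explicit, including the correct diagnosis that ``prefix-complete'' in the statement is a typo for ``suffix-complete.'' One small repair: your uniform base-case justification that $u$ is ``the most recently executed node of its own series-parallel dag'' fails for the spawn-node case (the fork $f$ executed before both of its subdags, which complete before the sync), though the needed conclusion --- that nothing sequentially after $f$ is in $f$'s set at that moment --- still holds, because $f$'s successors were placed in fresh sets at the spawn and no union prior to $f$'s own sync can merge any of them into the set containing $f$.
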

\begin{proof}
Again, we can induct on the growth of sets created by the above
methods.
\end{proof}


The following is a surprising lemma.  Basically, a suffix-complete
attached set is never an unattached set's attached successor or
predecessor.  
\begin{lemma}
If an attached set $A$ is an attached predecessor or an attached
successor of an unattached set $U$, then $A$ must be prefix-complete.
\lemlabel{attachedSuccPrecPr}
\end{lemma}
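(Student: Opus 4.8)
The plan is to induct on the sequence of operations performed by \MultiBagsPlus, maintaining as an invariant that every attached set which is currently the attached predecessor or attached successor of some unattached set is prefix-complete. Because an attached set never changes its prefix/suffix classification once created---it only grows in its designated direction, by Lemmas~\ref{lem:prefixAttachedSetStructure} and~\ref{lem:suffixAttachedStructure}---and because a set that is the target of such a pointer is never unioned away (two attached sets are never merged, and absorbing an unattached set preserves a set's identity and anchor), it suffices to check the invariant only at the instants these pointers are assigned. Inspecting Figure~\ref{code:maintainReach}, there are only two kinds of such events: an attached predecessor is written when a \spawn creates the fresh unattached sets $U_v$ and $U_w$ (\Liref{setAttPredSpawn} and \Liref{setAttPredCont}), and an attached successor is written only in the one-attached \sync case (\Liref{setAttSucc}). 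The only other attached set ever named is the initial one holding the first strand, which is the global source and is prefix-complete; this serves as the base case.

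Both interesting cases reduce to the same structural observation, which I would isolate first: the node that ``anchors'' the pointer is the \emph{sink} of its attached set at the instant of assignment. For the predecessor write the value is $\attpred{\find{\dsnsp,u}}$ where $u$ ends in the \spawn; if $\find{\dsnsp,u}$ is unattached the value is merely inherited and is prefix-complete by the induction hypothesis, so the only real case is $\find{\dsnsp,u}$ attached, and there $u$ is the sink of that set because its sole successors (the continuation and the child) are being placed in brand-new sets at this very step. For the successor write, after the union on \Liref{singleUnionSync} the value $\find{\dsnsp,j}$ equals $\find{\dsnsp,t_a}$, and $t_a$ (the last strand of the attached joining subdag) is the sink of that set just before $j$ is added, since its only outgoing edge is the join edge to $j$. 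In each case I then split on whether this sink node is the anchor of its set: if it is not the anchor, then Lemma~\ref{lem:suffixAttachedStructure} forces a suffix-complete set to have its sink as its anchor, so the set cannot be suffix-complete and is therefore prefix-complete; if it is the anchor, I rule out suffix-completeness by the node type of the anchor.

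The node-type step is routine for the successor case and is where the main obstacle lives for the predecessor case. For the successor, $t_a$ is a function's final strand, so it cannot end in \createF, \getF, or \spawn, since each of those would force a continuation or a successor strand inside the same function; as the three suffix-complete anchors are exactly the \createF-ending, \getF-ending, and \spawn nodes of \Liref{AttachifyCreate1}, \Liref{AttachifyGet1}, and \Liref{spawnAttachify}, the node $t_a$ is not a suffix anchor, so its set is prefix-complete. For the predecessor the difficulty is that $u$ genuinely does end in \spawn, and a \spawn node \emph{can} anchor a suffix-complete set via \Liref{spawnAttachify}. The key argument I expect to need here is a timing one: \Liref{spawnAttachify} attachifies a fork only while processing the \emph{matching} \sync, which executes strictly after the \spawn currently being handled, so at assignment time $u$ has not yet been turned into a suffix anchor; every other anchor a \spawn-ending strand can carry at this moment (a getter, a creator-continuation, a future source, or a sync node) is prefix-complete. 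Combining the two subcases establishes prefix-completeness of the assigned target in every event, which closes the induction and proves the lemma.
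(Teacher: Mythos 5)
Your proof is correct and takes essentially the same approach as the paper's: an induction over the execution on the pointer-assignment events (the \spawn handler for attached predecessors, the one-attached \sync case for attached successors), with the initial attached set as the base case. The paper's own proof is only a terse sketch of this induction; your sink/anchor case analysis and the timing argument ruling out the \spawn-node suffix anchor (which can only arise when the matching \sync is later processed) supply the details it leaves implicit.
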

\begin{proof}
  A set $A$ is set as an attached successor (on \liref{setAttSucc})
  when a join node $j$ unions with $A$.  In this case, $A$ is clearly
  growing in the forward direction and must be prefix complete.  The
  first node of the entire computation, is by definition,
  prefix-complete.  After this, we can see that attached predecessor
  is always prefix-complete by inducting on the execution.
\end{proof}

We now consider nodes in unattached sets and argue that they have the
correct relationship with their predecessors.  If $u$ is in an
unattached set, then we use $\attpred{u}$ as a proxy for $u$ when we
do the query.  The following lemma argues that it is always correct to
use this proxy.

\newcommand{\attPredLemma}{At any point during the execution, for an
  unattached set $U$, if $A = \attpred{U}$, then for all $u \in A$ and
  $v \in U$, we have $u \prec_{SP} v$. In addition, there is no
  incoming \createF or \getF edge on any node in the path from $u$ to
  $v$ (not including $u$).}
\begin{lemma}
\attPredLemma
\lemlabel{attPred}
\end{lemma}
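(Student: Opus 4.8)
The plan is to induct on the sequence of operations performed by \MultiBagsPlus as the computation unfolds, maintaining both claimed properties as a joint invariant over all currently-existing unattached sets. The first step is to identify the only operations that create an unattached set or assign or modify its attached predecessor. Inspecting Figure~\ref{code:maintainReach}, these are: (i) the initialization of the first strand, which produces an attached set with no predecessor, so no unattached set yet exists and the claim is vacuously true (the base case); (ii) a \spawn, where the continuation $v$ and the child source $w$ are each made into fresh singleton unattached sets whose attached predecessor is copied from the pre-spawn strand $u$ on \lireftwo{setAttPredSpawn}{setAttPredCont}; and (iii) a \sync in which neither joining subdag carries a non-SP edge (\lirefs{unattachedUnionStart}{unattachedUnionEnd}), where several sets are unioned into the set of the fork predecessor $f$ and the result inherits $\attpred{f}$. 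Every other operation either marks a set attached (\proc{Attachify}), assigns an attached \emph{successor}, or leaves the predecessor fields of unattached sets untouched, so it preserves the invariant trivially. Throughout I will freely use that $\attpred{U}$ is always a prefix-complete attached set (Lemma~\ref{lem:attachedSuccPrecPr}), together with the structural descriptions in Lemmas~\ref{lem:unattachedStruct}, \ref{lem:attachedStruct}, and~\ref{lem:prefixAttachedSetStructure}.

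For the creation case (ii), let $A = \attpred{u}$ be the attached predecessor recorded at the \spawn. When $u$ lies in an unattached set, the induction hypothesis gives, for every $a \in A$, that $a \prec_{SP} u$ along a path with no incoming \createF or \getF edge except possibly at $a$. Since $u$ ends with a \spawn, the edges $u \to v$ (continue) and $u \to w$ (spawn) are both SP edges, and neither $v$ nor $w$ has any incoming non-SP edge; appending one of them extends the clean path by a single SP edge, yielding $a \prec_{SP} v$ and $a \prec_{SP} w$. As $U_v = \{v\}$ and $U_w = \{w\}$ at creation, both properties hold for the new sets. The technical sub-case is when $u$ already lies in an attached set, so $A$ is $u$'s own set: here I invoke Lemma~\ref{lem:prefixAttachedSetStructure} to argue that, because the edge into $v$ (respectively $w$) leaves $A$, the node $u$ is the sink of $A$ and every node of $A$ SP-precedes $u$; and I invoke Lemma~\ref{lem:attachedStruct} to note that only $A$'s source can carry an incoming \createF/\getF edge, which lies on a forward path from $a$ only when $a$ is that source (the excluded start).

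For the union case (iii), the result $U'$ collects $f$'s old set together with the two joining subdags (rooted at $t_1, t_2$) and the new join node $j$, and it retains the field $\attpred{f}=:A$ since every \Union is performed into $f$'s set. By Lemma~\ref{lem:unattachedStruct}, $U'$ is a maximal series-parallel subdag with no incident non-SP edges whose source side contains $f$. The plan is to establish $a \prec_{SP} x$, cleanly, for each $a \in A$ and each $x \in U'$ by cases on where $x$ lies: for $x$ in $f$'s old set this is the induction hypothesis (in particular $a \prec_{SP} f$); for $x$ in one of the joined subdags I route the path as $a \prec_{SP} f \to s_i \to \cdots \to x$, all SP edges, carrying no create/get edge because the subdag has none; and for $x = j$ I use the join edges $t_i \to j$. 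If instead $f$'s set happens to be attached, then $U'$ is attached and the lemma is vacuous for it, consistent with the proof of Lemma~\ref{lem:attachedStruct}.

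The main obstacle I anticipate is the union case, specifically establishing the universally quantified statement that \emph{every} $a \in A$ precedes \emph{every} $x \in U'$, rather than merely exhibiting one witnessing path. Because a series-parallel subdag does not make every node precede every other node, each path must be threaded through the sink of $A$ and the source of $U'$: every $a \in A$ reaches $A$'s sink by series-parallel structure, the unique edge leaving $A$ toward $U'$ emanates from that sink by prefix-completeness (Lemma~\ref{lem:prefixAttachedSetStructure}) and lands on $U'$'s source (Lemma~\ref{lem:unattachedStruct}), and every $x \in U'$ is reached from that source. Verifying that this routing stays free of incident \createF/\getF edges as the sets grow through repeated \Union calls is the delicate bookkeeping that the formal induction must carry.
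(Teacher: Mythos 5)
Your case analysis inducts over the wrong set of events. You claim that every operation other than creating an unattached set or writing a predecessor field ``preserves the invariant trivially,'' but your invariant universally quantifies over all $u \in A = \attpred{U}$, and the \emph{membership of $A$ itself} can change after $U$'s pointer is installed: attached sets grow through the unions on \liref{singleUnionSrc} (an attached set absorbs $f$'s unattached set), on \liref{singleUnionSync} (the attached set containing $t_a$ absorbs the join node $j$), and through \lirefs{unattachedUnionStart}{unattachedUnionEnd} when $\find{\dsnsp, f}$ is attached --- the sub-case you declare vacuous. It is vacuous for the resulting set $U'$, but not for other still-live unattached sets whose stored predecessor is exactly $f$'s set, which just gained the two subdags and $j$ as members. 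Every such union re-opens the proof obligation ``all of $A$ cleanly SP-precedes all of $U$'' for the newly absorbed nodes, and this is not automatic: an absorbed join node $j$ lies \emph{after} both subdags of its fork, so there is no cheap reason it precedes anything. The paper's own discussion of \figref{partialExample} shows that this growth genuinely occurs after dependent unattached sets exist (node 26 enters $K$ only after $Z$ has executed); either you must show that a set which is the attached predecessor of a live unattached set never grows, or that its new members still satisfy the invariant --- both require their own induction, which your proof never supplies.

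This is precisely the difficulty the paper's argument is engineered to sidestep, and it is the one idea your proposal is missing. The paper first applies Lemma~\ref{lem:attachedSuccPrecPr} to conclude $A$ is prefix-complete and Lemma~\ref{lem:prefixAttachedSetStructure} to conclude that every edge of $\gfull$ leaving $A$ leaves from $A$'s sink; since every node of $A$ reaches that sink inside $A$, this reduces the universally quantified claim (``\emph{all} of $A$ precedes $v$'') to an existential one (``\emph{some} node of $A$ precedes $v$''), and the existential claim is insensitive to later growth of $A$ because sets only gain members. Only then does it run the easy induction --- which is essentially your cases (ii) and (iii): the predecessor pointer is copied across a single SP edge at a spawn (\lireftwo{setAttPredSpawn}{setAttPredCont}), and a join node is either placed in an attached set or merged with both of its predecessors. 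To repair your proof, either adopt this universal-to-existential reduction up front, or add the three missing union cases and prove the invariant survives them; the latter, done carefully, essentially re-derives the paper's structural lemmas.
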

\begin{proof}
  First, recall that $A$ must be a prefix-complete set
  (Lemma~\ref{lem:attachedSuccPrecPr}), and the only edges leaving in
  $\gfull$ can be from its sink node.  Therefore, if any node $u$ in
  $A$ precedes $v$, then they all must.  

  The fact that some node $v$ in $A$ must precede $u$ can be seen by
  induction.  If $v$ has only one immediate predecessor
  $w \in \spDag(v)$, then $v$'s attached predecessor is set as $w$'s
  attached predecessor (\lireftwo{setAttPredSpawn}{setAttPredCont}).
  If $v$ has two immediate predecessors ($v$ is a join node), then $v$
  is either in an attached set
  (\lireftwo{joinAttachify}{singleUnionSync}) or it is in the same set
  with both its predecessors.
\end{proof}

We can also see this from our examples.  The example of set $X$ is
particularly interesting.  Note that its attached predecessor is not
$C$ --- this is because $C$ was not an attached set when $X$ executed
--- it only became an attached set later.  However, note that all
nodes in $A$ are in fact before all nodes in $X$ and there is no
intervening incident non-SP edge.  Similar observations can be made
for set $Z$ whose attached predecessor is $A$ instead of $K$ since
node 26 was not in $K$ when $Z$ executed (as seen in
Figure~\ref{fig:partialExample} (right)).  However, it is still correct for
$Z$'s attached predecessor to be $A$ since every node in $A$ precedes every
node in $Z$ and there is no non-SP edge on any path from a node in $A$ to a
node in $Z$.  

We can now prove that $\cal R$ has the correct relationships between
attached sets.  

\begin{replemma}{lem:RCorrect}
\RcorrectLemma
\end{replemma}
\begin{proof}
  We first argue that if $\find{\dsnsp,u} \prec_R \find{\dsnsp, v}$,
  then $u \prec_\full v$.  We can induct on order in which edges are
  added in $\cal R$.  Let $A_u = \find{\dsnsp,u}$ and
  $A_v = \find{\dsnsp, v}$.  If $v$ is the first node added to $A_v$
  via $\proc{Attachify}(v)$, then all the incoming edges to $A_v$ are
  from 
  $\attpred{v}$ --- therefore, from Lemma~\ref{lem:attPred}, all paths
  into $A_v$ are correct.  If $A_v$ is directly created
  (\lirefthree{AttachifyCreate3}{AttachifyGet2}{joinAttachify}) then
  we explicitly only add the edges into $A_v$ in $\cal R$ which are correct
  in $\gfull$.  Also, at this point, either $A_v$ has no outgoing
  edges or the correct outgoing edges are explicitly added. 

  When we execute a \sync, we may union unattached sets into attached
  sets, and we must ensure that the property still holds.  (1) When
  both subdags of an SP dag are unattached, they are both unioned into
  the set containing the source and the join is also unioned into the
  same set (\lirefs{unattachedUnionStart}{unattachedUnionEnd}).  In this case, the
  only things that can precede these unattached sets also precede the
  source.  (2) When both subdags are attached, no unions happen.  (3)
  When one subdag is attached, the join node $j$ is added to the set
  that contains one of the predecessor of $j$ (\liref{singleUnionSync})
  --- therefore, it must be the case that a node that precedes the
  predecessor of the join node also precedes $j$.  In addition, no
  node succeeds $j$ yet, so the other direction is trivial.  The
  source node union (\liref{singleUnionSrc}) is the interesting case.
  Here $s$ is unioned into a suffix-complete attached set $s_a$.
  Therefore, $s_a$ has only one incoming edge and by definition, it is
  from $s$.  Therefore, anything that precedes nodes in $s_a$ must
  also precede $s$ and anything that succeeds nodes in $s_a$ must also
  succeed $s$.  

  Now we argue that if $u \prec_\full v$, then either $u$ and $v$ are
  in the same attached set or the set containing $u$ precedes the set
  containing $v$ in $\cal R$.  First observation is that $\cal R$ is always a
  connected dag --- this is easy to see since whenever we add a node
  in $\cal R$, we also add an arc to it.  

  Second, if $u$ and $v$ are in the same SP-dag, and they are not in
  the same attached set, then there is a path from $A_u$ to $A_v$ in
  $\cal R$.  We can again see this by induction on composition of SP
  dags.  In the base case, $u$ and $v$ are in different attached sets,
  we immediately add an edge from $u$ to $v$.  After this, assuming
  there is always an edge from the source to sink in the smaller sp
  dag, we always either merge the entire sp-dag into the same set
  (\lirefs{unattachedUnionStart}{unattachedUnionEnd}), or add edges
  from source to both sub dag sources
  (\lirefs{arcSourceChildren1}{arcSourceChildren2})
  and from subdag sinks to the sink
  (\lirefs{arcChildrenSink1}{arcChildrenSink2}) or we
  merge the source into one of the subdag sources
  (\liref{singleUnionSrc}) and sink to one of the subdag sinks
  (\liref{singleUnionSync}).

  If $u$ and $v$ are in different SP dags, we induct on the path from
  $u$ to $v$.  Let $w$ be the last node in $\spDag(u)$ and $x$ be the
  node immediately after $w$ in the path from $u$ to $v$.  Both $w$
  and $x$ are in attached sets and there is a path from $u$ to $w$ in
  $\cal R$ (from the previous paragraph) and an edge from $w$ to $x$ in $\cal R$
  (since $w$ is the creator node and $x$ is the first node of a future
  and we explicitly add this edge on \liref{arcCreatorFut}).  Therefore, there is a
  path from $u$ to $x$.  We can then induct on this path and keep
  moving forward until we get to the dag containing $v$.  
\end{proof}

Finally, we must make claims about attached successors of unattached sets.  We first show
a structural property of unattached sets. 

\begin{lemma}
For a node $u$, let $C_u$ be the closest completed series-parallel dag
which is a parallel composition and whose sink node has already
executed.    If $u$ belongs to an
unattached set with no attached successor, then all nodes of $C_u$
belong to this same unattached set.  
\lemlabel{closestUnattached}
\end{lemma}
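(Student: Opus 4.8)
The plan is to prove the claim as an invariant maintained throughout the depth-first eager execution, inducting on the \sync operations, since these are the only steps that union sets in $\dsnsp$ or assign an attached successor. The key tools will be the structural characterization of unattached sets from Lemma~\ref{lem:unattachedStruct} (each unattached set is a maximal series-parallel subdag with no incident non-SP edges), together with two monotonicity facts that are immediate from the code: an attached set is never later made unattached (unions only fold an unattached set into an attached one, and two attached sets are never unioned), and the attached successor of a set, once assigned on \liref{setAttSucc}, is never cleared. The base case is vacuous, since before any parallel composition's join has executed there is no completed $C_u$ for any node.

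First I would fix a node $u$ whose current set $U_u = \find{\dsnsp, u}$ is unattached with no attached successor, let $j$ be the join (sink) of $C_u$, and examine the \sync step at which $j$ executed, splitting into the three cases of \lirefs{syncBegin}{setAttSucc}. If both branches of $C_u$ were unattached (\lirefs{unattachedUnionStart}{unattachedUnionEnd}), the algorithm unions the fork's set, both branch subdags, and the new $j$ into a single set; by Lemma~\ref{lem:unattachedStruct} the two branch sets are exactly the completed branches (they are separate sets until $j$ joins them), so immediately after this step all of $C_u$ lies in one unattached set, and since sets only grow and $U_u$ is unattached now, $U_u$ still contains all of $C_u$. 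The remaining two cases must be shown to contradict the hypothesis. If both branches were attached (\lirefs{twoAttachedSync}{arcChildrenSink2}), then the fork is attachified and $j$ is made attached, so every node of $C_u$, including $u$, sits in an attached set; by monotonicity it would still be attached now, contradicting $U_u$ unattached. If exactly one branch was attached (\lirefs{oneAttachedSync}{setAttSucc}), then either $u$ is in the attached branch or is the fork/sink folded into an attached set on \liref{singleUnionSrc} or \liref{singleUnionSync} (hence attached, a contradiction), or $u$ lies in the unattached branch, whose set is precisely the one that receives an attached successor on \liref{setAttSucc}.

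The crux, and the step I expect to be the main obstacle, is closing this last sub-case: I must argue that the attached successor assigned to the unattached branch on \liref{setAttSucc} \emph{persists}, so that $U_u$ would still carry a non-null attached successor now, contradicting the hypothesis. The natural route is to show the unattached branch set is \emph{frozen} after $j$ executes, i.e., it is never again an operand of a union. This should follow from the series-parallel structure: the source and sink of that branch are interior to $C_u$, whereas every subsequent \sync touches only the fork, source-successors, branch-sinks, and join of some enclosing composition, which by the decomposition underlying Lemma~\ref{lem:unattachedStruct} and Lemma~\ref{lem:attachedStruct} are the source/sink of $C_u$ or of strictly larger compositions, never the now-buried branch boundary (note that $j$ itself was folded into an attached set, so no enclosing join can re-enter the one-attached or both-unattached branch-merging paths through this set). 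Once the branch set is shown frozen with a non-null attached successor, the no-attached-successor hypothesis rules out the one-attached case, leaving only the both-unattached case and completing the induction.
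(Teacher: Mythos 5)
Your proposal takes essentially the same route as the paper's own (two-sentence) proof: examine the \sync at which $C_u$'s join executed, split on the three cases of \lirefs{syncBegin}{setAttSucc}, and observe that the both-unattached case merges all of $C_u$ into one set while the other two cases contradict the hypothesis --- your write-up is in fact more careful than the paper's, which leaves the both-attached case and the persistence of the assigned attached successor entirely implicit. The one overstatement is your claim in the both-attached case that \emph{every} node of $C_u$ becomes attached (interior nodes of a branch can remain in unattached sets that merely carry attached successors); what actually closes that case is that $C_u$ being the \emph{closest} completed parallel composition forces $u$ to be the fork, the join, or an entire single-strand branch, each of which is made or already attached there.
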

\begin{proof}
  When a sink node of a parallel composition following an unattached
  set executes, the unattached set either gets an attached successor
  \liref{setAttSucc}, or all the nodes of the parallel composition
  are unioned into the same
  set~\lirefs{unattachedUnionStart}{unattachedUnionEnd}.
\end{proof}

The following
lemma claims that if a node $u$ has an attached successor, then there
is a path from $u$ to the last node of the attached successor.  

\newcommand{\attSuccLemma}{ Consider a node $u$ where
  $U_u = \find{\dsnsp, u}$ is unattached.  If $U_u$ has an attached
  successor $A$, then $u \prec_{SP} v$ where $v$ is the current sink
  of this attached successor $A$.  In addition, consider any node
  $w \not \in U_u$.  If $u \prec_\full w$ then $v \prec_\full w$.
  Finally, we have $u\prec w$ for any node $w$ added to $A$ after it
  becomes $U_u$'s attached successor. }
\begin{lemma}
\attSuccLemma
 \lemlabel{attSuccSink}
\end{lemma}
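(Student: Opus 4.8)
The plan is to prove the three assertions by induction on the execution, tracking how the attached successor $A$ is first installed and how it subsequently grows, and leaning on the structural facts already established for unattached sets (Lemma~\ref{lem:unattachedStruct}) and for prefix-complete attached sets (Lemmas~\ref{lem:attachedSuccPrecPr} and~\ref{lem:prefixAttachedSetStructure}). The base case is the single line \liref{setAttSucc}, where the attached successor of $\find{\dsnsp, t_u}$ is set to $\find{\dsnsp, j}$: at that instant $U_u$ is the unattached subdag whose sink is $t_u$, and $A$ is the attached set that has just absorbed the sync node $j$ on \liref{singleUnionSync}, so its sink is $j$. By Lemma~\ref{lem:attachedSuccPrecPr}, $A$ must be prefix-complete, a fact I use repeatedly below.

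For the first assertion, in the base case every node of $U_u$ precedes $t_u$ by the unattached-set structure (Lemma~\ref{lem:unattachedStruct}), and the join edge $t_u \to j$ is an SP edge since it joins two spawned subdags; hence $u \prec_{SP} t_u \prec_{SP} j = v$ for every $u \in U_u$. For the inductive step I observe that, while $U_u$ stays unattached, $A$ only ever grows forward within a single SP dag: the growing union at a sync (\liref{singleUnionSync} and its symmetric cases) appends the new sync node past the current sink, whereas \createF spawns fresh sets rather than extending $A$. Thus the sink advances from $v$ to some $v'$ with $v \prec_{SP} v'$, and transitivity gives $u \prec_{SP} v'$. The third assertion now follows from the same growth picture: any node $w$ added to $A$ after it becomes $U_u$'s attached successor is appended by forward growth and therefore satisfies $j \prec w$, and since $u \prec j$ from the base case of the first assertion, we get $u \prec w$.

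For the second assertion, I use that $U_u$ has a unique outgoing edge, emanating from its sink $t_u$ (Lemma~\ref{lem:unattachedStruct}); that edge is the join edge into $j \in A$. Hence any path witnessing $u \prec_\full w$ for a node $w \notin U_u$ must leave $U_u$ through $t_u \to j$ and continue from $j$ inside $A$. Since $A$ is prefix-complete, its only edges to nodes outside $A$ originate at its current sink $v$ (Lemma~\ref{lem:prefixAttachedSetStructure}); taking the last node of the path that still lies in $A$, that node must therefore be $v$, so $v$ lies on the path and $v \prec_\full w$. The remaining possibility, $w \in A$, is covered by the third assertion together with the observation that the nodes of $A$ already present when the attached successor was installed are parallel to $U_u$ (they are the opposite child of the fork $f$), so they are not reachable from $u$ at all.

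The main obstacle is the bookkeeping behind ``$A$ only grows forward within a single SP dag.'' I would need to walk through each union in the \sync handler (\lirefs{unattachedUnionStart}{setAttSucc}) and confirm that, as long as $U_u$ remains unattached, the set serving as its attached successor stays prefix-complete and never has its sink pulled across a \createF edge or merged backward, so that the sink is monotone under $\prec_{SP}$ and the exit argument of the second assertion keeps applying. Checking that $U_u$ genuinely remains unattached throughout the relevant interval (otherwise the hypothesis is vacuous), and that the sink's forward motion is truly an SP path rather than merely a $\prec_\full$ path, is where the care lies.
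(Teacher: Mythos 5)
Your proposal follows essentially the same route as the paper's proof: both induct on the execution from the installation point at \liref{setAttSucc}, use Lemma~\ref{lem:unattachedStruct} to force every path leaving $U_u$ through the single join edge $t_u \to j$, and rely on the chain-like forward growth of attached sets (the paper cites Lemma~\ref{lem:attachedStruct}; you cite Lemmas~\ref{lem:attachedSuccPrecPr} and~\ref{lem:prefixAttachedSetStructure}) to propagate the sink property inductively. If anything, your write-up is more explicit than the paper's own proof, which is terse and glosses over precisely the forward-growth bookkeeping you flag as the remaining obstacle.
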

\begin{proof}
  Attached successor is set on \liref{setAttSucc} where the attached
  successor always contains the sync node $j$ following the $U_u$.
  Since an unattached set is a complete series-parallel dag with no
  incident \createF or \getF edges (Lemma~\ref{lem:unattachedStruct}),
  any path from $u$ must go through this node $j$ from the property of
  series-parallel dags.  At this point $j$ is the current sink of $A$.
  Any nodes $w$ subsequently added to $A$ must have the property that
  $j \prec w$ from the construction of attached
  sets(Lemma~\ref{lem:attachedStruct}).  Therefore, the property
  remains true by induction. 
\end{proof}

The important subtlety here is that not all nodes of $U_u$'s attached
successor have to be after $u$ --- consider the set $Z$ in our example
with attached successor $L$.  Node 28 does not follow $Z$, but $L$
becomes $Z$'s attached successor only after node 33 is added to $L$.
(Notice that $L$ is not $Z$'s attached successor in
Figure~\ref{fig:partialExample} (right) since 33 has not yet executed.)

Now we can argue about the correctness of the query
$\proc{Query}(u,v)$ where both $u$ and $v$ may be parts of unattached
sets.  Say $u$ has an attached successor $A_1$ and $v$ has an attached
predecessor $A_2$ and there is no path from $u$ to $v$ that contains
only series-parallel edges (otherwise, the first part of the query
will give the correct answer), but there is a path $p$ from $u$ to $v$
containing \createF and \getF edges.  This path must go through
the last node of $A_1$ (Lemma~\ref{lem:attSuccSink}.  In addition, since there
are no incident non-SP edge between nodes in $A_2$ and $v$
(Lemma~\ref{lem:attPred}), this path must also go through $A_2$. Therefore, it
is sufficient to check the relationship between $A_1$ and $A_2$ to check the
relationship between $u$ and $v$.  In a similar manner, we can also show that
we get the correct answer when only one of $u$ and $v$ are in unattached sets.  

This final lemma handles the case where $u$ does not have an attached
successor.  This is where we utilize the condition that the program is
executing in depth-first eager order and that $v$ is the currently
executing node.  The intuition for this lemma is as follows: Since $u$ is in an
unattached set, no node in this unattached set has any outgoing non-SP
edges.  In addition, the nearest join after this set hasn't executed
since it doesn't have an attached successor.  Therefore, all no node
$v$ where $u \prec v$ can be executing right now since all such nodes
are in the same $P$ bag as $u$.  \footnote{We also use this fact when we are querying
  the $\dssp$ data structure since the correctness of
  Lemma~\ref{lem:dssp} also depends on eager execution.}

\newcommand{\noAttSuccLem}{ Consider a node $u$ where
  $U_u = \find{\dsnsp, u}$ is unattached and $U_u$ does not have an
  attached successor.  If $u$ is an $P$ bag, $u$ is in parallel with
  the currently executing node $v$.}

\begin{lemma}
\noAttSuccLem
\lemlabel{noAttSucc}
\end{lemma}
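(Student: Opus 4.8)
The plan is to prove the statement by contradiction, directly showing $u\parallel v$ under the hypotheses that $u$ lies in a $P$-bag and that $U_u=\find{\dsnsp,u}$ is unattached with no attached successor. One direction, $v\not\prec_\full u$, is immediate: $u$ has already executed while $v$ is the currently-executing node, and in a dag a node cannot precede one that already finished (its predecessors must run first), so there is no path $v\rightsquigarrow u$. All the work goes into establishing $u\not\prec_\full v$.

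First I would pin down the structure around $u$. By Lemma~\ref{lem:unattachedStruct}, the nodes of $U_u$ form a maximal series-parallel subdag $Q$ with no incident \createF or \getF edges and at most two incident arcs, one into its source and one out of its sink $s$. I would first rule out $v\in Q$: if $v\in Q$, then since $Q$ is series-parallel with no incident non-SP edges, any $u$-to-$v$ path stays inside $Q$ and uses only spawn/join/continue edges (no get edges), so Lemma~\ref{lem:dssp}(b) would place $u$ in an $S$-bag, contradicting the hypothesis. Hence $v\notin Q$, and every $u$-to-$v$ path must leave $Q$ through the unique outgoing arc at $s$ (and if $s$ has no outgoing arc yet, $u$ cannot reach anything outside $Q$ and we are already done).

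The crux, and the step I expect to be the main obstacle, is to exhibit a single not-yet-executed node that every $u$-to-$v$ path must cross. I would take $j$ to be the sync node closing the innermost parallel composition of which $Q$ is a branch. By the series-parallel structure---and because $Q$ has no incident non-SP edges, so a path can exit $Q$ only through $s$ along an SP edge---any path from a node of $Q$ to a node outside that enclosing composition must pass through $j$; even if $v$ sits in a sibling branch reachable only via a detour through other SP dags along get/create edges, that detour can begin only after the path has left the composition through $j$. I then claim $j$ has not executed. This is where the no-attached-successor hypothesis is used together with the \sync handling of Figure~\ref{code:maintainReach}: had $j$ executed, then in the both-unattached case $j$ would have been unioned into $U_u$ (\lirefs{unattachedUnionStart}{unattachedUnionEnd}), contradicting that $s$ is the sink of $Q$, while in the case where a sibling is attached, $U_u$ would have been assigned an attached successor at \liref{setAttSucc}, contradicting the hypothesis. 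This is essentially the content of Lemma~\ref{lem:closestUnattached}, which I would invoke to make precise that the closest enclosing parallel composition with an executed sink lies entirely inside $U_u$.

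Finally I would close the contradiction using eager execution: from $u\prec_\full v$ and the previous step, every $u$-to-$v$ path passes through $j$, so $j\prec_\full v$. In depth-first eager order all of a node's predecessors must have executed before it, so $j$ would have had to execute strictly before the currently-executing $v$; but we just showed $j$ has not executed. This contradiction yields $u\not\prec_\full v$, and combined with $v\not\prec_\full u$ gives $u\parallel v$. The delicate points to get right are the precise identification of $j$ and the claim that every exit path crosses it---a clean series-parallel argument complicated by the presence of non-SP edges---together with matching the three \sync branches to the statement that the absence of an attached successor certifies that $j$ is unexecuted.
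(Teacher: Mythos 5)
Your proof is correct, and it rests on the same structural backbone as the paper's: both arguments come down to the facts that $U_u$ has no outgoing non-SP edges (Lemma~\ref{lem:unattachedStruct}) and that the join closing the enclosing parallel composition has not yet executed, which is exactly the content of Lemma~\ref{lem:closestUnattached} --- indeed your case split over the \sync branches (\lirefs{unattachedUnionStart}{unattachedUnionEnd} versus \liref{setAttSucc}) is precisely how the paper proves that lemma. Where you genuinely diverge is the finishing move. The paper never isolates a cut vertex: it argues (via operating functions and Lemma~\ref{lem:closestUnattached}) that every executed node sequentially after $u$ is trapped in $U_u$, so any path leaving $u$ at this point is free of \getF edges, and then invokes Lemma~\ref{lem:dssp}(b) --- such a path would force $u$ into an $S$ bag, contradicting the $P$-bag hypothesis. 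You instead exhibit the unexecuted sync node $j$ as a vertex that every $u$-to-$v$ path must cross, and derive the contradiction from the topological-order property of depth-first eager execution (every strict predecessor of the currently executing node has already executed), using Lemma~\ref{lem:dssp}(b) only to dispose of the subcase $v \in U_u$. Your route makes the role of eager execution explicit and local, bypasses the operating-function machinery entirely, and spells out the one direction the paper leaves implicit ($v \not\prec_\full u$); its cost is that you must carry out the delicate case analysis showing the single exit arc of $U_u$'s sink is a join edge into $j$ (which you correctly flag as the crux), whereas the paper's route is shorter because it reuses the bag invariant of Lemma~\ref{lem:dssp} wholesale and never needs to name $j$ at all.
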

\begin{proof}
  Say $u$ was a strand of function $F$ and its operating function is
  $G$ ($F$ and $G$ could be the same).  Therefore, $u$ is in $G$'s $P$
  bag and $G$ is not active, but has returned.  Moreover, $G$ has
  not synced with its parent function.  In addition $G$'s last strand $w$
  is in the same unattached set as $u$
  (Lemma~\ref{lem:closestUnattached}).  Since no node of this
  unattached set has outgoing non-SP edges, no node that is
  sequentially after $u$ has any incident non-SP edges.  Therefore,
  from Lemma~\ref{lem:dssp}, there can not be any \getF edges on any
  path after $u$ at this point.  Therefore $u$ is in parallel with
  $v$.  
\end{proof}

We can now prove the main theorem by
combining
Lemmas~\ref{lem:dssp},~\ref{lem:RCorrect},~\ref{lem:attPred},~\ref{lem:attSuccSink},
and~\ref{lem:noAttSucc}.

\begin{reptheorem}{thm:unstructCorrect}
\correctnessthmgeneral
\end{reptheorem}

If the path from $u$ to $v$ has no \getF edge, then
Lemmas~\ref{lem:dssp} applies.  In addition,
Lemma~\ref{lem:RCorrect} argues that the second part of the query
(\lirefs{part2Start}{part2End}) answers all questions correctly
between two nodes in attached sets. Finally,
Lemmas~\ref{lem:attPred},~\ref{lem:attSuccSink},
and~\ref{lem:noAttSucc} show that using attached predecessors and
successors for nodes in unattached sets gives the correct answer when
the first part of the query (using $\dssp$) returns false.

%
%
%
%
%






\section{Artifact Appendix}

\subsection{Abstract}

This artifact contains source code for the compiler, runtime system,
and benchmarks used in the PPoPP 2019 paper {\bfseries Efficient Race
  Detection with Futures}, plus shell scripts that compile everything
and run the benchmarks. The hardware requirements are any modern
multicore CPU, while the software requirements include a relatively
recent Linux distribution (tested on Ubuntu 16.04), the
\texttt{datamash} package, and the GNU \texttt{gold} linker. To
validate the results, run the test scripts and compare the results to
figures 6, 7, and 8 in the paper.

\subsection{Artifact check-list (meta-information)}

\newcommand{\rtm}{\textsuperscript{\textregistered}}


{\small
\begin{itemize}
  \item {\bf Program: } C/C++ code.
  \item {\bf Compilation: } Modified fork of \texttt{clang++} with
    \texttt{-O3 -flto} flags. To fully reproduce the reproduce the
    results, we recommend installing the GNU {\texttt gold} linker as
    \texttt{ld}.
  \item {\bf Data set: } The \texttt{dedup} benchmark uses publicly
    available data sets. Scripts in the repository will download and
    setup all data sets.
  \item {\bf Run-time environment: } Tested on Ubuntu 16.04, but
    expected to work on any modern Linux. 
  \item {\bf Hardware: } Any modern multicore CPU; tested on an
    Intel\rtm Xeon\rtm \ CPU E5-2665 with hyperthreading
    disabled. Enabling hyperthreading may change results.
  \item {\bf Metrics: } Runtime (in seconds).
  \item {\bf Output: } Runtime and standard deviation for all
    benchmarks, each run with 12 configurations which determine what
    kind of futures and which race detection algorithm are used and
    what level of instrumentation/race detection is turned on ---
    baseline, reachability only, reachability + memory
    instrumentation, and full race detection.
  \item {\bf How much disk space required (approximately)?: } 13GB.
  \item {\bf How much time is needed to prepare workflow
    (approximately)?: } 1.5 hours.
  \item {\bf How much time is needed to complete experiments (approximately)?: 4 hours.}
  \item {\bf Publicly available?: } Yes
  \item {\bf Code/data licenses (if publicly available)?: } MIT.
\end{itemize}

\subsection{Description}

\subsubsection{How delivered} 

The project is available on Gitlab at
\url{https://gitlab.com/wustl-pctg-pub/futurerd2.git}.

\subsubsection{Hardware dependencies}

Any modern multicore CPU. It was tested on an Intel\rtm Xeon\rtm CPU
E5-2665.

\subsubsection{Software dependencies}

The project was tested on Ubuntu 16.04, but it is expected to run
correctly in other Linux distributions. To fully reproduce the
results, link-time optimization should be used (\texttt{-flto}) with
the GNU \texttt{gold} linker installed as \texttt{ld}. On our system
we make \texttt{/usr/bin/ld} a shell script that forwards its
arguments to \texttt{gold} whenever the \texttt{USE\_GOLD} environment
variable is set and the original \texttt{ld} otherwise.

The benchmark script requires GNU {\texttt datamash}, which can be
installed using {\texttt apt-get} in Ubuntu 14+ or can be obtained
from \url{https://www.gnu.org/software/datamash}. Bash 4+ should be
used to run the scripts.

\subsubsection{Data sets}

All required datasets are downloaded by scripts included in the
distribution.

\subsection{Installation}

The {\texttt setup.sh} script in the project repository will build our
modified compiler, the modified Cilk Plus runtime, and all the
benchmarks.

\subsection{Experiment workflow}

\begin{itemize}
\item Clone the source code to your machine:

  \lstset{basicstyle=\footnotesize, basewidth=0.5em}
  \begin{lstlisting}
$ git clone
>  https://gitlab.com/wustl-pctg-pub/futurerd2.git
$ cd futurerd2    
  \end{lstlisting}
\item Install GNU gold as your linker. Modern versions of the GNU
  {\texttt binutils} package include {\texttt gold}, though for our
  purposes the system {\texttt ld} should point to {\texttt
    gold}. Installing {\texttt gold} also installs a header called
  {\texttt plugin-api.h}, usually in either {\texttt /usr/include} or
  {\texttt /usr/local/include}. Find this file and replace the
  {\texttt BIN\-UTILS\_PLU\-GIN\_DIR} variable in {\texttt build-llvm-linux.sh}
  with this path.
\item Install other software dependencies. In Ubuntu 14+, this is as simple as

  \begin{lstlisting}
$ sudo apt-get install datamash zlib1g zlib1g-dev openssl
  \end{lstlisting} 

  and making sure you have Bash 4+.

\item Build the necessary components. The {\tt setup.sh} script will
  build the compiler and download and unpack the necessary data sets.

\item Run the benchmark script (\texttt{bench/run.sh}). The script
  compiles the runtime library and race detection library, and
  compiles and runs each configuration of each benchmark. Tuning
  parameters can be found in \texttt{bench/time.sh} (which the
  \texttt{run.sh} script uses) --- feel free to examine the script and
  change parameters, such as the number of iterations for each
  benchmark.
  \begin{lstlisting}
$ cd bench
$ ./run.sh    
  \end{lstlisting}

\item Observe the results. Once completed, full results can be found
  in the files \texttt{times.ss.csv} (benchmarks used \MultiBags race
  detection algorithm with structured futures), \texttt{times.ns.csv}
  (benchmarks used \MultiBagsPlus algorithm with structured futures),
  and \texttt{times.nn.csv} (benchmarks used \MultiBagsPlus algorithm
  with general futures).

\end{itemize}

\subsection{Evaluation and expected result} 

Although absolute times will differ on your machine, you should see
similar relative overhead for the benchmarks. Compare the results to
figures 6, 7, and 8 in the paper.



\subsection{Notes}

Please send feedback or file issues at our gitlab repository
(\url{https://gitlab.com/wustl-pctg-pub/futurerd2}).

\subsection{Methodology}

Submission, reviewing and badging methodology:

\begin{itemize}
  \item \url{http://cTuning.org/ae/submission-20180713.html}
  \item \url{http://cTuning.org/ae/reviewing-20180713.html}
  \item \url{https://www.acm.org/publications/policies/artifact-review-badging}
\end{itemize}



\end{document}